\newif\ifcomments\commentstrue

\newif\iflong\longfalse

\newif\iftwocolumns\twocolumnsfalse

\documentclass{lmcs} 

\keywords{Multiparty session types, deadlock-freedom, synchronizability, message-sequence-charts}

\usepackage{hyperref}

\usepackage{macros/macros}
\usepackage{macros/laetitia_macro}

\usepackage{thmtools} 
\usepackage{thm-restate}
\usepackage{wrapfig}

\begin{document}

\title{On the Impact of the Communication Model on Realisability}

\author[C.~Di Giusto]{Cinzia {Di Giusto}\lmcsorcid{0000-0003-1563-6581}}[a]

\author[E.~Lozes]{Etienne Lozes\lmcsorcid{0000-0001-8505-585X}}

 \author[P.~Urso]{Pascal Urso}

\address{Université Côte d’Azur, CNRS, I3S, France}	
\email{cinzia.di-giusto, etienne.lozes, pascal.urso@univ-cotedazr.fr}  

\begin{abstract}

%
%
%
%

Multiparty Session Types (MPST) provide a type-theoretic foundation for specifying and verifying communication protocols in distributed systems. MPST rely on the notion of global type which specifies the global behaviour and local types, which are the projections of the global behaviour onto each local participant. A central notion in MPST is realisability — whether local implementations derived from a global specification correctly realise the intended protocol under a given communication model. While realisability has been extensively studied under peer-to-peer semantics, it remains poorly understood in alternative communication models such as bag-based, causally ordered, or synchronous communications.

In this paper, we develop a unified framework for reasoning about realisability and subtyping across a spectrum of communication models. We show that the communication model does not impact the notion of subtyping, but that it impacts the notion of realisability. We introduce several decision procedures for subtyping checking and realisability checking with complexities ranging from NLOGSPACE to EXPSPACE depending on the assumptions made on the global types, in particular depending on their complementability and the size of a given complement. 
\end{abstract}

\maketitle


\section{Introduction}

The design and verification of communication protocols in concurrent and distributed systems is a central concern in programming languages. A prominent formalism is Multiparty Session Types (MPST)—a type-theoretic framework that enables specification and static verification of structured interactions among multiple communicating entities~\cite{DBLP:conf/concur/VasconcelosH93,HondaYC08,10.1145/2873052,DBLP:conf/icdcit/YoshidaG20}. MPST  define a global communication protocol that can be projected into local types for each participant, ensuring communication safety and deadlock-freedom.

In this work, we focus on the \emph{realisability problem}, a foundational question in the theory of session types,  that concerns the soundness of the projection mechanism:
the realisability problem asks whether the local types obtained by projecting a global communication specification can be safely used for restricting processes interactions to the original global protocol. 

This problem is central to the goal of \emph{correct-by-construction} protocol design, as it ensures that local implementations derived from a specification do not introduce communication mismatches, deadlocks, or unintended behaviours. In the literature on MPST, this challenge has appeared under various names, including \emph{implementability} and \emph{projectability}, reflecting different  on the same core issue. In this work, we adopt the term \emph{realisability}, following the terminology introduced by Alur et al.~\cite{DBLP:journals/tcs/AlurEY05}, to emphasise the connection with automata-theoretic approaches. 

The goal of this paper is to tackle the realisability problem for a generalised class of global types, where protocols are specified as \emph{sets of message sequence charts (MSCs)} rather than as a single linearised control flow. These sets (in the style of Alur's work) are specified via MSC graphs (deterministic finite set automata over the language of synchronous exchanges of messages). This representation naturally captures concurrency, and partial orderings of events—features common in real-world protocols. Our framework is parametric in the \emph{communication model}, allowing us to reason uniformly across both synchronous and asynchronous settings, from the fully asynchronous model (bag), passing through FIFO models (peer-to-peer, mailbox), to fully synchronous one. Moreover, we integrate a form of semantic \emph{subtyping}, enabling the refinement of global behaviours.

Our main contribution is the identification of a family of communication models, that we call causally-closed for which we can reduce realisability to realisability in the synchronous model, provided that the behaviours of the projection are quasi-synchronous, orphan and deadlock free. Intuitively, a communication model $\acommunicationmodel$ is causally-closed if for each execution in $\acommunicationmodel$ all causally equivalent (up-to permutations of causally independent actions) executions are also contained in $\acommunicationmodel$. Moreover, an execution is quasi-synchronous if it is causally equivalent to an execution where all sent messages can be immediately received.
Now, it is known that realisability in the synchronous model is generally undecidable. 
A key insight of our work is the connection between \emph{realisability} and a property we call \emph{complementability}: the ability to construct a global type that captures all behaviours that are complementary (i.e., not allowed) with respect to a given global specification. 
It turns out that if the complement of a global type is given (or can be computed) then realisability in the synchronous model is decidable.

In the rest of the introduction we give three examples to motivate our choices and results.

\subsection{ Global Types as Automata for Real-World Protocols}
As mentioned above, we adopt the view of \emph{global types as automata} where transitions capture exchanges between participants. This automaton-based perspective supports precise reasoning about the evolution of protocol states, accommodating branching, concurrency, and synchronisation.
It is more general than the standard process-algebra approach as it admits: 1) mixed choice 2) non sender-driven choice 3) some form of parallel behaviour. 

Take for instance a simplified variant of the MQTT publish/subscribe protocol. Two clients, $c_1$ and $c_2$, communicate through a multi-threaded broker. 
Each client may independently send a publish ($p$) message to their corresponding broker thread ($b$ or $b'$), which forwards to the other client which itself responds with an acknowledgment ($a$). 
When a client has finished to publish it send an end message to the broker. 
When all clients have finished, the broker sends a final acknowledgement and the protocol terminates. 
This global behaviour is depicted in Figure \ref{fig:mqtt-globaltype}. 
For readability reasons, we borrow for this example a BPMN inspired notation where nodes labelled with $\|$ to represent thread creation and thread joins, whereas $+$ nodes represent choices or merge of distinct branches. A similar notation has been used in~\cite{DBLP:conf/popl/LangeTY15}. We can compile this formalism to finite state automata (when the Petri net underlying the BPMN-like process is 1-safe, or more generally bounded). We illustrate the compilation to a finite state automaton on Figure~\ref{fig:mqtt-subtype}.

\begin{figure}
    \begin{tikzpicture}[>=stealth, node distance=2cm, auto]
  \node[noeud, initial above, initial text={}, initial distance=3mm, minimum size=1.5em] (q0) {$\|$};

  \node[noeud, minimum size=1.5em] (q1) [below left=1em and 4em of q0] {$+$};
  \node[bloc, minimum size=1.5em] (q3) [below left =1.5em and 4em of q1] {$\begin{array}{c}\gtlabel{c_1}{b}{p};\\\gtlabel{b}{c_2}{p};\\\gtlabel{c_2}{b}{a}\end{array}$};
  \node[bloc, minimum size=1.5em] (q4) [below =1.5em of q1] {$\gtlabel{c_1}{b}{end}$};

  \node[noeud, minimum size=1.5em] (q2) [below right=1em and 4em of q0] {$+$};
  \node[bloc, minimum size=1.5em] (q5) [below right =1.5em and 4em of q2] {$\begin{array}{c}\gtlabel{c_2}{b'}{p};\\\gtlabel{b'}{c_1}{p};\\\gtlabel{c_1}{b'}{a}\end{array}$};
  \node[bloc, minimum size=1.5em] (q6) [below =1.5em of q2] {$\gtlabel{c_2}{b'}{end}$};
            
  \node[noeud, minimum size=1.5em] (q7) [below=7em  of q0] {$\|$};
  \node[bloc, minimum size=1.5em] (q8a) [below=3.5em  of q4] {$\gtlabel{b}{c_1}{end}$};
  \node[bloc, minimum size=1.5em] (q8b) [below=3.5em  of q6] {$\gtlabel{b'}{c_2}{end}$};
  \node[noeud, accepting, minimum size=1.5em] (q9) [below=3em of q7] {};

  \draw[->] (q0) -| (q1);
  \draw[->] (q0) -| (q2);
  
  \draw[->] (q1.south west) -| (q3.north);
  \draw[->] (q3.south) --++ (0,-1.5em) --++ (-3em,0) |- (q1.west);

  \draw[->] (q2.south east) -| (q5.north);
  \draw[->] (q5.south) --++ (0,-1.5em) --++ (3em,0) |-   (q2.east);
  
  \draw[->] (q1) -- (q4.north);
  \draw[->] (q4) -- (q7.north west);
  \draw[->] (q2) -- (q6.north);
  \draw[->] (q6) -- (q7.north east);
  \draw[->] (q7.west) -| (q8a.north);
  \draw[->] (q7.east) -| (q8b.north);
  \draw[->] (q8a) to node[right] {} (q9);
  \draw[->] (q8b) to node[right] {} (q9);
\end{tikzpicture}
    \caption{The MQTT global type}\label{fig:mqtt-globaltype}
\end{figure}
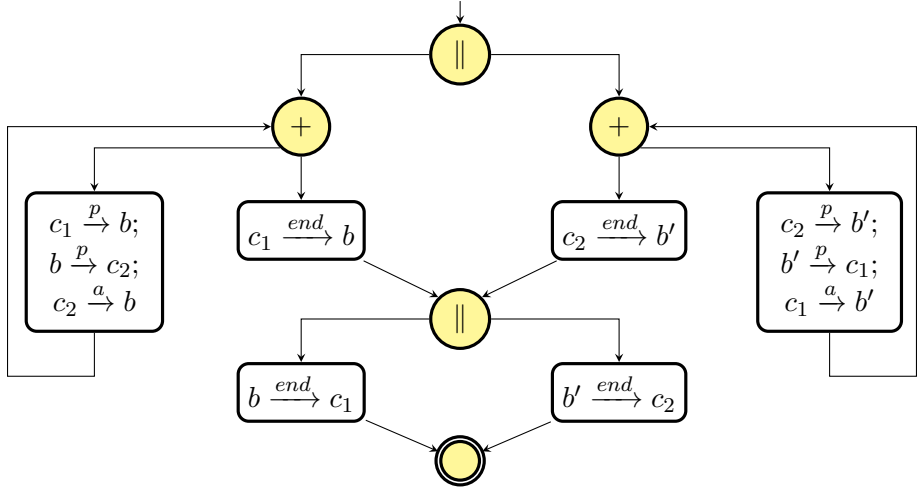

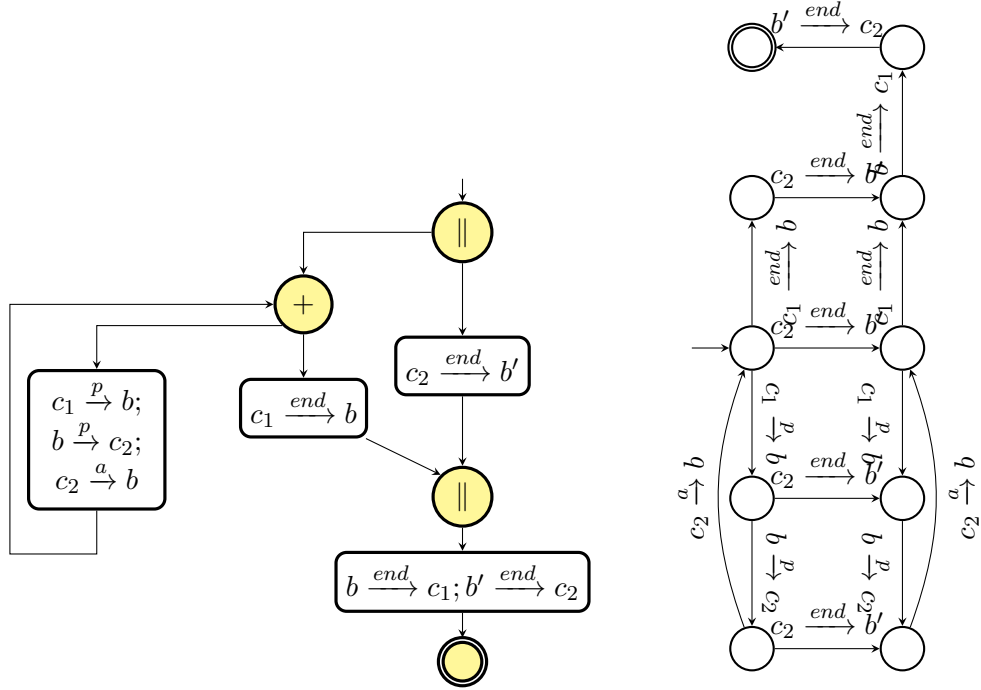
\begin{figure}
    \begin{tikzpicture}[>=stealth, node distance=2cm, auto]

  \node[noeud, initial above, initial text={}, initial distance=3mm, minimum size=1.5em] (q0) {$\|$};

  \node[noeud, minimum size=1.5em] (q1) [below left=1em and 4em of q0] {$+$};
  \node[bloc, minimum size=1.5em] (q3) [below left =1.5em and 4em of q1] {$\begin{array}{c}\gtlabel{c_1}{b}{p};\\\gtlabel{b}{c_2}{p};\\\gtlabel{c_2}{b}{a}\end{array}$};
  \node[bloc, minimum size=1.5em] (q4) [below =1.5em of q1] {$\gtlabel{c_1}{b}{end}$};

  \node[bloc, minimum size=1.5em] (q6) [below =2.5em of q0] {$\gtlabel{c_2}{b'}{end}$};
            
  \node[noeud, minimum size=1.5em] (q7) [below=7em  of q0] {$\|$};
  \node[bloc, minimum size=1.5em] (q8) [below=.8em  of q7] {$\gtlabel{b}{c_1}{end};\gtlabel{b'}{c_2}{end}$};
  \node[noeud, accepting, minimum size=1.5em] (q9) [below=3.8em of q7] {};

  \draw[->] (q0) -| (q1);
  
  \draw[->] (q1.south west) -| (q3.north);
  \draw[->] (q3.south) --++ (0,-1.5em) --++ (-3em,0) |- (q1.west);

  
  \draw[->] (q1) -- (q4.north);
  \draw[->] (q4) -- (q7.north west);
  \draw[->] (q0) -- (q6.north);
  \draw[->] (q6) -- (q7.north);
  \draw[->] (q7) -- (q8.north);
  \draw[->] (q8) -- (q9);

    \begin{scope}[xshift=10em,yshift=-4em]
      \tikzstyle{every state}=[minimum size=1.5em,thick]
      \node[state,initial,initial text={}] (q00) at (0,0) {};
      \node[state] (q01) at (2,0) {};
      \node[state] (q10) at (0,-2) {};
      \node[state] (q11) at (2,-2) {};
      \node[state] (q20) at (0,-4) {};
      \node[state] (q21) at (2,-4) {};
      \node[state] (q30) at (0,2) {};
      \node[state] (q31) at (2,2) {};
      \node[state,accepting] (q40) at (0,4) {};
      \node[state] (q41) at (2,4) {};
      \draw[->] (q00) edge node[above,sloped] {$\gtlabel{c_1}{b}{p}$} (q10);
      \draw[->] (q00) edge node[above] {$\gtlabel{c_2}{b'}{end}$} (q01); 
      \draw[->] (q10) edge node[above,sloped] {$\gtlabel{c_2}{b'}{end}$} (q11);
      \draw[->] (q01) edge node[below,sloped] {$\gtlabel{c_1}{b}{p}$} (q11);
      \draw[->] (q10) edge node[above,sloped] {$\gtlabel{b}{c_2}{p}$} (q20);
      \draw[->] (q20) edge node[above,sloped] {$\gtlabel{c_2}{b'}{end}$} (q21);
      \draw[->] (q11) edge node[below,sloped] {$\gtlabel{b}{c_2}{p}$} (q21);
      \draw[->] (q30) edge node[above,sloped] {$\gtlabel{c_2}{b'}{end}$} (q31);
      \draw[->,bend left=20] (q20) edge node[above,sloped] {$\gtlabel{c_2}{b}{a}$} (q00);
      \draw[->,bend right=20] (q21) edge node[below,sloped] {$\gtlabel{c_2}{b}{a}$} (q01);
      \draw[->] (q31) edge node[above,sloped] {$\gtlabel{b}{c_1}{end}$} (q41);
      \draw[->] (q41) edge node[above,sloped] {$\gtlabel{b'}{c_2}{end}$} (q40);
      \draw[->] (q00) edge node[below,sloped] {$\gtlabel{c_1}{b}{end}$} (q30);
      \draw[->] (q01) edge node[above,sloped] {$\gtlabel{c_1}{b}{end}$} (q31);
    \end{scope}
\end{tikzpicture}
    \caption{A subtype of the MQTT global type and its representation as an automaton.}\label{fig:mqtt-subtype}
\end{figure}

Crucially, the choice of who sends first is not centralised—either client may initiate communication at any time, and the protocol loops accordingly. More precisely, the global type is not sender-driven, as for instance the initial state of the automaton in Figure~\ref{fig:mqtt-subtype} offers a choice between $\gtlabel{c_1}{b}{p}$ and $\gtlabel{c_2}{b'}{p}$. 
This is a common pattern in real-world protocols, where multiple participants can initiate communication independently\footnote{Another example, that we will not detail here, could come from the WebSocket protocol used by
Internet clients for low-latency, full-duplex communication with a server.
After the initial HTTP \emph{Upgrade} handshake, the channel becomes a persistent bidirectional byte stream.  
Either endpoint, the client or the server, may send a frame at any moment. 
The concurrency is even stronger than in MQTT: messages cross in flight, and neither side knows who
will speak next.}.
Despite not being \emph{sender-driven}, this global type is \emph{realisable} in our setting. 

\subsection{Beyond P2P Semantics: the Need for General Communication Models}
Historically,  research on asynchronous MPST has predominantly focused on peer-to-peer communications. However, in practice, many systems are built on top of different communication models, such as unordered (or bag-based), mailbox-based, causally ordered, or bus-based communications. Moreover, as mentioned above the question of implementability has been generally confined to adding some constraints on the syntax of global types (e.g., sender-driven choice that admits only non-mixed choice where sends are performed from the same participant). Such constraints respond to general properties that may seem arbitrary and loosely connected to the two guarantees mentioned above: session conformance and deadlock freedom. 

In this paper, we have a two-fold objective; we aim at:
\begin{enumerate}
\item introducing a theory that is parametric on the communication model,
\item weakening the constraints on global types that guarantee implementability.  
\end{enumerate}

The communication models mentioned before form a hierarchy (see Figure \ref{fig:communication-models-hierarchy}) from more relaxed to more constrained communication models, either in terms
of executions~\cite{DBLP:conf/fm/ChevrouH0Q19}, or in terms of message sequence 
charts~\cite{DBLP:journals/pacmpl/GiustoFLL23}.
Several interesting properties turn out to be monotonic in the communication model in the following sense: if a property is satisfied in a communication model, then it is preserved for more relaxed (resp. more constrained) ones. 
This  is true, for instance, when checking for reachability of a configuration,
for unspecified receptions (the first message in a queue cannot be received by its
destination process), orphan messages (i.e., messages sent by a peer but never received), etc.
Unfortunately, deadlock-freedom, and therefore deadlock-free realisability, are not monotonic.
To better understand this, take the following  two examples. First, consider an example of a global type (in standard MPST syntax) that is realisable in the peer-to-peer model but not in the mailbox one:  $$\gt_1 = \gtlabel{p}{q}{\msg_1} ;\  \gtlabel{r}{q}{\msg_2} ;\  \mathsf{end}.$$
 This global type is realisable in the peer-to-peer model, 
because the projection $p\parallel q \parallel r= !\msg_1 \parallel ?\msg_1.?\msg_2\parallel !\msg_2$ is deadlock-free and session conformant. However, this global type is not realisable in the mailbox model, where all incoming messages of process $q$ are stored in a unique FIFO queue,
because $\msg_2$ could be sent and hence stored in the queue before $\msg_1$  and cause a deadlock.
Second, consider this example where the converse happens: 
$$ 
\begin{array}{lll}
\gt_2 & = & \gtlabel{p}{q}{\msg_1} ;\  \textcolor{red}{\gtlabel{p}{r}{\msg_2}} ;\  \gtlabel{p}{q}{\msg_3} 
        ;\  \textcolor{blue}{\gtlabel{q}{r}{\msg_4}} ;\ \mathsf{end} 
\\
& + & \gtlabel{p}{q}{\msg_1'} ;\  \textcolor{blue}{\gtlabel{q}{r}{\msg_4}} ;\  \gtlabel{q}{p}{\msg_3'} 
        ;\  \textcolor{red}{\gtlabel{p}{r}{\msg_2}} ;\ \mathsf{end}  
\end{array}
$$
This global type $\gt_2$ is depicted in Figure~\ref{fig:gt2}, it features two possible
scenarios, starting either with $\msg_1$ or $\msg_1'$, (that  are depicted on the left). 
It is not realisable in the peer-to-peer model, because the projection 
$p\parallel q \parallel r$ also admits the interaction where  $p$ and $q$ follow the first scenario, but $r$ follows the second scenario, resulting in a non synchronisable interaction (also depicted in Figure~\ref{fig:gt2}). However, this global type is realisable in the mailbox model, because the messages sent to $r$ are stored in a unique FIFO queue, and their order in the queue, is sufficient for  $r$ to  not confuse the two scenarios.
Moreover, this global type, as well as the previous one, are realisable in the synchronous communication model, where ambiguity is resolved by the message ordering.   
\begin{figure}
    \begin{tikzpicture}[scale=0.7]
        \begin{scope}
            \newproc{0}{p}{-2.7};
            \newproc{2}{q}{-2.7};
            \newproc{4}{r}{-2.7};

            \newmsgm{0}{2}{-0.3}{-0.3}{1}{0.5}{black};
            \newmsgm{0}{4}{-1.0}{-1.0}{2}{0.25}{red};
            \newmsgm{0}{2}{-1.7}{-1.7}{3}{0.5}{black};
            \newmsgm{2}{4}{-2.4}{-2.4}{4}{0.5}{blue};
        \end{scope}

        \begin{scope}[xshift=7cm]
            \newproc{0}{p}{-2.7};
            \newproc{2}{q}{-2.7};
            \newproc{4}{r}{-2.7};

            \newmsgm{0}{2}{-0.3}{-0.3}{1'}{0.5}{black};
            \newmsgm{2}{4}{-1.0}{-1.0}{4}{0.5}{blue};
            \newmsgm{2}{0}{-1.7}{-1.7}{3'}{0.5}{black};
            \newmsgm{0}{4}{-2.4}{-2.4}{2}{0.25}{red};
        \end{scope}

        \begin{scope}[xshift=14cm]
            \newproc{0}{p}{-2.7};
            \newproc{2}{q}{-2.7};
            \newproc{4}{r}{-2.7};

            \newmsgm{0}{2}{-0.3}{-0.3}{1}{0.5}{black};
            \newmsgm{0}{4}{-1.0}{-2.4}{2}{0.25}{red};
            \draw[>=latex,->] (0,-1.45) -- node [below, pos = 0.25] {$\msg_{3}$} (2,-1.45);
            \newmsgm{2}{4}{-1.95}{-1.95}{4}{0.5}{blue};
        \end{scope}

    \end{tikzpicture}
    \caption{The two interactions specified by $\gt_2$, and another unspecified interaction that happens in the projected system in the p2p semantics}\label{fig:gt2}
\end{figure}
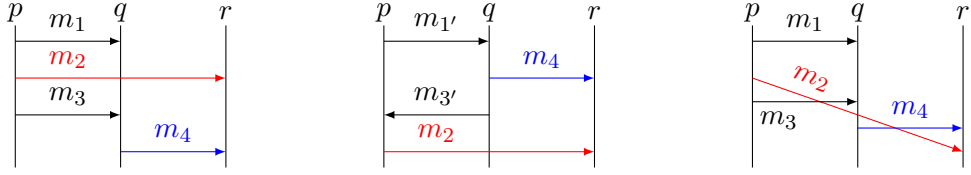

These two examples show that realisability is not monotonic,
and suggest that every new communication model should prompt the development of a new ad hoc theory of realisable global types, with its own syntactic conditions. 

%


\subsection{Subtyping: Protocol Flexibility and Safe Refinement}
While exact protocol descriptions are useful for specification, real systems often require adaptability—for example, to accommodate extensions, fallback behaviours, or optional features. This motivates the introduction of \emph{subtyping}: a mechanism for safely refining global types without compromising realisability.
Returning to the MQTT example, suppose we define a \emph{subtype} of the protocol where only one client is allowed to publish, see Figure~\ref{fig:mqtt-subtype}.

In our setting, we give a semantic definition of subtyping, based on the idea that a global type
$\gt$ denotes a set of executions $\executionsofcfsms{\gt}{\acommunicationmodel}$ that is parameterised by a communication model $\acommunicationmodel$. We therefore say that $\gt$ is a subtype of $\gt'$ in a communication model $\acommunicationmodel$ if 
$\executionsofcfsms{\gt}{\acommunicationmodel}\subseteq \executionsofcfsms{\gt'}{\acommunicationmodel}$. 
However, it turns out that the communication model does not impacts our notion of subtyping essentially because global types also admit
    a MSC-based semantics that is agnostic of the communication model and that defines the same semantic subtyping relation.

\subsection{Contributions}
Summing up, we make the following contributions.
\begin{itemize}
    \item Global type subtyping is a notion that is independent of the communication model (Theorem~\ref{thm:subtyping-agnostic-communication-model}).
    \item As mentioned above, deadlock-free realisability is not monotonic in the communication model; however, if a global type is deadlock-free realisable in a communication model that contains all synchronous executions, then this global is also realisable in the synchronous model (Theorem~\ref{thm:pp-realizability-implies-synch-realizability}).
    \item It is decidable to verify whether $\gt$ is a subtype of $\gt'$ when $\gt$ and $\gt'$ are realisable (Corollary~\ref{cor:subtyping-decidability}), and more generally when $\gt'$ is complementable  together with few extra assumptions (Theorems~\ref{thm:subtyping-decidability} and \ref{thm:subtyping-decidability-2}).
    \item We show that realisability implies complementability (Theorem~\ref{thm:decidability-of-implementability-in-synch}), and that  existing frameworks in which MPST are realisable by construction hide a complementation procedure (Theorem~\ref{thm:sender-driven-choice-complementation}). 
    \item We show that it is decidable to veriify if  $\gt$ is realisable in a communication model $\acommunicationmodel$ assuming that a complement of $\gt$ is provided. The problem is decidable in PSPACE for the synchronous communication model, and in EXPSPACE for $\ppmodel$, $\bagmodel$, $\causalmodel$, and more generally for any causally closed, regular communication model (Theorem~\ref{thm:decidability-of-implementability-in-p2p})
\end{itemize}


\subsubsection*{Outline.} The paper is organised as follows: Section~\ref{sec:preliminaries} introduces the necessary background on executions, Message Sequence Charts, communication models and communicating finite state machines. 
MPST, realisability and complementability are introduced in Section~\ref{sec:global_types}. Finally Section \ref{sec:decidability}  shows  our results on the reduction of realisability to the synchronous communication model based on the notion of effectively complementable global types.  
Section \ref{sec:concl} concludes with some final remarks and discusses related works.

This paper extends and revise the work in \cite{DGLP25}

	\section{Preliminaries}\label{sec:preliminaries}


We assume a finite set of \emph{processes}, sometimes also called participants $\Procs=\{p,q,\ldots\}$ and a finite set of messages (labels) 
$\Msg=\{\msg,\ldots\}$.
We consider two kinds of actions:  \emph{send actions} that are  of the form 
$\send{p}{q}{\msg}$ and are executed by  process $p$  sending message  $\msg$ to  $q$;
 \emph{receive actions} that are of the form $\recv{p}{q}{\msg}$ and are executed by  $q$ receiving  $\msg$ from  $p$.
We write $\Act$ for the finite set $\Procs\times\Procs\times\{!,?\}\times\Msg$ of all actions, and $\Actp$ for the set of actions 
that can be executed by $p$ (i.e., $\send{p}{q}{\msg}$ or $\recv{q}{p}{\msg}$). 
We omit processes when they are clear from the context and simply write $!\msg$ or $?\msg$ for a send or receive action, respectively.

An \emph{event} $\event$ of a sequence of actions $w\in \Act^*$, is an index $i$ in $\{1,\ldots,\length{w}\}$;
$i$ is a send (resp. receive) event of $w$ if $w[i]$ is a send (resp. receive) action.
We write $\sendeventsof{w}$ (resp. $\receiveeventsof{w}$) for the set of send (resp. receive) events 
of $w$ and $\eventsof{w}=\sendeventsof{w}\cup\receiveeventsof{w}$ for the set of all events of $w$.
When all events are labeled with distinct actions, we identify an event with its action.

\subsection{Automata}
We assume the standard notions of words $w\in\Sigma^*$ over a finite alphabet $\Sigma$, of 
deterministic finite automata (DFA) and nondeterministic finite automata (NFA);
while dealing with NFA, we assume that an NFA may have $\epsilon$ transitions. 

A non-deterministic finite automaton (NFA) is a tuple $\A=(Q,\Sigma,\delta,q_0,F)$, where $Q$ is a finite set of states, $\Sigma$ is a finite alphabet, $\delta: Q\times(\Sigma\cup \varepsilon)\times Q$ is the transition relation, $q_0\in Q$ is the initial state, and $F\subseteq Q$ is the set of accepting states. We write $\delta^*(s,w)$ to denote the set of states $s'$ that are reachable from $s$ following a path labeled with $w$. The language accepted by $\A$, denoted $\languageofnfa{\A}$, is the set of words $w\in\Sigma^*$ such that $\delta^*(q_0,w)\cap F\neq\emptyset$..

A deterministic finite automaton (DFA) is a special case of an NFA where the transition relation $\delta$ is a partial function
$\delta: Q\times\Sigma\to Q$; it is complete if the function is
total. Every DFA $\A=(Q,\Sigma,\delta,q_0,F)$ 
accepts the same language as the complete DFA
$\A'=(Q\cup\{\bot\},\Sigma,\delta',q_0,F)$ where 
$\delta'(q,a)=\bot$ whenever $\delta(q,a)$ is undefined, and $\delta'(\bot,a)=\bot$.

To every NFA $\A=(Q,\Sigma,\delta,q_0,F)$, one can associate a DFA $\detof{\A}\eqdef(Q',\Sigma,\delta',q_0',F')$ where $Q'=2^Q$, $q_0'=\{q_0\}$, $F'$ is the set of subsets of $Q$ that contain at least one accepting state, and $\delta'$ is defined such that for all $S\in Q'$ and $a\in\Sigma$, $\delta'(S,a)=\bigcup\{\delta^*(s,a)\mid s\in S\}$.

Given two NFAs $\A_1=(Q_1,\Sigma,\delta_1,q_{0,1},F_1)$ and $\A_2=(Q_2,\Sigma,\delta_2,q_{0,2},F_2)$, we define the product NFA 
$$\A_1\otimes\A_2~\eqdef~(Q_1\times Q_2,\Sigma,\delta,(q_{0,1}, q_{0,2}),F_1\times F_2)$$ 

where the transition relation $\delta$ is defined as follows: 

\begin{itemize}
    \item for all $(s_1,s_2)\in Q_1\times Q_2$, $a\in\Sigma$, and $(s'_1,s'_2)\in Q_1\times Q_2$, we have $((s_1,s_2),a,(s'_1,s'_2))\in\delta$ if and only if $(s_1,a,s'_1)\in\delta_1$ and $(s_2,a,s'_2)\in\delta_2$, and
    \item for all $(s_1,s_2)\in Q_1\times Q_2$ and $(s'_1,s'_2)\in Q_1\times Q_2$, we have $((s_1,s_2),\varepsilon,(s'_1,s'_2))\in\delta$ if and only if $(s_1,\varepsilon,s'_1)\in\delta_1$ or $(s_2,\varepsilon,s'_2)\in\delta_2$.
\end{itemize}
It holds that $\languageofnfa{\A_1\otimes\A_2}=\languageofnfa{\A_1}\cap \languageofnfa{\A_2}$.

Finally, if $\A=(Q,\Sigma,\delta,q_0,F)$ is a DFA, we can define its dual DFA $\dualdfaof{\A}\eqdef(Q,\Sigma,\delta,q_0,Q\setminus F)$. In this case, if $\A$ is complete, we have $\languageofnfa{\dualdfaof{\A}}=\Sigma^*\setminus\languageofnfa{\A}$.

We write $\languageofnfa{\A}$ 
for the language accepted by the NFA $\A$, $\detof{\A}$ for the DFA obtained by determinising an NFA $\A$, 
and $\A_1\otimes\A_2$ for the product of two NFAs $\A_1$ and $\A_2$ such that
$\languageofnfa{\A_1\otimes\A_2}=\languageofnfa{\A_1}\cap \languageofnfa{\A_2}$. 
We write $\acceptcompletion{\mathcal A}{}$ for the NFA obtained from $\mathcal A$ by setting
all states as accepting states. 
Finally, we write $\dualdfaof{\A}$ for the dual DFA of $\A$, where
accepting states and non-accepting ones are swapped\footnote{possibly after introducing a sink state so as to make $\A$ a complete DFA}, such that 
$\languageofnfa{\dualdfaof{\A}}=\Sigma^*\setminus\languageofnfa{\A}$.

\subsection{Executions}

An execution is a sequence of actions where 
a receive action is always preceded by a unique corresponding send action, i.e., this corresponds to the intuition that no message can be received without having being sent.

\begin{defi}[Execution]
	An \emph{execution} over $\Procs$ and $\Msg$ is a pair $e=(w,\source)$ where $w\in\Act^*$  
	and $\source:\receiveeventsof{w}\to\sendeventsof{w}$ is an injective function from receive events to send events such that for all receive event $r$ labeled with $\recv{p}{q}{\msg}$, $\source(r)$ is labeled with $\send{p}{q}{\msg}$ and
	$\source(r)<r$.
\end{defi}

When all send actions of a sequence of actions $w$ are distinct, there is at most one execution 
$\execution$ such that $w$ is the sequence of actions of $\execution$, and we often identify $w$ with $\execution$
and let $\source$ be implicit. 

An execution $e_1=(w_1,\source_1)$ is a \emph{prefix} of an execution $e_2=(w_2,\source_2)$ if $w_1$ is a prefix of $w_2$ and 
$\source_1$ is the restriction of $\source_2$ to $\receiveeventsof{w_1}$.
For a set of executions $\mathcal{E}$, we write $\prefixclosureof{\mathcal{E}}$ 
for the set of all prefixes of the executions in $\mathcal{E}$. 
We say that an execution $e_2$ is a \emph{completion} of an execution $e_1$
if $e_1$ is a prefix of $e_2$. 
The \emph{projection} $\projofon{e}{p}$ of an execution $e$ on a process $p$ 
is the subsequence of actions in $\Actp$.  
A send event $s$ is \emph{matched} if there is a receive event $r$ such that $s=\source(r)$.
An execution $e$ is \emph{orphan-free} if $\source$ is surjective over the send events of $e$, i.e.,
all sent messages have been received.

\subsection{Communication Models}

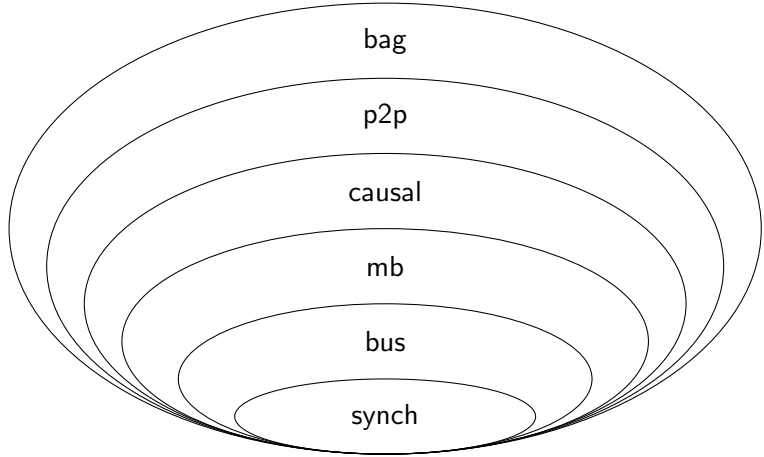
\begin{figure}
	\centering
	\begin{tikzpicture}

    \begin{scope}[xshift=0.5cm]
        \node[draw, ellipse, minimum width=10cm, minimum height=6cm, align=center] at (0,4.5) {};
        \node[draw, ellipse, minimum width=9cm, minimum height=5cm, align=center] at (0,4) {};
        \node[draw, ellipse, minimum width=8cm, minimum height=4cm, align=center] at (0,3.5) {};
        \node[draw, ellipse, minimum width=7cm, minimum height=3cm, align=center] at (0,3) {};
        \node[draw, ellipse, minimum width=5.5cm, minimum height=2cm, align=center] at (0,2.5) {};
        \node[draw, ellipse, minimum width=4cm, minimum height=1cm, align=center] at (0,2) {};
        \node at (0,7) {$\bagmodel$};
        \node at (0,6) {$\ppmodel$};
        \node at (0,5) {$\causalmodel$};
        \node at (0,4) {$\mbmodel$};
        \node at (0,3) {$\busmodel$};
        \node at (0,2) {$\synchmodel$};
    \end{scope}

    \begin{scope}[xshift=-0.5cm]
    \end{scope}

\end{tikzpicture}
	\caption{Hierarchy of communication models, both as sets of executions and as sets of MSCs.}\label{fig:communication-models-hierarchy}
\end{figure}

A communication model defines which executions conform to some rules. They are sometimes defined operationally, based on implementation of the transport layer (like queues). In this paper, we adopt a more denotational approach, and consider that a communication model is fully determined by the set of executions it allows. 

\begin{defi}[Communication model]
	\label{def:communication-model}
	A \emph{communication model} $\acommunicationmodel$ is a set 
	$\executionsofmodel{\acommunicationmodel}$ of executions.
\end{defi}


The simplest communication model is the \emph{bag communication model} ($\bagmodel$),
where all executions are allowed, i.e., any sent message can overtake another one, and no assumption holds on the order of message delivery. It represents the loosest degree of asynchrony.

\begin{defi}[$\bagmodel$]
	\label{def:bag-communication-model}
	$\executionsofmodel{\bagmodel}$ is the set of all executions.
\end{defi}

On the other side of the spectrum, in the synchronous communication model ($\synchmodel$), message exchanges can be thought as rendez-vous synchronisations.
In other words, an execution $e$ belongs to $\executionsofmodel{\synchmodel}$ if
all send events are immediately followed by their corresponding receive events.

\begin{defi}[$\synchmodel$]
	An execution $\execution=(w,\source) \in \executionsofmodel{\synchmodel}$ if
	for all send event $s\in\sendeventsof{e}$, $s+1$ is a receive event of $e$ and $\source(s+1)=s$.
\end{defi}

Between these two extremes, we can find a number of communication models, with increased levels of synchrony. In the peer-to-peer communication model ($\ppmodel$), messages sent by a process $p$ to $q$ transit over a FIFO channel
that is specific to the pair $(p,q)$: if $p$ sends first $m_1$ then $m_2$ to $q$, 
then $m_2$ cannot overtake $m_1$ in the FIFO channel. In particular:
\begin{itemize}
	\item if $m_1$ is not received, then $m_2$ is not received either;
	\item if both are received, then $m_1$ is received before $m_2$.
\end{itemize}
      
\begin{defi}[$\ppmodel$]\label{def:queue-based-communication-model}
	$\executionsofmodel{\ppmodel}$  is
	the set of executions $e$ such that for any two send events $s_1=\send{p}{q}{\msg_1}$ and 
	$s_2=\send{p}{q}{\msg_2}$ in $\sendeventsof{e}$, 
	with $s_1 < s_2$,
	one of the two holds:
	\begin{itemize}
		\item $s_2$ is unmatched, or
		\item there exists $r_1,r_2$ such that $r_1<r_2$, $\source(r_1)=s_1$, and $\source(r_2)=s_2$.
	\end{itemize}
\end{defi}

The causal communication model ($\causalmodel$) was introduced by Lamport~\cite{lamport1978time}; unlike other communication models, it is easier to define it denotationally rather than operationally, as the possible implementations of this communication model are rather subtle. Intuitively, an execution is causal (or causally ordered) if a message sent to a process $p$ cannot overtake another message sent to the same process $p$ \emph{causally before}, even if the senders are different. We postpone the formal definition of this model as it requires some further notions (see Definition~\ref{def:causal-model} later).


The variety of communication models is large, and we do not aim at being exhaustive here. We can mention the other FIFO-based communication models
like the \emph{bus communication model} (a single FIFO queue merging all messages), the \emph{mailbox communication model} (a FIFO queue per process merging all incoming messages), bounded variants of these models (the FIFO queues are often bounded),communication models with errors (message loss, message duplication, message corruption), etc.
Figure~\ref{fig:communication-models-hierarchy} shows a simplified hierarchy of some of these communication models (see also~\cite{ENGELS2002253,DBLP:conf/fm/ChevrouH0Q19,DBLP:journals/pacmpl/GiustoFLL23}
for more refined hierarchies). Figure~\ref{fig:exmscs} exemplifies this hierarchy.
Every example from~\ref{fig:bag_ex} to~\ref{fig:bus_ex} 
is in the corresponding communication model, but is also a counter-example for the next 
one in the hierarchy.

\subsection{Message Sequence Charts}
While executions correspond to a total order view of the events occurring in a 
system, message sequence charts (MSC) adopt a distributed, a partial order view on the events.
For a tuple $\msc=(w_p)_{p\in\Procs}$, each $w_p\in\Actsp$ 
is a sequence of actions representing  the ones executed by process $p$ according to some total, locally observable, order.
We write $\eventsof{\msc}$  for the set $\{(p,i) \mid p \in \Procs \text{ and } 0 \leq i < \length{w_p}\}$.
The label $\actionof{\event}$ of the event $\event=(p,i)$ is the action $w_p[i]$. The event $\event$
is a send (resp. receive) event if it is labeled with a send (resp. receive) action.
We write $\sendeventsof{\msc}$ (resp. $\receiveeventsof{\msc}$) for the set of send (resp. receive) events of $\msc$; we also write $\messageof{\event}$ for the message sent or received at event $\event$, and 
$\processof{\event}$ for the process executing $\event$. Finally, we write 
$\verticalorderstrict{\event_1}{\event_2}$ if there is a process $p$ and $i<j$ such that
$\event_1=(p,i)$ and $\event_2=(p,j)$.

\begin{defi}[MSC]\label{def:msc}
	An {MSC} over $\Procs$ and $\Msg$ is a tuple $\msc = \big((w_p)_{p\in\Procs},\source\big)$
    where
    \begin{enumerate}
        \item for each process $p$, $w_p\in\Actsp$ is a finite sequence of actions;
        \item $\source: \receiveeventsof{\msc} \to \sendeventsof{\msc}$ is 
            an injective function from receive events to send events such that
			for all receive event $\event$ labeled with $\recv{p}{q}{\msg}$,
            $\source(\event)$ is labeled with $\send{p}{q}{\msg}$.
    \end{enumerate}
\end{defi}

\begin{figure}[t]
			\captionsetup[subfigure]{justification=centering}
	\begin{subfigure}[t]{0.24\textwidth}\centering

		\begin{tikzpicture}[scale=0.7, every node/.style={transform shape}]
			\newproc{0}{p}{-2.2};
			\newproc{2}{q}{-2.2};

			\newmsgm{0}{2}{-1.7}{-0.5}{1}{0.25}{black};
			\newmsgm{2}{0}{-1.7}{-0.5}{2}{0.25}{black};

			\end{tikzpicture}
		\caption{not an MSC}	\label{fig:raw_msc_ex}

		\end{subfigure}
        \begin{subfigure}[t]{0.24\textwidth}\centering

            \begin{tikzpicture}[scale=0.7, every node/.style={transform shape}]
                \newproc{0}{p}{-2.2};
                \newproc{2}{q}{-2.2};
    
                \newmsgm{0}{2}{-0.5}{-1.5}{1}{0.1}{black};
                \newmsgm{0}{2}{-1}{-1}{2}{0.8}{black};
    
                \end{tikzpicture}
            \caption{$\bagmodel$, and $\erlangmodel$ if $m_1\neq m_2$}	\label{fig:bag_ex}
    
            \end{subfigure}
	\begin{subfigure}[t]{0.24\textwidth}\centering
		\begin{tikzpicture}[scale=0.7, every node/.style={transform shape}]
			\newproc{0}{p}{-2.2};
			\newproc{1}{q}{-2.2};
			\newproc{2}{r}{-2.2};

			\newmsgm{0}{1}{-0.3}{-2}{1}{0.15}{black};
			\newmsgm{0}{2}{-0.9}{-0.9}{2}{0.75}{black};
			\newmsgm{2}{1}{-1.5}{-1.5}{3}{0.5}{black};

		\end{tikzpicture}
		\caption{$\ppmodel$} \label{fig:pp_ex}
	\end{subfigure}
	\begin{subfigure}[t]{0.24\textwidth}\centering
		\begin{tikzpicture}[scale=0.7, every node/.style={transform shape}]
			\newproc{0}{p}{-2.2};
			\newproc{1}{q}{-2.2};
			\newproc{2}{r}{-2.2};
			\newproc{3}{s}{-2.2};
	
			\newmsgm{0}{3}{-0.6}{-0.6}{2}{0.8}{black};
			\newmsgm{2}{3}{-1.1}{-1.1}{3}{0.5}{black};
			\newmsgm{2}{1}{-1.5}{-1.5}{4}{0.5}{black};
			\newmsgm{0}{1}{-0.3}{-2}{1}{0.5}{black};
		\end{tikzpicture}
		\caption{$\causalmodel$} \label{fig:causal_ex}
	\end{subfigure}
\begin{subfigure}[t]{0.24\textwidth}\centering
	\begin{tikzpicture}[scale=0.7, every node/.style={transform shape}]
		\newproc{0}{p}{-2.2};
		\newproc{1}{q}{-2.2};
		\newproc{2}{r}{-2.2};
		\newproc{3}{s}{-2.2};

		\newmsgm{3}{0}{-0.6}{-0.6}{2}{0.2}{black};
		\newmsgm{3}{2}{-1.1}{-1.1}{3}{0.5}{black};
		\newmsgm{2}{1}{-1.5}{-1.5}{4}{0.5}{black};
		\newmsgm{0}{1}{-0.3}{-2}{1}{0.5}{black};
	\end{tikzpicture}
			\caption{$\mbmodel$} \label{fig:mb_ex}
\end{subfigure}
\begin{subfigure}[t]{0.24\textwidth}\centering
	\begin{tikzpicture}[scale=0.7, every node/.style={transform shape}]
		\newproc{0}{p}{-2.2};
		\newproc{1}{q}{-2.2};
		\newproc{2}{r}{-2.2};

		\newmsgm{0}{1}{-0.3}{-1.5}{1}{0.15}{black};
		\newmsgm{1}{0}{-0.9}{-0.9}{2}{0.25}{black};
		\newmsgm{1}{2}{-2}{-2}{3}{0.5}{black};
	\end{tikzpicture}
		\caption{$\busmodel$} \label{fig:bus_ex}
\end{subfigure}
\begin{subfigure}[t]{0.24\textwidth}\centering
	\begin{center}
		\begin{tikzpicture}[scale=0.7, every node/.style={transform shape}]
			\newproc{0}{p}{-2.2};
			\newproc{1}{q}{-2.2};
			\newproc{2}{r}{-2.2};

			\newmsgm{0}{1}{-0.5}{-0.5}{1}{0.5}{black};
			\newmsgm{1}{0}{-1}{-1}{2}{0.5}{black};
			\newmsgm{1}{2}{-1.6}{-1.6}{3}{0.5}{black};

		\end{tikzpicture}
		\caption{$\synchmodel$}
		\label{fig:rsc_ex}
	\end{center}
\end{subfigure}
\begin{subfigure}[t]{0.24\textwidth}\centering

	\begin{tikzpicture}[scale=0.7, every node/.style={transform shape}]
		\newproc{0}{p}{-2.2};
		\newproc{2}{q}{-2.2};

		\newmsgum{0}{2}{-0.8}{1}{0.2}{black};
		\newmsgm{0}{2}{-1.6}{-1.6}{2}{0.2}{black};

		\end{tikzpicture}
	\caption{an MSC with an orphan message}	\label{fig:orphan_ex}
\end{subfigure}
		\caption{Examples of MSCs for various communication models. The sender of a message is at the origin of the arrow and the receiver at the destination. Unmatched send events are depicted with dashed arrows.}\label{fig:exmscs}
\end{figure}
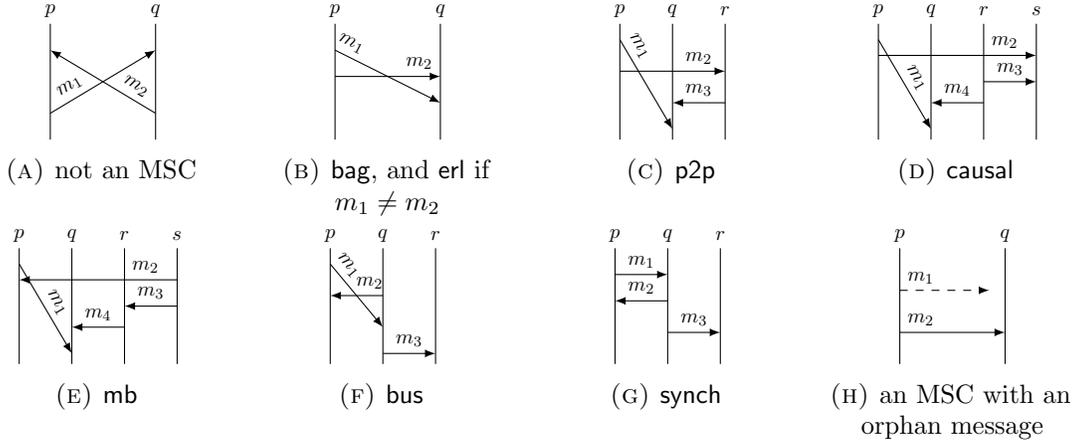

For an execution $\execution$,  $\mscof{\execution}$ is the MSC 
$\big((w_p)_{p\in\Procs},\source\big)$ where $w_p$ is the subsequence of $\execution$ restricted to the actions of $p$,
and $\source$ is the lifting of $\source_{\execution}$ to the events of $(w_p)_{p\in\Procs}$.

\begin{exa}
    \label{ex:msc}
     MSC $\msc$ in Figure~\ref{fig:raw_msc_ex} is an MSC over $\Procs=\{p,q\}$ and $\Msg=\{\msg_1,\msg_2\}$
    with $\msc=\large((w_p,w_q),\source\large)$, $w_p = ?\msg_2!\msg_1$, $w_q = ?\msg_1!\msg_2$, $\source((p,0)) = (q,1)$,
    and $\source((q,0)) = (p,1)$. Note that there is no execution $\execution$ such that 
	$\msc=\mscof{\execution}$ as all receptions precede the corresponding sends. On the other hand, the MSC
	in Figure~\ref{fig:pp_ex} is $\mscof{\execution}$ for the execution $\execution$
	$=\large(!\msg_1!\msg_2?\msg_2!\msg_3?\msg_3?\msg_1,\source\large)$ where 
	$\source(3) = 2$, $\source(5) = 2$ and $\source(6) = 4$. It is the only execution that induces this MSC, but 
	in general there might exist several executions inducing the same MSC.
\end{exa}

For a set of processes $\Procs$, 
an MSC $M=\big((w_p)_{p\in\Procs},\source\big)$ is the \emph{prefix} 
of another MSC $M'=\big((w'_p)_{p\in\Procs},\source'\big)$, in short $M \prefixorder M'$,
if for all $p\in\Procs$, $w_p$ is a prefix of $w'_p$ and $\source'(e)=\source(e)$ for all receive events $e$ of $M$.
The \emph{concatenation} of MSCs $M_1$ and $M_2$ is the MSC  $M_1\cdot M_2$ 
obtained by gluing vertically  $M_1$ before $M_2$:
formally, if $M_1=((w_p^1)_{p\in\Procs},\source_1)$ and $M_2=((w_p^2)_{p\in\Procs},\source_2)$, 
then $M_1\cdot M_2=((w_p)_{p\in\Procs},\source)$ where
    for all $p$, $w_p=w_p^1\cdot w_p^2$, and
    $\source$ is defined by $\source(e) = \source_i(e)$ for all receive events $e$ of $M_i$ ($i=1,2$).

\subsection{Happens-before relation and linearisations}
In a given MSC $M$,
an event $\event$ happens before  
 $\event'$, if 
(i) $\event$ and $\event'$ are events
of a same process $p$ and happen in that order on 
the time line of $p$,
or (ii) $\event$ is send event matched by $\event'$,
or (iii) a sequence of such situations defines
a path from $\event$ to $\event'$.

\begin{defi}[Happens-before relation]
Let $M$ be an MSC. The happens-before relation over $M$
is the binary relation $\happensbeforestrict$ defined as 
the least transitive relation over $\eventsof{\msc}$ such that:
\begin{itemize}
   \item 
   for all 
   $p,i,j$, if $i<j$, then $(p,i)\happensbeforestrict (p,j)$, and
   \item 
   for all receive events $\event$, $\source(\event) \happensbeforestrict \event$.
\end{itemize}
\end{defi}

The link between executions and MSCs goes through the notion of linearisation: 
\begin{defi}[Linearisation]
	\label{def:linearisation}
	A \emph{linearisation} of an MSC $\msc$ is a
total order $\alinearisation$ on $\eventsof{\msc}$
	that refines $\happensbeforestrict$:  for all events $\event,\event'$, 
	if $\event\happensbeforestrict \event'$, then $\event\alinearisation \event'$. 
\end{defi}
We write $\linearisationsof{\msc}{}$ for the set of all linearisations of $\msc$.
We often identify a linearisation
with the execution it induces. 

\begin{exa}
	\label{ex:linearisation}
	Take the MSC $\msc$  in Figure~\ref{fig:rsc_ex}. 
	These messages are ordered following the happens before relation: $!m_1 \happensbeforestrict ?m_1\happensbeforestrict!m_3$, and both $!m_3$ and $!m_2$
	happen before $?m_3$.
	Moreover, $\happensbeforestrict$ is a partial order, 
	and 
	$!m_1!m_2?m_1!m_3?m_3?m_2 \in \linearisationsof{\msc_c}{}$.
	On the other hand, consider the MSC $M$ in 
	in Figure~\ref{fig:raw_msc_ex}; then
	$\happensbeforestrictof{\msc'}$ is not a partial order, because
	$?m_2\happensbeforestrictof{\msc'}!m_1\happensbeforestrictof{\msc'}?m_1\happensbeforestrictof{\msc'}!m_2\happensbeforestrictof{\msc'}?m_2$,
	therefore $\linearisationsof{\msc'}{}=\emptyset$.
\end{exa}

We can now formally define the causal communication model.

\begin{defi}[$\causalmodel$]\label{def:causal-model}
	$\executionsofmodel{\causalmodel}$ is the set
	of executions $e$, with $M=\mscof{e}$, such that for any two send events $s_1=\send{p}{q}{\msg_1}$ and $s_2=\send{p'}{q'}{\msg_2}$ in $\sendeventsof{e}$, if
	$s_1 \porderstrictof{M} s_2$ and $q=q'$, then one of the two holds:
	\begin{itemize}
		\item $s_2$ is unmatched, or
		\item there exist $r_1,r_2$ such that $r_1 \porderstrictof{M} r_2$, $\source(r_1)=s_1$, and $\source(r_2)=s_2$.	
	\end{itemize}
\end{defi}

Given an MSC $\msc$, we write 
$\linearisationsof{\msc}{\acommunicationmodel}$
to denote 
$\linearisationsof{\msc}{}\cap\executionsofmodel{\acommunicationmodel}$; 
the executions of $\linearisationsof{\msc}{\acommunicationmodel}$
are called the linearisations of $\msc$ 
in the communication model $\acommunicationmodel$. 	

\begin{defi}[$\acommunicationmodel$-linearisable MSC]
	\label{def:linearisable-msc}
	An MSC $\msc$ is \emph{linearisable} in a communication model $\acommunicationmodel$ if 
	$\linearisationsof{\msc}{\acommunicationmodel}\neq\emptyset$.
	We write $\mscsetofmodel{\acommunicationmodel}$ for the set of all MSCs linearisable in $\acommunicationmodel$.
\end{defi}

\begin{exa}
	\label{ex:msc-linearisable}
	The MSC $M_b$ in Figure~\ref{fig:pp_ex} is linearisable in $\ppmodel$.
	It admits only one linearisation, and $\linearisationsof{M_b}{\ppmodel}=
		\{!m_1!m_2?m_2!m_3?m_3?m_1\}
		$.
	However, $M_b$ is not linearisable in $\synchmodel$.
	Finally, $M_c$ in Figure~\ref{fig:rsc_ex} is linearisable in $\synchmodel$ with
	$$\linearisationsof{M_c}{\synchmodel}=
	\{~!m_1?m_1!m_2?m_2!m_3?m_3~,~
	   !m_2?m_2!m_1?m_1!m_3?m_3~\}.
	$$
\end{exa}

Finally, we introduce a property that will be helpful in the next paragraph 
for
giving an alternative characterisation of deadlock-freedom of a system of communicating finite state machines.
This property states that  for each execution in the communication model, if we consider the corresponding MSC then all the linearisations of such an MSC belong to the same communication model.

\begin{defi}[Causally-closed communication model]\label{def:causally-closed-communication-model}
	A communication model $\acommunicationmodel$ is \emph{causally-closed} if for all MSCs $M$,
	$\linearisationsof{\msc}{\acommunicationmodel}\neq\emptyset$ implies that
	$\linearisationsof{\msc}{\acommunicationmodel}=\linearisationsof{\msc}{}$.
\end{defi}

Observe that not all communication models are causally closed. It is the case for $\ppmodel$,  but it is immediate  to see that the property is not valid  for $\synchmodel$. Take for instance  MSC $M$ in Figure~\ref{fig:rsc_ex},  its linearisation 
$!m_1!m_2?m_1?m_2!m_3?m_3$ 
  does not belong to $\linearisationsof{M_c}{\synchmodel}$.  
  

\begin{lem}\label{lem:reformulation-of-ppmodel-def}
    Let $e$ be an execution and  $M=\mscof{e}$.
    Then $e$ is an execution of $\ppmodel$ if and only if
    for any two send events $s_1=\send{p}{q}{\msg_1}$ and $s_2=\send{p}{q}{\msg_2}$ in $\sendeventsof{e}$, with $s_1 \porderof{M} s_2$,
        one of the two holds:
        \begin{itemize}
            \item $s_2$ is unmatched, or
            \item there exists $r_1,r_2$ such that $r_1 \porderof{M} r_2$, $\source(r_1)=s_1$, and $\source(r_2)=s_2$.
        \end{itemize}
\end{lem}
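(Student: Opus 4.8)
The statement asserts that Definition~\ref{def:queue-based-communication-model} (which orders send events by the textual/execution order $s_1 < s_2$) is equivalent to the same condition stated with the happens-before order $s_1 \porderof{M} s_2$ of the induced MSC. The plan is to prove the two implications separately, exploiting that on a single process $p$ the happens-before relation and the execution order coincide on the events of $p$.

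\textbf{From the new formulation to Definition~\ref{def:queue-based-communication-model}.} Assume $e$ satisfies the happens-before version. Take two send events $s_1 = \send{p}{q}{\msg_1}$ and $s_2 = \send{p}{q}{\msg_2}$ with $s_1 < s_2$ in the execution. Since both $s_1$ and $s_2$ are events of the same process $p$, and $s_1 < s_2$ as indices of the execution, the events $s_1, s_2$ appear in that order on the time line of $p$ in $M = \mscof{e}$; hence $s_1 \porderstrictof{M} s_2$, and a fortiori $s_1 \porderof{M} s_2$. Applying the hypothesis, either $s_2$ is unmatched, or there are $r_1, r_2$ with $r_1 \porderof{M} r_2$, $\source(r_1) = s_1$, $\source(r_2) = s_2$. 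In the first case we are done. In the second case, if $r_1 = r_2$ then $s_1 = \source(r_1) = \source(r_2) = s_2$, contradicting $s_1 < s_2$; so $r_1 \porderstrictof{M} r_2$, and since $\happensbeforestrict$ refines the order along any process and $r_1, r_2$ are both receive events on process $q$, we get that $r_1$ precedes $r_2$ on $q$'s time line, hence $r_1 < r_2$ in the execution $e$. This is exactly the disjunct required by Definition~\ref{def:queue-based-communication-model}.

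\textbf{From Definition~\ref{def:queue-based-communication-model} to the new formulation.} Assume $e \in \executionsofmodel{\ppmodel}$ and take $s_1 = \send{p}{q}{\msg_1}$, $s_2 = \send{p}{q}{\msg_2}$ with $s_1 \porderof{M} s_2$. If $s_1 = s_2$ the conclusion is trivial (take $r_1 = r_2$, if matched, or it is unmatched), so assume $s_1 \porderstrictof{M} s_2$. Since $s_1$ and $s_2$ are both events of process $p$, and the happens-before relation restricted to the events of a single process is exactly the time-line order of that process (which in turn is the execution order of those events, as $\mscof{e}$ keeps the $p$-subsequence of $e$), we obtain $s_1 < s_2$ in $e$. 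Now apply Definition~\ref{def:queue-based-communication-model}: either $s_2$ is unmatched — done — or there exist $r_1 < r_2$ with $\source(r_1) = s_1$, $\source(r_2) = s_2$. Both $r_1$ and $r_2$ are receive events of process $q$ (they receive messages sent to $q$), so $r_1 < r_2$ in $e$ means $r_1$ precedes $r_2$ on $q$'s time line, i.e. $r_1 \porderstrictof{M} r_2$, hence $r_1 \porderof{M} r_2$. This gives the required disjunct.

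\textbf{Main obstacle.} The only subtlety — and the point that deserves an explicit sentence rather than being swept under the rug — is the claim that for two events of the \emph{same} process, $s_1 < s_2$ in the execution $e$ if and only if $s_1 \porderstrictof{M} s_2$ in $M = \mscof{e}$ (i.e.\ that $\happensbeforestrict$ adds nothing new between co-located events and loses nothing either). The forward direction is immediate from the first clause of the happens-before definition together with the fact that $\mscof{e}$ preserves the per-process subsequence order of $e$. The backward direction requires observing that $\happensbeforestrict$ is a partial order on $\eventsof{M}$ whose restriction to one process is a total order compatible with $<$; since two distinct $p$-events are always comparable along $p$'s time line, $s_1 \porderstrictof{M} s_2$ forces them to be in that time-line order, hence in $<$ order in $e$. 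Everything else is a direct unfolding of the definitions, so once this bridging fact is stated the two implications follow mechanically.
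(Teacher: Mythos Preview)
Your proof is correct and follows essentially the same approach as the paper: both directions rely on the fact that for two events of the same process, the execution order $<$ and the happens-before order $\porderof{M}$ agree. The paper phrases the passage from $\porderof{M}$ to $<$ slightly differently (invoking that $<$ is a linearisation of $\porderof{M}$ rather than your co-located argument), and it does not spell out the degenerate case $s_1=s_2$ that you handle explicitly, but the structure and content of the argument are the same.
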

\begin{proof}
    Assume that $e$ is a $\ppmodel$ execution.
    Let $s_1=\send{p}{q}{\msg_1}$ and $s_2=\send{p}{q}{\msg_2}$ be two send events in $\sendeventsof{e}$ such that $s_1 \porderof{M} s_2$.
    Then $s_1<s_2$ in $e$, because $<$ is a linearisation of $\porderof{M}$.
    By  definition of $\ppmodel$, $s_2$ is unmatched or there exists $r_1,r_2$ such that $r_1 < r_2$, $\source(r_1)=s_1$, and $\source(r_2)=s_2$.
    In the second case, $r_1 < r_2$ implies that
    $r_1 \porderof{M} r_2$, because both $r_1$ and $r_2$ occur on the same process $q$.
    Conversely, assume that $e$ and $M$ satisfy the above reformulation of the definition of $\ppmodel$. 
    Let $s_1=\send{p}{q}{\msg_1}$ and $s_2=\send{p}{q}{\msg_2}$ be two send events in $\sendeventsof{e}$ such that $s_1 < s_2$.
    Then $s_1 \porderof{M} s_2$
    because $s_1$ and $s_2$ occur on the same process $p$.
    By the reformulation of the definition of $\ppmodel$, $s_2$ is unmatched or there exists $r_1,r_2$ such that $r_1 \porderof{M} r_2$, $\source(r_1)=s_1$, and $\source(r_2)=s_2$.
    In the second case, $r_1 \porderof{M} r_2$
    implies that $r_1 < r_2$ because $<$ is a  linearisation of $\porderof{M}$.
\end{proof}

  The following lemma gives examples of causally-closed communication models.
\begin{lem}\label{lem:pp-is-causally-closed}
	$\executionsofmodel{\ppmodel}$, $\executionsofmodel{\causalmodel}$, and $\executionsofmodel{\bagmodel}$ are causally-closed.
\end{lem}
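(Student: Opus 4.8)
The plan is to show each of the three communication models $\ppmodel$, $\causalmodel$, $\bagmodel$ is causally-closed by unwinding Definition~\ref{def:causally-closed-communication-model}: given an MSC $M$ with $\linearisationsof{M}{\acommunicationmodel}\neq\emptyset$, we must show $\linearisationsof{M}{\acommunicationmodel}=\linearisationsof{M}{}$. Since $\linearisationsof{M}{\acommunicationmodel}\subseteq\linearisationsof{M}{}$ always holds, the real content is the reverse inclusion: picking an arbitrary linearisation $e'\in\linearisationsof{M}{}$, we must show $e'\in\executionsofmodel{\acommunicationmodel}$, using only the hypothesis that \emph{some} linearisation $e\in\linearisationsof{M}{}$ lies in $\executionsofmodel{\acommunicationmodel}$.

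The key observation that makes this work is that the membership conditions defining $\ppmodel$, $\causalmodel$ and $\bagmodel$ can all be phrased purely in terms of the MSC $M=\mscof{e}$ and not in terms of the particular total order of $e$. For $\bagmodel$ this is trivial since $\executionsofmodel{\bagmodel}$ contains all executions. For $\ppmodel$, this is exactly the point of Lemma~\ref{lem:reformulation-of-ppmodel-def}: an execution $e$ is in $\ppmodel$ iff its MSC $M$ satisfies the condition phrased with $\porderof{M}$ (for any two sends $s_1\porderof{M}s_2$ from $p$ to $q$, either $s_2$ is unmatched or their receipts are $\porderof{M}$-ordered accordingly). Since $e$ and $e'$ are both linearisations of the same $M$, they induce the same $\source$ and the same $\porderof{M}$, so $e\in\executionsofmodel{\ppmodel}$ iff $e'\in\executionsofmodel{\ppmodel}$. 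For $\causalmodel$, Definition~\ref{def:causal-model} is already stated entirely in terms of $M=\mscof{e}$ and $\porderstrictof{M}$, so the same MSC-invariance argument applies directly — indeed $e\in\executionsofmodel{\causalmodel}$ depends only on $\mscof{e}$.

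So the proof structure is: for each of the three models, invoke the relevant characterisation (trivial for $\bagmodel$, Lemma~\ref{lem:reformulation-of-ppmodel-def} for $\ppmodel$, Definition~\ref{def:causal-model} directly for $\causalmodel$) to observe that membership in the model is a property of the underlying MSC; conclude that if one linearisation of $M$ is in the model then all are; hence $\linearisationsof{M}{\acommunicationmodel}=\linearisationsof{M}{}$ whenever the former is nonempty. I do not anticipate a serious obstacle here — the heavy lifting was already done in Lemma~\ref{lem:reformulation-of-ppmodel-def}. The one point requiring a little care is making explicit that two linearisations $e,e'$ of the same MSC $M$ share the same $\source$ function (true by definition of linearisation, which is an order on $\eventsof{M}$ refining $\happensbeforestrict$ while keeping $M$'s labelling and matching fixed) and the same happens-before partial order $\porderof{M}$, so that all the MSC-level predicates evaluate identically on $e$ and $e'$.
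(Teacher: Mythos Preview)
Your proposal is correct and follows essentially the same approach as the paper: handle $\bagmodel$ trivially, note that Definition~\ref{def:causal-model} for $\causalmodel$ is already phrased purely in terms of $\porderof{M}$, and invoke Lemma~\ref{lem:reformulation-of-ppmodel-def} for $\ppmodel$ to reduce membership to an MSC-level condition. The paper's proof is terser but identical in substance.
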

\begin{proof}
    It is obvious for $\bagmodel$ because $\linearisationsof{M}{}=\linearisationsof{M}{\bagmodel}$. For $\causalmodel$, it is enough to observe that the definition of $\causalmodel$ (Definition~\ref{def:causal-model}) only depends on the partial order $\porderof{M}$ induced by the MSC $M=\mscof{e}$.
    Let $M\in\mscsetofmodel{\ppmodel}$ 
    and  $e$ be a linearisation of $M$.
    We show that $e\in\executionsofmodel{\ppmodel}$.
    By definition of $\mscsetofmodel{\ppmodel}$, 
    there is an execution 
    $e'\in\executionsofmodel{\ppmodel}$
    such that $\mscof{e'}=M$.
    By Lemma~\ref{lem:reformulation-of-ppmodel-def},
    $e$ is also a $\ppmodel$ execution.
\end{proof}

\subsection{Communicating finite state machines}
Communicating finite state machines~\cite{BrandZafiropulo} are non-deterministic finite state automata (NFA), possibly with $\varepsilon$-transitions, 
where the alphabet is the set of actions that can be executed by the processes.

\begin{defi}[CFSM] 
    A communicating finite state machine (CFSM) is an NFA with $\varepsilon$-transitions $\acfsm$ over the alphabet $\Act$.
    A system of CFSMs is a tuple $\cfsms = (\acfsm_p)_{p\in\Procs}$.
\end{defi}

\begin{exa}
	\label{ex:cfsm}
	Figure~\ref{fig:cfsm_ex} depicts a system $(\acfsm_p,\acfsm_q,\acfsm_r)$ with three processes $p,q,r$. 
\end{exa}

\begin{figure}
	\centering
	\begin{tikzpicture}
    \begin{scope}[xshift=-4.5cm]
        \node[state, initial, initial text={}, accepting] (q0) at (0,0) {$q_0$};
        \node[state] (q1) at (2,0) {$q_1$};
        \node[state] (q2) at (2,2) {$q_2$};
        \draw[->, bend left=20] (q0) edge node[above] {$\send{p}{q}{a}$} (q1);
        \draw[->] (q1) edge node[right] {$\varepsilon$} (q2);
        \draw[->, bend left=20] (q1) edge node[below] {$\recv{q}{p}{c}$} (q0);
        \draw[->, bend right=20] (q2) edge node[above,sloped] {$\recv{r}{p}{d}$} (q0);
    \end{scope}
    \begin{scope}
        \node[state, initial, initial text={}, accepting] (q0) at (0,0) {$q_0$};
        \node[state] (q1) at (2,0) {$q_1$};
        \node[state] (q2) at (2,2) {$q_2$};
        \draw[->, bend left=20] (q0) edge node[above] {$\recv{p}{q}{a}$} (q1);
        \draw[->] (q1) edge node[right] {$\send{q}{r}{b}$} (q2);
        \draw[->, bend left=20] (q1) edge node[below] {$\send{q}{p}{c}$} (q0);
        \draw[->, bend right=20] (q2) edge node[above,sloped] {$\varepsilon$} (q0);
    \end{scope}
    \begin{scope}[xshift=4.5cm]
        \node[state, initial, initial text={}, accepting] (q0) at (0,0) {$q_0$};
        \node[state] (q1) at (2,0) {$q_1$};
        \node[state] (q2) at (2,2) {$q_2$};
        \draw[->, bend left=20] (q0) edge node[above] {$\varepsilon$} (q1);
        \draw[->] (q1) edge node[right] {$\recv{q}{r}{b}$} (q2);
        \draw[->, bend left=20] (q1) edge node[below] {$\varepsilon$} (q0);
        \draw[->, bend right=20] (q2) edge node[above,sloped] {$\send{r}{p}{d}$} (q0);
    \end{scope}
\end{tikzpicture}
	\caption{A system of CFSMs with three processes.}\label{fig:cfsm_ex}
\end{figure}
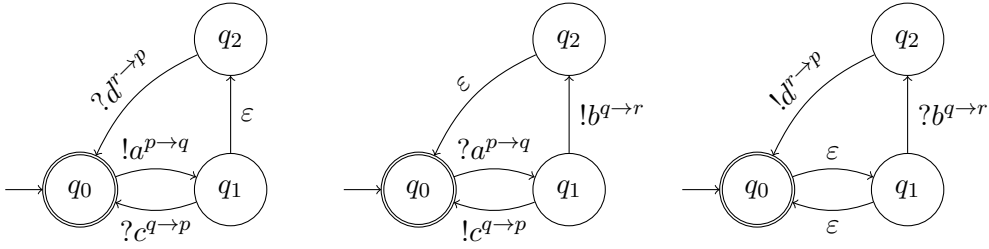

\begin{figure}
  \begin{center}
	\begin{tikzpicture}[>=stealth, node distance=2cm, auto]



    \begin{scope}[xshift=-4.5cm] 
        \node[state, accepting, initial, initial text={}] (s0) {$s_0$};

        \node[state] (s1) [above right=2cm and 1cm of s0] {$s_1$};
        \node[state] (s2) [right=2cm of s0] {$s_2$};

        \draw[->, bend left=40] (s0) to node[above,sloped]{$\send{c1}{b}{m}$}(s1);
        \draw[->, bend right=20] (s1) -- node[above,sloped] {$\varepsilon$} (s0);

        \draw[->, bend left=20] (s0) to node[above,sloped]{$\varepsilon$}(s2);
        \draw[->, bend right=20] (s2) -- node[below,sloped]{$\recv{b}{c1}{m}$} (s0);
    \end{scope}
    \begin{scope}
        \node[state, accepting, initial, initial text={}] (s0) {$s_0$};

        \node[state] (s1) [above right=2cm and 1cm of s0] {$s_1$};
        \node[state] (s2) [right=2cm of s0] {$s_2$};

        \draw[->, bend left=40] (s0) to node[above,sloped]{$\recv{c1}{b}{m}$}(s1);
        \draw[->, bend right=20] (s1) -- node[above,sloped] {$\send{b}{c2}{m}$} (s0);

        \draw[->, bend left=20] (s0) to node[above,sloped]{$\recv{c2}{b}{m}$}(s2);
        \draw[->, bend right=20] (s2) -- node[below,sloped]{$\send{b}{c1}{m}$} (s0);
    \end{scope}
    \begin{scope}[xshift=4.5cm]
        \node[state, accepting, initial, initial text={}] (s0) {$s_0$};

        \node[state] (s1) [above right=2cm and 1cm of s0] {$s_1$};
        \node[state] (s2) [right=2cm of s0] {$s_2$};

        \draw[->, bend left=40] (s0) to node[above,sloped]{$\varepsilon$}(s1);
        \draw[->, bend right=20] (s1) -- node[above,sloped] {$\recv{b}{c2}{m}$} (s0);

        \draw[->, bend left=20] (s0) to node[above,sloped]{$\send{c2}{b}{m}$}(s2);
        \draw[->, bend right=20] (s2) -- node[below,sloped]{$\varepsilon$} (s0);
    \end{scope}
\end{tikzpicture}
    \caption{A system of CFSMs with orphan messages.}\label{fig:cfsm_orphans}
     \end{center}
\end{figure}
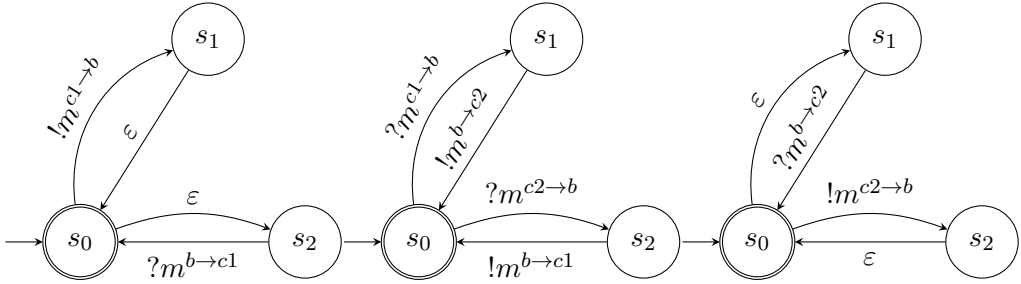

\begin{exa}
	\label{ex:cfsm_orphan}
	Figure~\ref{fig:cfsm_orphans} depicts a system $(\acfsm_{c1},\acfsm_q,\acfsm_{c2})$ with three processes $c1,q,c2$ that models a naive message passaging system.
	In this systems the broker $b$ passes messages that it receives from $c1$ to $c2$ and conversely. 
	In $\synchmodel$, the system is orphan-free and deadlock-free since the broker is always able to treat all the incoming messages, whatever their order of emission. 
	However, in  $\ppmodel$, the system is still deadlock-free, but their can be orphan messages, since all processes can be in a accepting states with a client unaware of an incoming message, e.g., after the execution
	$\send{c1}{b}{m}\recv{c1}{b}{m}\send{b}{c2}{m}$.
\end{exa}

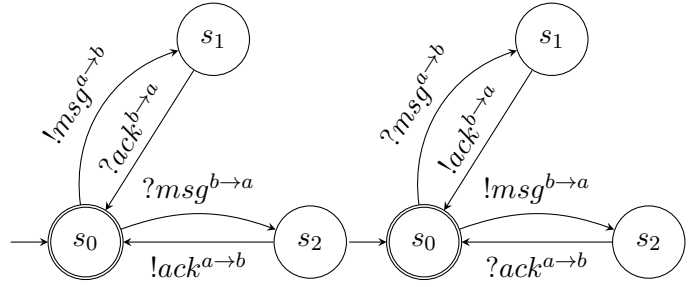
\begin{figure}
  \begin{center}
	\begin{tikzpicture}[>=stealth, node distance=2cm, auto]



    \begin{scope}[xshift=-4.5cm] 
        \node[state, accepting, initial, initial text={}] (s0) {$s_0$};

        \node[state] (s1) [above right=2cm and 1cm of s0] {$s_1$};
        \node[state] (s2) [right=2cm of s0] {$s_2$};

        \draw[->, bend left=40] (s0) to node[above,sloped]{$\send{a}{b}{msg}$}(s1);
        \draw[->, bend right=20] (s1) -- node[above,sloped] {$\recv{b}{a}{ack}$} (s0);

        \draw[->, bend left=20] (s0) to node[above,sloped]{$\recv{b}{a}{msg}$}(s2);
        \draw[->, bend right=20] (s2) -- node[below,sloped]{$\send{a}{b}{ack}$} (s0);
    \end{scope}
    \begin{scope}
        \node[state, accepting, initial, initial text={}] (s0) {$s_0$};

        \node[state] (s1) [above right=2cm and 1cm of s0] {$s_1$};
        \node[state] (s2) [right=2cm of s0] {$s_2$};

        \draw[->, bend left=40] (s0) to node[above,sloped]{$\recv{a}{b}{msg}$}(s1);
        \draw[->, bend right=20] (s1) -- node[above,sloped] {$\send{b}{a}{ack}$} (s0);

        \draw[->, bend left=20] (s0) to node[above,sloped]{$\send{b}{a}{msg}$}(s2);
        \draw[->, bend right=20] (s2) -- node[below,sloped]{$\recv{a}{b}{ack}$} (s0);
    \end{scope}
\end{tikzpicture}
    \caption{A system of CFSMs with a deadlock.}\label{fig:cfsm_deadlock}
     \end{center}
\end{figure}

\begin{exa}
	\label{ex:cfsm_deadlock}
	Figure~\ref{fig:cfsm_deadlock} depicts a system $(\acfsm_a,\acfsm_b)$ with two processes $a,b$ exchanging messages $msg$ and acknowledgements $ack$.
	In $\synchmodel$, the system is orphan-free and deadlock-free since only one process at a time can emit a message. 
	Contrary to  previous example~\ref{ex:cfsm_orphan}, the acknowledgements prevents orphans, 
	but, in $\ppmodel$, the system will face a deadlock, since both processes could send $msg$ concurrently,
	blocking them in acknowledgement waiting. 
\end{exa}

Given a system of CFSMs $\cfsms=(\acfsm_p)_{p\in\Procs}$,
we write $\acceptcompletion{\cfsms}{}$ for the system of CFSMs $\acceptcompletion{\cfsms}=(\acceptcompletion{\acfsm_p})_{p\in\Procs}$
where all states are accepting.

\begin{defi}[Executions of a CFSMs in $\acommunicationmodel$]\label{def:executions-of-cfsms}
Given a system 
$\cfsms = (\acfsm_p)_{p\in\Procs}$ of CFSMs, and a communication model $\acommunicationmodel$,
 $\executionsofcfsms{\cfsms}{\acommunicationmodel}$ is the set  of \emph{executions} $e\in\executionsofmodel{\acommunicationmodel}$
such that for all $p$, $\projofon{e}{p}$ is in $\languageofnfa{\acfsm_p}{}$.
We write $\mscsofcfsms{\cfsms}{\acommunicationmodel}$ for the set $\{\mscof{e} \mid e\in\executionsofcfsms{\cfsms}{\acommunicationmodel}\}$.
\end{defi}

\begin{exa}
	\label{ex:executions-of-cfsms}
	Let $\cfsms=(\acfsm_p,\acfsm_q,\acfsm_r)$ be the system of CFSMs in Figure~\ref{fig:cfsm_ex}.
	Then 
	$$
		\executionsofcfsms{\cfsms}{\acommunicationmodel} =
		(!a?a!c?c+!a?a!b?b!d?d)^*
	$$ 
	provided $\executionsofmodel{\acommunicationmodel}\supseteq\executionsofmodel{\synchmodel}$.
\end{exa}

\begin{rem}
	Assume $\acommunicationmodel$ is a communication model, 
	$\cfsms$  a system of CFSMs, and $e,e'\in\executionsofmodel{\acommunicationmodel}$ such that $\mscof{e}=\mscof{e'}$.
	Then $e\in\executionsof{\cfsms}{\acommunicationmodel}$ 
	if and only if $e'\in\executionsof{\cfsms}{\acommunicationmodel}$. This follows from the fact that $\projofon{e}{p}= \projofon{e'}{p}$ for all $p$.
\end{rem}

A system is orphan-free if, whenever all machines have reached an accepting state, no message
remains in transit, i.e., no message is sent but not received. All  synchronous executions are orphan-free by definition.

\begin{defi}[Orphan-free]\label{def:orphan-free}
	A system of CFSMs $\cfsms$ is \emph{orphan-free} for a communication model 
	$\acommunicationmodel$ if
	for all $e\in\executionsofcfsms{\cfsms}{\acommunicationmodel}$,
	$e$ is orphan-free.
\end{defi}

A system is deadlock-free if  
any \emph{partial} execution can be completed to an accepting execution.

\begin{defi}[Deadlock-free]\label{def:deadlock-free}
	A system of CFSMS $\cfsms$ is \emph{deadlock-free} for a communication model
	$\acommunicationmodel$ if for all 
	$e\in\executionsofcfsms{\acceptcompletion{\cfsms}}{\acommunicationmodel}$,
	there is an execution $e'\in\executionsofcfsms{\cfsms}{\acommunicationmodel}$ 
	such that $e\prefixorder e'$.
\end{defi}

\begin{rem}\label{rem:equivalent-formulation-of-deadlock-free-cfsms}
	A system of CFSMS $\cfsms$ is \emph{deadlock-free} for 
	$\acommunicationmodel$ if and only if 
	$$
	\executionsofcfsms{\acceptcompletion{\cfsms}}{\acommunicationmodel}\subseteq
	\prefixclosureof{\executionsofcfsms{\cfsms}{\acommunicationmodel}}
	$$
\end{rem}

The following result shows that, for  causally-closed communication models and $\synchmodel$,
the deadlock-freedom property of a system of CFSMs can be expressed as a property on the MSCs
of the system. 

\begin{prop}[Deadlock-freedom as an MSC property]
	\label{prop:deadlock-free-as-a-property-on-mscs-for-p2p-and-synch}
	Assume $\acommunicationmodel$ is causally-closed (respectively, $\acommunicationmodel$ is $\synchmodel$).
	Then a system of CFSMs $\cfsms$ is deadlock-free for $\acommunicationmodel$ if and only if
	$\mscsofcfsms{\acceptcompletion\cfsms}{\acommunicationmodel}\subseteq\prefixclosureof{\mscsofcfsms{\cfsms}{\acommunicationmodel}}$.
\end{prop}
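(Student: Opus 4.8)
The plan is to reduce to the execution-level characterisation of deadlock-freedom given in Remark~\ref{rem:equivalent-formulation-of-deadlock-free-cfsms}, namely $\executionsofcfsms{\acceptcompletion{\cfsms}}{\acommunicationmodel}\subseteq\prefixclosureof{\executionsofcfsms{\cfsms}{\acommunicationmodel}}$, and to transport this inclusion to the MSC level and back. Both transports rest on two elementary facts: a prefix relation between executions projects onto a prefix relation between their MSCs, and an execution and any linearisation of its MSC have the same projections onto each process, hence are accepted by the same CFSM system. For the passage from executions to MSCs the second fact is innocuous; for the converse it is exactly where causal-closedness (Definition~\ref{def:causally-closed-communication-model}), respectively a separate argument for $\synchmodel$, comes in.

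For the forward implication, assume $\cfsms$ is deadlock-free for $\acommunicationmodel$ and take $M\in\mscsofcfsms{\acceptcompletion{\cfsms}}{\acommunicationmodel}$, i.e.\ $M=\mscof{e}$ with $e\in\executionsofcfsms{\acceptcompletion{\cfsms}}{\acommunicationmodel}$. Deadlock-freedom provides $e'\in\executionsofcfsms{\cfsms}{\acommunicationmodel}$ with $e\prefixorder e'$; since $e$ is a prefix of $e'$, each process's projection in $\mscof{e}$ is a prefix of the corresponding projection in $\mscof{e'}$ and the source maps agree, so $\mscof{e}\prefixorder\mscof{e'}\in\mscsofcfsms{\cfsms}{\acommunicationmodel}$, whence $M\in\prefixclosureof{\mscsofcfsms{\cfsms}{\acommunicationmodel}}$. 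This direction uses neither causal-closedness nor synchrony.

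The converse is the substantial direction. Assume the MSC inclusion, take $e\in\executionsofcfsms{\acceptcompletion{\cfsms}}{\acommunicationmodel}$ and put $M=\mscof{e}\in\mscsofcfsms{\acceptcompletion{\cfsms}}{\acommunicationmodel}$. The hypothesis yields $M'\in\mscsofcfsms{\cfsms}{\acommunicationmodel}$ with $M\prefixorder M'$, say $M'=\mscof{f}$ with $f\in\executionsofcfsms{\cfsms}{\acommunicationmodel}$. The technical core is the lemma: if $M\prefixorder M'$, then $\happensbeforestrictof{M'}$ restricted to $\eventsof{M}$ equals $\happensbeforestrictof{M}$, and no event of $\eventsof{M'}\setminus\eventsof{M}$ happens-before an event of $\eventsof{M}$ in $M'$. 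It is proved by induction on the length of a happens-before chain of $M'$ that ends at an event $y$ of $M$: its last step is either a same-process step, whose predecessor has a smaller index on the same process and so still lies in the prefix $\eventsof{M}$, or a matching step $\source_{M'}(y)\happensbeforestrict y$ with $\source_{M'}(y)=\source_{M}(y)\in\eventsof{M}$ (because $M\prefixorder M'$); in both cases the induction hypothesis keeps the whole chain inside $\eventsof{M}$. Using this, the linearisation $e$ of $M$ extends to a linearisation $e'$ of $M'$: order $\eventsof{M}$ as in $e$, then append the events of $\eventsof{M'}\setminus\eventsof{M}$ along any linear extension of $\happensbeforestrictof{M'}$; the lemma ensures the result refines $\happensbeforestrictof{M'}$. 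By construction $e\prefixorder e'$, and since $\mscof{e'}=M'=\mscof{f}$ we have $\projofon{e'}{p}=\projofon{f}{p}\in\languageofnfa{\acfsm_p}$ for all $p$.

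The one remaining point, $e'\in\executionsofmodel{\acommunicationmodel}$, is where the two cases of the statement genuinely differ, and which I expect to be the main obstacle. If $\acommunicationmodel$ is causally-closed, then $f$ witnesses $M'\in\mscsetofmodel{\acommunicationmodel}$, so $\linearisationsof{M'}{\acommunicationmodel}=\linearisationsof{M'}{}$ and the $e'$ constructed above lies in $\executionsofmodel{\acommunicationmodel}$ regardless of how its new events were ordered. If $\acommunicationmodel=\synchmodel$, which is \emph{not} causally-closed, one builds $e'$ synchronously instead: append each fresh message exchange as a consecutive send--receive block, the blocks ordered by a linear extension of the message partial order of $M'$ restricted to the new messages. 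This is legitimate because synchronous executions are orphan-free, so $M$ has no unmatched send; hence every message of $M'$ that is not already in $M$ has its send event outside $\eventsof{M}$, and the lemma then forbids such a message to be causally below a message of $M$, so all new blocks may indeed be placed after $e$. In both cases $e'\in\executionsofcfsms{\cfsms}{\acommunicationmodel}$ with $e\prefixorder e'$, and Remark~\ref{rem:equivalent-formulation-of-deadlock-free-cfsms} concludes.
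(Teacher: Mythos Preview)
Your proof is correct and follows essentially the same approach as the paper: both directions reduce to the execution-level characterisation of Remark~\ref{rem:equivalent-formulation-of-deadlock-free-cfsms}, and the substantive backward direction builds a linearisation of $M'$ by placing the events of $M$ first (ordered as in $e$) and the remaining events afterwards, invoking causal-closedness to conclude that this linearisation lies in $\acommunicationmodel$.

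The only notable difference is in the $\synchmodel$ case. The paper orders the new events according to the witness execution $f$ of $M'$ and argues that, since $M$ is orphan-free, each message lies entirely inside or entirely outside $\eventsof{M}$, so the restriction of $f$ to the new events is still a synchronous sequence; the concatenation of two synchronous sequences is synchronous. You instead reconstruct a synchronous ordering of the new events from scratch, by linearly extending the message partial order and emitting each new message as a send--receive block. Both arguments are valid; the paper's is more uniform (the same construction serves both cases, only the justification differs), while yours is more explicit. Your downward-closure lemma for $\happensbeforestrict$ under MSC prefixes is exactly what the paper uses implicitly when verifying that the constructed order refines $\happensbeforestrictof{M'}$.
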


\begin{proof}
    We show each implication separately.
    \begin{description}
        \item[$\Rightarrow$:]
        if $\cfsms$ is deadlock-free: $\executionsof{\acceptcompletion{\cfsms}}{\acommunicationmodel}\subseteq\prefixclosureof{\executionsof{\cfsms}{\acommunicationmodel}}$, 
        then 
        $\msclanguageof{\acceptcompletion{\cfsms}}{\acommunicationmodel}\subseteq
        \prefixclosureof{\msclanguageof{\cfsms}{\acommunicationmodel}}$.
        For this implication, $\acommunicationmodel$ can be any communication model.
        Let $\msc\in\msclanguageof{\acceptcompletion{\cfsms}}{\acommunicationmodel}$, 
       we show that $\msc\in\prefixclosureof{\msclanguageof{\cfsms}{\acommunicationmodel}}$.
        By definition, there is an execution $e\in\executionsof{\acceptcompletion{\cfsms}}{\acommunicationmodel}$ such that $\msc=\mscof{e}$.
        By hypothesis, there is a completion $e' \in\executionsof{\cfsms}{\acommunicationmodel}$ of $e$.
        By definition, $\mscof{e'}\in\msclanguageof{\acceptcompletion{\cfsms}}{\acommunicationmodel}$,
        so $\msc\in\prefixclosureof{\msclanguageof{\cfsms}{\acommunicationmodel}}$.

        \item[$\Leftarrow$:] for  $\acommunicationmodel$ causally-closed:
        if  
        $\msclanguageof{\acceptcompletion{\cfsms}}{\acommunicationmodel}\subseteq
        \prefixclosureof{\msclanguageof{\cfsms}{\acommunicationmodel}}$,
        then $\cfsms$ is deadlock-free.
        Let $e\in\executionsof{\acceptcompletion{\cfsms}}{\acommunicationmodel}$ be an execution,
        we show that $e$ has a completion in $\executionsof{\cfsms}{\acommunicationmodel}$.
        By definition, $\mscof{e}\in\msclanguageof{\acceptcompletion{\cfsms}}{\acommunicationmodel}$.
        By hypothesis, there is a MSC $\msc'\in\msclanguageof{\cfsms}{\acommunicationmodel}$
        such that $\mscof{e}\prefixordermsc\msc'$. By definition of $\prefixordermsc$ on MSCs,
        there are two executions $e_1,e'$ such that 
        $e_1\prefixorder e'$, $\mscof{e'}=\msc'$, 
        and $\mscof{e_1}=\mscof{e}$.
        Let $<$ be the binary relation on $\eventsof{M'}$ 
        defined as $< \eqdef \torderstrictof{e}\cup <_1\cup <_2$, where:
        \begin{itemize}
            \item $\event <_1<\event'$ if $\event\not\in\eventsof{M}$,
            $\event'\not\in\eventsof{M}$ and $\event <_{e'} \event'$;
            \item $\event <_2\event'$ if $\event\in\eventsof{M}$ and $\event'\not\in\eventsof{M}$.
        \end{itemize}
        Note that $<_1$ and $<_2$ are transitive. We claim that $<$
        is also transitive. Indeed, 
        $$
        \begin{array}{rcl}
           \torderof{e}\cdot <_1 & = & <_1\\
            <_1\cdot <_2 & = & <_1 \\
            \torderof{e}\cdot<_2 & = & \emptyset \\
            (<_1\cup <_2)\cdot \torderof{e} & = & \emptyset \\
            (\torderof{e}\cup <_2)\cdot <_1 & = & \emptyset \\
        \end{array}
        $$
        Moreover, $<$ is irreflexive, because $\torderof{e}$ is irreflexive and $<_1$ and $<_2$ are irreflexive.
        Let $\leq$ denote the reflexive closure of $<$.
        Then this is a partial order on $\eventsof{M'}$.
        By construction, $\leq$ is actually a total order on $\eventsof{M'}$.       
        Finally, we claim that $\leq$ is a linearisation of $M'$, i.e., $\porderof{M'}\subseteq \leq$.
        Indeed, assume that $\event\porderstrictof{M'}\event'$, we show that $\event< \event'$:
        \begin{itemize}
            \item if $\event\in\eventsof{M}$ and $\event'\in\eventsof{M}$,
            then $\event\porderstrictof{M} \event'$ (as $M$ is a prefix of $M'$), hence $\event\torderof{e} \event'$ 
            (because $e$ is a linearisation of $M$),
            and finally $\event < \event'$ by definition of $<$.
            
            \item if $\event\not\in\eventsof{M}$ and $\event'\not\in\eventsof{M}$,
            then $\event\torderof{e'} \event'$ (because $e'$ is a linearisation of $M'$), 
            therefore $\event <_1 \event'$ by definition of $<_1$, and finally $\event < \event'$ by definition of $<$.

            \item if $\event\in\eventsof{M}$ and $\event'\not\in\eventsof{M'}$,
            then $\event<_2 \event'$ by definition of $<_2$, therefore $\event < \event'$ by definition of $<$.
        \end{itemize}
        Therefore, $\leq$ is a linearisation of $M'$. Let $e''$ be the execution associated to $\leq$.
        Then
        \begin{itemize}
            \item $e\prefixorder e''$ by definition of $\leq$;
            \item $\mscof{e''}=M'$ as $\leq$ is a linearisation of $M'$;
            \item $e''\in\executionsof{\cfsms}{\acommunicationmodel}$ because $e''$ is a linearisation of $M'$, $M'\in\msclanguageof{\cfsms}{\acommunicationmodel}$,
            and $\acommunicationmodel$ is causally closed by hypothesis.
        \end{itemize}
        Altogether, we have shown that $e$ has a completion in $\executionsof{\cfsms}{\acommunicationmodel}$, which ends the proof of 
        the reverse implication.

        \item[$\Leftarrow$:]   for $\acommunicationmodel=\synchmodel$:
            the proof is similar to previous case. We define the linearisation $\leq$ of $M'$ 
            in the exactly  same way, but we cannot use the fact that $\acommunicationmodel$ is causally closed.
            Instead, we use the fact that both $<_e$ and  $<_1$ are synchronous linearisations,
            and therefore $<$ is a synchronous linearisation as the concatenation of two synchronous linearisations.
    \qedhere
    \end{description}
  \end{proof}


	\section{Global Types}\label{sec:global_types}

%
%
%
%
%
%

We now introduce global types. Our definition deviates from the standard one as it allow for more liberal behaviours. For instance, we can have mixed choice, and when considering non-mixed choice we allow global types to send messages from different participants (meaning that our global types can be non sender-driven). This relaxation from the standard definitions, gives the possibility to implement some parallelism, see example in the Introduction. 

In our setting, global types are automata over an alphabet that represent message exchanges that we call arrows. Such automata describe a language of MSCs. 
An \emph{arrow} is a triple $(p,q,m)\in\Procs\times\Procs\times\Messages$ with $p\neq q$;
we often write $\marrow{p}{q}{m}$ instead of $(p,q,m)$, and write $\labelalphabet$
to denote the finite set of arrows.

\begin{defi}[Global Type]
    A global type $\gt$ is a DFA over the alphabet $\labelalphabet$.
\end{defi}

\begin{rem}
    In order to make a connection with common definitions of global types, we can think of a global type as 
    a term defined by the following grammar:
    $$
    \gt ::= \mathsf{end} \mid \gtlabel{p}{q}{m};\ \gt \mid \gt+\gt \mid X \mid \mathsf{rec}~X.\gt
    $$
    with a few obvious syntactic restrictions to immediately get a \emph{deterministic} finite state automaton.
\end{rem}

\begin{figure}[t]
    \begin{tikzpicture}
    \node[state, initial, initial text={}, accepting] (q0) at (0,0) {$q_0$};
    \node[state] (q1) at (2,0) {$q_1$};
    \node[state] (q2) at (2,2) {$q_2$};
    \draw[->, bend left=20] (q0) edge node[above] {$\gtlabel{p}{q}{a}$} (q1);
    \draw[->] (q1) edge node[right] {$\gtlabel{q}{r}{b}$} (q2);
    \draw[->, bend left=20] (q1) edge node[below] {$\gtlabel{q}{p}{c}$} (q0);
    \draw[->, bend right=20] (q2) edge node[above,sloped] {$\gtlabel{r}{p}{d}$} (q0);
\end{tikzpicture}
    \caption{An example of a global type with three processes $p$, $q$, and $r$, and four messages $a$, $b$, $c$, and $d$.}\label{fig:ex-global-type}
\end{figure}
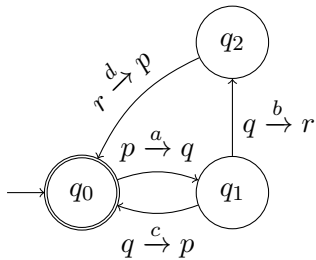

\begin{exa}\label{ex:global-type}
    The global type in Figure~\ref{fig:ex-global-type} could be defined syntactically as
    $$
    \mathsf{rec}~X.\ \mathsf{end}+\gtlabel{p}{q}{a};\ (\gtlabel{q}{p}{c};\ X+\gtlabel{q}{r}{b};\ \gtlabel{r}{p}{d};\ X).
    $$
\end{exa}

We quickly recall the definition of sender-driven choice.

\begin{defi}[Sender-driven choice]
    A global type $\gt$ has sender-driven choices
    if for all state $s$ of $\gt$, if $\gtlabel{p}{q}{m}$ and
    $\gtlabel{p'}{q'}{m'}$ label two transitions outgoing from $s$,
    then $p=p'$.
\end{defi}

\begin{exa}
    The global type in Figure~\ref{fig:ex-global-type} has sender-driven choices, as all choices are made by the sender of the message.
    However, the global type on Figure~\ref{fig:mqtt-subtype} does not have sender-driven choices, as the choice between $\gtlabel{c_1}{b}{p}$ and $\gtlabel{c_2}{b}{p}$ is made by the receiver $b$ in the initial state.
\end{exa}



The projection of a global type $\gt$ on a process $p$ is the CFSM $\gt_p$ obtained by 
replacing each arrow $\marrow{q}{r}{\msg}$ of a transitions
of $\gt$ by the corresponding action of $p$ (either $\send{p}{r}{\msg}$ if $p=q$, or $\recv{q}{p}{\msg}$ if $p=r$, 
or $\varepsilon$ otherwise).

\begin{defi}[Projected system of CFSMs]\label{def:projected-system}
    The projected system of CFSMs, $\projectionof{\gt}$, 
    associated to a global type $\gt$ is the tuple $(\gt_p)_{p\in\Procs}$.
\end{defi}

\begin{exa}\label{ex:projection}
    Consider the  system of CFSMs  in Figure~\ref{fig:cfsm_ex}. This system can be obtained by projecting the  global type  of Figure~\ref{fig:ex-global-type}.
\end{exa}

For every system, we can associate a global type by computing its synchronous product. 
Let $\cfsms=(\mathcal A_p)_{p\in\Procs}$ be a system of CFSMs, where 
$\mathcal A_p=(L_p,\Actp,\delta_p,l_{0,p},F_p)$ is the CFSM associated to process $p$.
The \emph{synchronous product} of $\cfsms$ is the NFA $\preproductof{\cfsms}=(L,\labelalphabet,\delta,l_{0},F)$, where
$(L,\labelalphabet,\delta,l_{0},F)$ where $L=\Pi_{p\in\Procs}L_p$, 
$l_0=(l_{0,p})_{p\in\Procs}$, $F=\Pi_{p\in\Procs}F_p$, and
$\delta$ is the transition relation defined by
\begin{itemize}
    \item
$(\vec l,\marrow{p}{q}{m},\vec l')\in\delta$ if
$(l_p,\send{p}{q}{m},l'_p)\in\delta_p$,
$(l_q,\recv{p}{q}{m},l'_q)\in\delta_q$,
and $l'_r=l_r$ for all $r\not\in\{p,q\}$, and
    \item $(\vec l,\varepsilon,\vec l')\in\delta$ if
    there is $p\in\Procs$ such that $(l_p,\varepsilon,l'_p)\in\delta_p$, and $l_q=l'_q$ for all $q\neq p$.
\end{itemize}

Note that, in general,  $\preproductof{\cfsms}$ is not a global type, as it may be non-deterministic. 
However, one can compute its determinisation $\productof{\gt}\eqdef\detof{\preproductof{\cfsms}}$ by standard powerset construction obtaining a  global type.

We write $\labeltoexec{w}$ to denote the synchronous execution coded by the sequence of
arrows $w$, i.e., the execution obtained by replacing each arrow $\marrow{p}{q}{m}$ of $w$ with the 
execution $\send{p}{q}{m}\cdot\recv{p}{q}{m}$.
We write $\labeltomsc{w}$ to denote the MSC $\mscof{\labeltoexec{w}}$.

A global type defines a language of MSCs in two different ways, one existential and one universal.
Let $\labellanguageof{\gt}$ be the set of 
sequences of arrows $w$ accepted by $\gt$.
Note that for 
$w\in\Arrows^*$, the function $w \mapsto\labeltomsc{w}$ with $ \labeltomsc{w}\in\mscsetofmodel{\synchmodel}$ is not injective, as two arrows with disjoint pairs of processes
commute. We write $w_1\sim w_2$ if 
$\labeltomsc{w_1}=\labeltomsc{w_2}$, and
$[w]$ for the equivalence class of $w$ wrt $\sim$.
The existential MSC language $\existentialmsclanguageof{\gt}$ of a global type $\gt$ is the set 
of MSCs that admit at least one representation as a sequence of arrows in $\labellanguageof{\gt}$,
and the universal MSC language $\universalmsclanguageof{\gt}$ of a global type $\gt$ is the set
of MSCs whose representations as a sequence of arrows are 
all in $\labellanguageof{\gt}$:
\iftwocolumns
$$
\begin{array}{rl}
\existentialmsclanguageof{\gt}\eqdef & \{\labeltomsc{w}\mid w\in\labellanguageof{\gt}\}\\
\universalmsclanguageof{\gt}\eqdef & \{\labeltomsc{w}\mid [w]\subseteq\labellanguageof{\gt}\}.
\end{array}
$$
\else
$$
\existentialmsclanguageof{\gt}\eqdef \{\labeltomsc{w}\mid w\in\labellanguageof{\gt}\}
\qquad
\universalmsclanguageof{\gt}\eqdef\{\labeltomsc{w}\mid [w]\subseteq\labellanguageof{\gt}\}.
$$
\fi

\begin{defi}[Commutation-closed]
    A global type $\gt$ is \emph{commutation-closed} if 
    $$
    \existentialmsclanguageof{\gt}=\universalmsclanguageof{\gt}.
    $$
\end{defi} 
In that case, we write $\msclanguageofcc{\gt}{}$ for the common language.

\begin{exa}
The global type in Figure~\ref{fig:mqtt-globaltype} is commutation-closed.
On the other hand, the global type 
$$
\gtlabel{p_1}{q}{m_1};\ \gtlabel{p_2}{q}{m_2};\ \mathsf{end}
$$
is not commutation-closed, as the two arrows $\gtlabel{p_1}{q}{m_1}$ and $\gtlabel{p_2}{q}{m_2}$ commute, 
and $\existentialmsclanguageof{\gt}=\{\mscof{m_1m_2}\}$
while $\universalmsclanguageof{\gt}=\emptyset$.
\end{exa}

\begin{prop} If $\#\Procs\leq 3$, then $\gt$ is commutation-closed.
\end{prop}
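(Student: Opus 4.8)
The inclusion $\universalmsclanguageof{\gt}\subseteq\existentialmsclanguageof{\gt}$ is immediate for every global type: if $\labeltomsc{w}\in\universalmsclanguageof{\gt}$, that is $[w]\subseteq\labellanguageof{\gt}$, then in particular $w\in[w]\subseteq\labellanguageof{\gt}$, so $\labeltomsc{w}\in\existentialmsclanguageof{\gt}$. The plan is therefore to prove the converse inclusion $\existentialmsclanguageof{\gt}\subseteq\universalmsclanguageof{\gt}$, and for this it suffices to show that, when $\#\Procs\le 3$, the map $w\mapsto\labeltomsc{w}$ is \emph{injective} on $\Arrows^*$: injectivity means $[w]=\{w\}$ for every $w$, so $w\in\labellanguageof{\gt}$ gives $[w]=\{w\}\subseteq\labellanguageof{\gt}$, i.e. $\labeltomsc{w}\in\universalmsclanguageof{\gt}$.

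The combinatorial heart of the argument is that an arrow $\marrow{p}{q}{m}$ mentions exactly two distinct processes, so when $\#\Procs\le 3$ any two arrows $\marrow{p}{q}{m}$ and $\marrow{p'}{q'}{m'}$ share a process, since $\#(\{p,q\}\cap\{p',q'\})\ge 2+2-3=1$. In particular the configuration the paper identifies as the only source of non-injectivity --- two commuting arrows with disjoint process sets --- never arises. To obtain injectivity I would reconstruct $w$ from $M=\labeltomsc{w}$. Since $\labeltoexec{w}$ is a synchronous execution it is orphan-free, hence so is $M$, and thus the pairs $(s,r)$ of a send and a receive event of $M$ with $\source(r)=s$ are in bijection with the arrow occurrences $a_1,\dots,a_n$ of $w$; if $(s,r)$ is the pair of $a_i=\marrow{p_i}{q_i}{m_i}$ then $\processof{s}=p_i$, $\processof{r}=q_i$ and $\messageof{s}=m_i$, so each arrow (with multiplicity) is recovered from $M$ alone. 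It remains to recover their order. Given occurrences $a_i\ne a_j$, choose a process $c$ common to both (it exists by the count above); in $\labeltoexec{w}$ the block $\send{p_i}{q_i}{m_i}\recv{p_i}{q_i}{m_i}$ associated with $a_i$ is contiguous, so the two events of $a_i$ lie entirely before, or entirely after, the two events of $a_j$. Hence on the timeline of $c$ in $M$ the $c$-event of $a_i$ precedes the $c$-event of $a_j$ exactly when $i<j$; this comparison does not depend on the chosen common process $c$, it is a strict total order on the occurrences, and it coincides with their order in $w$. Therefore $M$ determines $w$ uniquely, which is the claimed injectivity.

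The point that I expect to need the most care is the bookkeeping in the last step: checking that the comparison is well defined when $a_i$ and $a_j$ involve the same two processes (so there are two candidate processes $c$), and that it behaves correctly when the same arrow occurs several times in $w$. In both cases contiguity of each block is exactly what forces the two events of $a_i$ to sit on the same side of both events of $a_j$ on every shared timeline, so the reconstruction is unambiguous. Once injectivity is established, the proposition follows in one line as explained in the first paragraph.
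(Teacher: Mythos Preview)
Your proposal is correct and rests on the same key observation as the paper: with at most three processes, any two arrows $\marrow{p}{q}{m}$ and $\marrow{p'}{q'}{m'}$ must share a process by pigeonhole, hence do not commute. The paper's proof is literally this one line, leaving implicit what you spell out at length---namely that non-commutation of every pair of arrows forces $[w]=\{w\}$ and hence $\existentialmsclanguageof{\gt}=\universalmsclanguageof{\gt}$; your careful reconstruction of $w$ from $\labeltomsc{w}$ is a sound way to justify that implicit step, though arguably more than is needed once one accepts that adjacent arrows sharing a process cannot be swapped in the MSC.
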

\begin{proof}
Any two arrows share at least one process and therefore do not commute.
\end{proof}




The following properties/lemma will be useful in the rest of the paper. Recall that $\otimes$ is the product of NFAs, or here DFAs, mimicking language intersection over words.

\begin{prop}\label{prop:universal-existential-synch-inclusion}
    For all global type $\gt$, 
    $
    \universalmsclanguageof{\gt} \subseteq \existentialmsclanguageof{\gt} \subseteq \msclanguageofsystem{\projectionof{\gt}}{\synchmodel}.
    $
\end{prop}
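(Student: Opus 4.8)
The plan is to establish the two inclusions separately, the first being almost immediate from the definitions and the second requiring a small argument relating acceptance of a sequence of arrows in $\gt$ to a synchronous execution of the projected system.

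For the first inclusion $\universalmsclanguageof{\gt}\subseteq\existentialmsclanguageof{\gt}$: take $M\in\universalmsclanguageof{\gt}$. By definition there is a sequence of arrows $w$ with $M=\labeltomsc{w}$ and $[w]\subseteq\labellanguageof{\gt}$. In particular $w\in[w]\subseteq\labellanguageof{\gt}$, so $w$ itself is a representation of $M$ lying in $\labellanguageof{\gt}$, whence $M\in\existentialmsclanguageof{\gt}$. (One needs here only that $\sim$ is reflexive, which is clear since $\labeltomsc{\cdot}$ is a function.)

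For the second inclusion $\existentialmsclanguageof{\gt}\subseteq\msclanguageofsystem{\projectionof{\gt}}{\synchmodel}$: take $M\in\existentialmsclanguageof{\gt}$, so $M=\labeltomsc{w}$ for some $w=a_1\cdots a_n\in\labellanguageof{\gt}$ with each $a_i=\marrow{p_i}{q_i}{m_i}$. Let $e=\labeltoexec{w}$ be the synchronous execution coded by $w$, i.e. the concatenation $\send{p_1}{q_1}{m_1}\recv{p_1}{q_1}{m_1}\cdots\send{p_n}{q_n}{m_n}\recv{p_n}{q_n}{m_n}$; by construction $e\in\executionsofmodel{\synchmodel}$ and $\mscof{e}=M$. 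It remains to check $e\in\executionsofcfsms{\projectionof{\gt}}{\synchmodel}$, i.e. that for each process $r$ the projection $\projofon{e}{r}$ is accepted by the CFSM $\gt_r$. Since $w$ is accepted by $\gt$, there is an accepting run $s_0\xrightarrow{a_1}s_1\xrightarrow{a_2}\cdots\xrightarrow{a_n}s_n$ in the DFA $\gt$. The projection $\gt_r$ is obtained from $\gt$ by relabelling each transition $s_{i-1}\xrightarrow{a_i}s_i$ with $\send{p_i}{q_i}{m_i}$ if $r=p_i$, with $\recv{p_i}{q_i}{m_i}$ if $r=q_i$, and with $\varepsilon$ otherwise; hence the same sequence of states is a run of $\gt_r$ whose label is exactly $\projofon{e}{r}$ (the $\varepsilon$-transitions are absorbed, and the remaining actions are precisely those of $r$, in order). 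Since $s_n$ is accepting in $\gt$, it is accepting in $\gt_r$, so $\projofon{e}{r}\in\languageofnfa{\gt_r}$. As this holds for every $r$, we get $e\in\executionsofcfsms{\projectionof{\gt}}{\synchmodel}$ and therefore $M=\mscof{e}\in\msclanguageofsystem{\projectionof{\gt}}{\synchmodel}$.

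The only mildly delicate point — the main obstacle, such as it is — is the bookkeeping in the second inclusion: one must check that relabelling a run of $\gt$ along $w$ yields a run of $\gt_r$ whose trace coincides with $\projofon{e}{r}$, handling the $\varepsilon$-transitions and the fact that $\projofon{\labeltoexec{w}}{r}$ keeps for each arrow $a_i$ exactly one action of $r$ (a send if $r$ is the sender, a receive if $r$ is the receiver, nothing otherwise), in the order given by $w$. This is exactly how projection of global types and the encoding $\labeltoexec{\cdot}$ are defined, so it is a routine verification rather than a real difficulty.
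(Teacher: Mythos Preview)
Your proof is correct and follows essentially the same approach as the paper's: the first inclusion is immediate from the definitions, and for the second you take an accepting run of $\gt$ on $w$ and project it to obtain an accepting run of each $\gt_r$ whose label is $\projofon{\labeltoexec{w}}{r}$. Your version is slightly more explicit about the bookkeeping (spelling out why $\projofon{e}{r}$ matches the relabelled run), but the argument is the same.
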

\begin{proof}
    The first inclusion is immediate from the definitions.
    For the second inclusion, let $M\in \existentialmsclanguageof{\gt}$, we show that $M\in\msclanguageofsystem{\projectionof{\gt}}{\synchmodel}$.
    By definition of $\existentialmsclanguageof{\gt}$, there is a word $w\in\languageofnfa{\gt}$ such that
    $\mscof{w}=M$. Let $\rho$ be an accepting run of $\gt$ for $w$. For every $p\in\Procs$, let $\rho_p$ denote the run of $\gt_p$ (the machine of $p$ in 
    $\projectionof{\gt}$) obtained by projecting $\rho$: note that we kept $\varepsilon$ transitions in $\gt_p$, see Definition~\ref{def:projected-system},
    so $\rho_p$ is obviously defined. Then $\rho_p$ is an accepting run of $\gt_p$, therefore 
    $M\in\msclanguageof{\gt}{\synchmodel}$ (by Definition~\ref{def:executions-of-cfsms}).
\end{proof}

\begin{lem}\label{lem:product-of-gt}
    Let $\gt_1$ and $\gt_2$ be two global types, with $\gt_2$ commutation-closed. Then
    $$
    \existentialmsclanguageof{\gt_1\otimes\gt_2}=\existentialmsclanguageof{\gt_1}\cap\msclanguageofcc{\gt_2}.
    $$
\end{lem}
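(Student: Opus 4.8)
The plan is to prove the two inclusions separately, unfolding the definition $\existentialmsclanguageof{\gt}=\{\labeltomsc{w}\mid w\in\labellanguageof{\gt}\}$ and using $\languageofnfa{\gt_1\otimes\gt_2}=\languageofnfa{\gt_1}\cap\languageofnfa{\gt_2}$ together with the hypothesis that $\gt_2$ is commutation-closed, so that $\msclanguageofcc{\gt_2}=\existentialmsclanguageof{\gt_2}=\universalmsclanguageof{\gt_2}$.

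First I would prove $\subseteq$. Let $M\in\existentialmsclanguageof{\gt_1\otimes\gt_2}$. By definition there is a word $w\in\labellanguageof{\gt_1\otimes\gt_2}=\labellanguageof{\gt_1}\cap\labellanguageof{\gt_2}$ with $\labeltomsc{w}=M$. Since $w\in\labellanguageof{\gt_1}$, we get $M\in\existentialmsclanguageof{\gt_1}$. Since $w\in\labellanguageof{\gt_2}$, we get $M\in\existentialmsclanguageof{\gt_2}=\msclanguageofcc{\gt_2}$, using commutation-closedness. Hence $M\in\existentialmsclanguageof{\gt_1}\cap\msclanguageofcc{\gt_2}$.

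For the reverse inclusion $\supseteq$, let $M\in\existentialmsclanguageof{\gt_1}\cap\msclanguageofcc{\gt_2}$. Since $M\in\existentialmsclanguageof{\gt_1}$, there is some $w\in\labellanguageof{\gt_1}$ with $\labeltomsc{w}=M$, i.e. $[w]$ is precisely the set of arrow-words representing $M$. Since $M\in\msclanguageofcc{\gt_2}=\universalmsclanguageof{\gt_2}$, by definition of the universal MSC language \emph{every} representation of $M$ is in $\labellanguageof{\gt_2}$; in particular $[w]\subseteq\labellanguageof{\gt_2}$, so $w\in\labellanguageof{\gt_2}$. Therefore $w\in\labellanguageof{\gt_1}\cap\labellanguageof{\gt_2}=\labellanguageof{\gt_1\otimes\gt_2}$, and $\labeltomsc{w}=M$ gives $M\in\existentialmsclanguageof{\gt_1\otimes\gt_2}$.

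The only genuine subtlety — and the step where the commutation-closedness hypothesis is essential — is the reverse inclusion: a witness $w$ for $M$ drawn from $\gt_1$ need not a priori be the \emph{same} arrow-linearisation that $\gt_2$ happens to accept, since $\gt_2$ might accept $M$ only through a commutation-equivalent word $w'\sim w$. This is exactly why we need $\universalmsclanguageof{\gt_2}=\existentialmsclanguageof{\gt_2}$: it upgrades "$\gt_2$ accepts some representation of $M$" to "$\gt_2$ accepts \emph{all} representations of $M$", and in particular the specific $w$ coming from $\gt_1$. I expect no other obstacle; the argument is otherwise a routine unfolding of definitions.
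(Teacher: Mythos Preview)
Your proof is correct and follows essentially the same approach as the paper: both prove the two inclusions separately, and for the nontrivial $\supseteq$ direction both pick a witness $w$ for $M$ in $\gt_1$ and then use commutation-closedness of $\gt_2$ to conclude that this particular $w$ is also accepted by $\gt_2$. The only cosmetic difference is that the paper phrases the key step via a second witness $w_2\sim w$ and then invokes commutation-closedness, whereas you go directly through $\universalmsclanguageof{\gt_2}$; these are the same argument.
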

\begin{proof}
    We reason by double inclusion.
    \begin{itemize}
        \item[$\Rightarrow$] Assume that $M\in\existentialmsclanguageof{\gt_1\otimes\gt_2}$.
        By definition, there is a word $w$ such that $M=\mscof{w}$, and $w\in\languageofnfa{\gt_1\otimes\gt_2}$.
        By the definition of the product of global types, $\languageofnfa{\gt_1\otimes\gt_2}=\languageofnfa{\gt_1}\cap\languageofnfa{\gt_2}$.
        In particular, $w\in\languageofnfa{\gt_1}$, thus $\mscof{w}\in\existentialmsclanguageof{\gt_1}$.
        Similarly, $w\in\languageofnfa{\gt_2}$, thus $\mscof{w}\in\existentialmsclanguageof{\gt_2}$.
        Since $\gt_2$ is commutation-closed, $\mscof{w}\in\msclanguageofcc{\gt_2}$.
        Therefore, $\mscof{w}\in\existentialmsclanguageof{\gt_1}\cap\msclanguageofcc{\gt_2}$.
        \item[$\Leftarrow$] Conversely, assume that $M\in\existentialmsclanguageof{\gt_1}\cap\msclanguageofcc{\gt_2}$.
        In particular, $M\in\existentialmsclanguageof{\gt_1}$, which means that there is a word $w_1$ such that 
        $M=\mscof{w_1}$ and $w_1\in\languageofnfa{\gt_1}$.
        Since $M\in\msclanguageofcc{\gt_2}$, there is a word $w_2$ such that $M=\mscof{w_2}$ and $w_2\in\languageofnfa{\gt_2}$.
        Since $\gt_2$ is commutation-closed and $\mscof{w_1}=\mscof{w_2}$, $w_1\in \languageofnfa{\gt_2}$.
        Therefore, $w_1\in\languageofnfa{\gt_1}\cap\languageofnfa{\gt_2}=\languageofnfa{\gt_1\otimes\gt_2}$, 
        which means that $M\in\existentialmsclanguageof{\gt_1\otimes\gt_2}$.
    \end{itemize}
\end{proof}

\begin{prop}\label{prop:product-is-commutation-closed}
    For all system $\cfsms$ of communicating finite state machines, 
    $\productof{\cfsms}$ is commutation-closed and
    \iftwocolumns
    $$
    \msclanguageof{\cfsms}{\synchmodel} = \msclanguageofcc{\productof{\cfsms}}.
    $$
    \else
    $
    \msclanguageof{\cfsms}{\synchmodel} = \msclanguageofcc{\productof{\cfsms}}.
    $
    \fi
\end{prop}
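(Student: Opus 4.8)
The plan is to prove the two claims in tandem: first that $\productof{\cfsms}$ is commutation-closed, i.e.\ $\existentialmsclanguageof{\productof{\cfsms}} = \universalmsclanguageof{\productof{\cfsms}}$, and then that this common language equals $\msclanguageof{\cfsms}{\synchmodel}$. The key observation tying everything together is that, for a synchronous word $w \in \Arrows^*$, whether $w$ is accepted by $\productof{\cfsms} = \detof{\preproductof{\cfsms}}$ depends only on the MSC $\labeltomsc{w}$. This is because an accepting run of $\preproductof{\cfsms}$ on $w$ corresponds exactly to a synchronous execution $\labeltoexec{w}$ each of whose projections $\projofon{\labeltoexec{w}}{p}$ is accepted by $\mathcal A_p$; and two $\sim$-equivalent arrow-words $w_1 \sim w_2$ (which differ by swapping adjacent arrows over disjoint process pairs) induce the same projections onto every process, since such a swap is invisible to each individual machine. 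So I would first establish a lemma: for all $w_1 \sim w_2$, $w_1 \in \labellanguageof{\productof{\cfsms}} \iff w_2 \in \labellanguageof{\productof{\cfsms}}$, equivalently $\labellanguageof{\productof{\cfsms}}$ is closed under $\sim$.

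Given that lemma, commutation-closedness is immediate: for any MSC $M$, if $M \in \existentialmsclanguageof{\productof{\cfsms}}$ then some representative $w$ with $\labeltomsc{w} = M$ lies in $\labellanguageof{\productof{\cfsms}}$; by the closure lemma \emph{all} representatives of $M$ lie in $\labellanguageof{\productof{\cfsms}}$, so $[w] \subseteq \labellanguageof{\productof{\cfsms}}$ and $M \in \universalmsclanguageof{\productof{\cfsms}}$. The reverse inclusion $\universalmsclanguageof{\productof{\cfsms}} \subseteq \existentialmsclanguageof{\productof{\cfsms}}$ is Proposition~\ref{prop:universal-existential-synch-inclusion}. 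Hence the two languages coincide and I may write $\msclanguageofcc{\productof{\cfsms}}$ for the common value.

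For the equality $\msclanguageof{\cfsms}{\synchmodel} = \msclanguageofcc{\productof{\cfsms}}$, I argue by double inclusion on the underlying executions, recalling that $\languageofnfa{\productof{\cfsms}} = \languageofnfa{\preproductof{\cfsms}}$ since determinisation preserves the language. For $\subseteq$: if $M \in \msclanguageof{\cfsms}{\synchmodel}$ then $M = \mscof{e}$ for some synchronous execution $e \in \executionsofcfsms{\cfsms}{\synchmodel}$; a synchronous execution is, up to the $\sim$ identification, a sequence of arrows $w$ with $\labeltoexec{w} = e$ (each send immediately followed by its matching receive), and the acceptance of $\projofon{e}{p}$ by each $\mathcal A_p$ assembles into an accepting run of $\preproductof{\cfsms}$ on $w$ by the definition of the synchronous product; hence $w \in \labellanguageof{\productof{\cfsms}}$ and $M = \labeltomsc{w} \in \existentialmsclanguageof{\productof{\cfsms}} = \msclanguageofcc{\productof{\cfsms}}$. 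For $\supseteq$: if $M \in \msclanguageofcc{\productof{\cfsms}}$, pick $w \in \labellanguageof{\productof{\cfsms}}$ with $\labeltomsc{w} = M$; an accepting run of $\preproductof{\cfsms}$ on $w$ decomposes (by the two clauses defining $\delta$) into accepting runs $\rho_p$ of each $\mathcal A_p$ reading exactly $\projofon{\labeltoexec{w}}{p}$, so $\labeltoexec{w} \in \executionsofcfsms{\cfsms}{\synchmodel}$ and $M = \mscof{\labeltoexec{w}} \in \msclanguageof{\cfsms}{\synchmodel}$.

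The only genuinely delicate point is the closure lemma, and within it the handling of $\varepsilon$-transitions: when I decompose or reassemble a run of $\preproductof{\cfsms}$ along a commutation of two adjacent arrows $\marrow{p}{q}{m}$ and $\marrow{p'}{q'}{m'}$ with $\{p,q\} \cap \{p',q'\} = \emptyset$, I must check that the intervening $\varepsilon$-moves (which each touch a single component) can be reshuffled freely, and that swapping the two arrow-transitions is legitimate precisely because they act on disjoint sets of components and commute as partial functions on the global state $\vec l$. This is a routine but careful bookkeeping argument on runs of the product automaton; once it is in place, everything else is a direct unfolding of the definitions of $\existentialmsclanguageof{\cdot}$, $\universalmsclanguageof{\cdot}$, $\executionsofcfsms{\cdot}{\synchmodel}$, and the synchronous product.
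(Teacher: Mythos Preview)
Your proposal is correct and follows essentially the same approach as the paper: the paper's proof is a one-line remark that runs of synchronous executions $!m_1?m_1\ldots!m_k?m_k$ of $\cfsms$ are in one-to-one correspondence with runs of $m_1\ldots m_k$ in $\preproductof{\cfsms}$, and your closure lemma plus the double inclusion are exactly the detailed unfolding of that correspondence. Your explicit attention to $\varepsilon$-transitions is a useful addition that the paper elides.
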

\begin{proof}
It is sufficient to observe that the runs of the synchronous executions $$!m_1?m_1\ldots!m_k?m_k$$ of $\cfsms$ are in one-to-one correspondence with the 
    runs of the words $m_1\ldots m_k$ of $\preproductof{\cfsms}$.
\end{proof}

When a global type is implemented in a concrete system, its behaviour obviously depend on the chosen communication model. 

\begin{defi}[Global Type Semantics]\label{def:global-type-language}
    Let $\gt$ be a global type and $\acommunicationmodel$ a communication model. 
    The language of $\gt$ in $\acommunicationmodel$ is the set
    $
    \executionsof{\gt}{\acommunicationmodel}\eqdef
    \bigcup\{\linearizationsof{M}{\acommunicationmodel}\mid M\in\existentialmsclanguageof{\gt}\}
    $.
\end{defi}

\begin{exa}\label{ex:global-type-sem1}
    For any communication model $\acommunicationmodel$ that contains $\synchmodel$,
    The global type in Figure~\ref{fig:ex-global-type} has the following semantics:
    $$
    \executionsof{\gt}{\acommunicationmodel} = (!a?a!c?c+!a?a!b?b!d?d)^*.
    $$    
\end{exa}

\begin{exa}
    Consider the global type $$\gt=\mathsf{rec}~X.\mathsf{end}+\gtlabel{p}{q}{m_1};X+\gtlabel{p}{q}{m_2};X.$$
    The semantics of $\gt$ in the communication model $\bagmodel$ is
    the set of executions $e$ such that in any prefix of $e$ the number of $!m_1$ is greater or equal to the number of $?m_1$, 
    and the number of $!m_2$ is greater or equal to the number of $?m_2$, and moreover the numbers of 
    $!m_1$ and $?m_1$ are the same in the whole execution, as well as the numbers of
    $!m_2$ and $?m_2$.
    On the other hand, the semantics of $\gt$ in the communication model $\ppmodel$ (resp. $\causalmodel$) 
    is the set of executions $e$ that moreover contain the same word over $m_1$ and $m_2$
    when retaining only sends on the one hand and only receives on the other hand.
    For instance, $!m_1!m_2?m_2?m_1$ is in $\executionsofcfsms{\gt}{\bagmodel}$ but not in 
    $\executionsofcfsms{\gt}{\ppmodel}$.
\end{exa}

	\section{Realisability}\label{sec:realisability}

We have finally  collected all the definitions needed to introduce our notion of realisability
of global types that is parametric in a given communication model.
\begin{defi}[Deadlock-free realisability]\label{def:realisability}
    A global type $\gt$ is \emph{deadlock-free realisable}\footnote{We sometimes say simply \emph{realisable} instead of \emph{deadlock-free realisable}.}
    in the communication model
    $\acommunicationmodel$
    if the following two conditions hold:
    \begin{description}
    \item[(CC)] $\executionsof{\projectionof{\gt}}{\acommunicationmodel} = 
                \executionsof{\gt}{\acommunicationmodel}$.
    \item[(DF)] $\projectionof{\gt}$ is deadlock-free in $\acommunicationmodel$.
    \end{description}
\end{defi}

Condition~(CC) of Definition~\ref{def:realisability} corresponds to  \emph{session conformance}: the executions of the projected system do not 
deviate from the ones specified by the global type.

Deadlock-free realisability is equivalent to the notion of 
\emph{safe realisability} of ~\cite{DBLP:journals/tcs/AlurEY05} 
when $\acommunicationmodel$ is $\ppmodel$ or $\synchmodel$.
This is not the case for other communication models, our definition better captures the fact that a key factor for 
realisability is deadlock-freedom and deadlock freedom is strongly dependent on the communication model being causally-closed. 

\begin{exa}
    The global type in Figure~\ref{fig:ex-global-type} is deadlock-free realisable in any communication model that contains $\synchmodel$.
    Indeed, the projected system of CFSMs is the one in Figure~\ref{fig:cfsm_ex}, which is deadlock-free. 
    Moreover, the executions of the projected system are exactly those of the global type, as shown in Example~\ref{ex:global-type-sem1}.
\end{exa}

\begin{exa}
    The global type $\gtlabel{p_1}{q}{m_1};\ \gtlabel{p_2}{q}{m_2};\ \mathsf{end}$ is realisable in $\synchmodel$ and $\ppmodel$, but its projection is deadlock in $\mbmodel$.
\end{exa}

\begin{exa}
    The global type $\gtlabel{p_1}{q_1}{m_1};\ \gtlabel{p_2}{q_2}{m_2}; \ \mathsf{end}$ is realisable in $\synchmodel$;
    indeed, it is deadlock-free 
    and $\executionsof{\gt}{\synchmodel}=\{!m_1?m_1!m_2?m_2,!m_2?m_2!m_1?m_1\}=\executionsof{\projectionof{\gt}}{\synchmodel}$.
\end{exa}

\begin{exa}
    The global type $\gt$ in Figure~\ref{fig:mqtt-subtype} is deadlock-free realisable in $\synchmodel$.
    Indeed, it can be checked that 
    $\productof{\projectionof{\gt}}$ accepts the same sequences of arrows as 
    $\gt$, up to the commutation of the two last arrows $\gtlabel{b}{c_1}{end}$ and $\gtlabel{b'}{c_2}{end}$. 
    It therefore holds that $\executionsof{\projectionof{\gt}}{\synchmodel} = \executionsof{\gt}{\synchmodel}$.
    Moreover, $\gt$ contains no sink state (all states are coreachable from the accepting state), therefore the projected system is deadlock-free
    in $\synchmodel$.
\end{exa}

The following proposition reformulates conformance as an MSC property.
\begin{prop}
    \label{prop:msc-version-of-cond1-of-realizability}
    Assume $\executionsofmodel{\acommunicationmodel}\supseteq\executionsofmodel{\synchmodel}$.
    Condition~(CC) of Definition~\ref{def:realisability} is equivalent to
    $$
        \mbox{(CC')} \quad \mscsofcfsms{\projectionof{\gt}}{\acommunicationmodel} \subseteq \existentialmsclanguageof{\gt}.
    $$
\end{prop}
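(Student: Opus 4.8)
The plan is to prove the equivalence (CC) $\Leftrightarrow$ (CC') by first reformulating condition (CC) in terms of MSCs, and then exploiting the fact that the right-to-left inclusion is already available from Proposition~\ref{prop:universal-existential-synch-inclusion}. More precisely, I would start by observing that, since $\executionsofmodel{\acommunicationmodel}\supseteq\executionsofmodel{\synchmodel}$, the inclusion $\executionsof{\gt}{\acommunicationmodel}\subseteq\executionsof{\projectionof{\gt}}{\acommunicationmodel}$ holds unconditionally: every synchronous execution witnessing a member of $\existentialmsclanguageof{\gt}$ is also accepted by the projected system (this is essentially the content of Proposition~\ref{prop:universal-existential-synch-inclusion} lifted from MSCs to their linearisations), and the set $\executionsof{\gt}{\acommunicationmodel}$ is by Definition~\ref{def:global-type-language} the union over $M\in\existentialmsclanguageof{\gt}$ of $\linearizationsof{M}{\acommunicationmodel}$, each of which is, by the remark after Definition~\ref{def:executions-of-cfsms}, entirely inside $\executionsof{\projectionof{\gt}}{\acommunicationmodel}$ as soon as one of its elements is. Hence condition (CC) reduces to the single inclusion $\executionsof{\projectionof{\gt}}{\acommunicationmodel}\subseteq\executionsof{\gt}{\acommunicationmodel}$.

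Next I would translate this remaining inclusion into the MSC world. For the forward direction, suppose (CC) holds and take $M\in\mscsofcfsms{\projectionof{\gt}}{\acommunicationmodel}$; then $M=\mscof{e}$ for some $e\in\executionsof{\projectionof{\gt}}{\acommunicationmodel}$, so by (CC) we have $e\in\executionsof{\gt}{\acommunicationmodel}$, which by Definition~\ref{def:global-type-language} means $e\in\linearizationsof{M'}{\acommunicationmodel}$ for some $M'\in\existentialmsclanguageof{\gt}$; but a linearisation of $M'$ has $M'$ as its induced MSC, so $M=\mscof{e}=M'\in\existentialmsclanguageof{\gt}$, giving (CC'). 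For the converse, suppose (CC') holds and take $e\in\executionsof{\projectionof{\gt}}{\acommunicationmodel}$; set $M=\mscof{e}$. Then $M\in\mscsofcfsms{\projectionof{\gt}}{\acommunicationmodel}$, so by (CC') $M\in\existentialmsclanguageof{\gt}$, and $e$ is a linearisation of $M$ lying in $\executionsofmodel{\acommunicationmodel}$, i.e.\ $e\in\linearizationsof{M}{\acommunicationmodel}\subseteq\executionsof{\gt}{\acommunicationmodel}$. Combined with the unconditional reverse inclusion established above, this yields (CC).

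The step I expect to require the most care is verifying that $\executionsof{\gt}{\acommunicationmodel}\subseteq\executionsof{\projectionof{\gt}}{\acommunicationmodel}$ really does hold unconditionally under the hypothesis $\executionsofmodel{\acommunicationmodel}\supseteq\executionsofmodel{\synchmodel}$. Proposition~\ref{prop:universal-existential-synch-inclusion} gives $\existentialmsclanguageof{\gt}\subseteq\msclanguageofsystem{\projectionof{\gt}}{\synchmodel}$, i.e.\ every $M\in\existentialmsclanguageof{\gt}$ has \emph{some} synchronous linearisation accepted by $\projectionof{\gt}$; I then need the remark following Definition~\ref{def:executions-of-cfsms}, which says that membership of an execution in $\executionsof{\projectionof{\gt}}{\acommunicationmodel}$ depends only on its MSC (via the per-process projections), to conclude that \emph{every} linearisation of $M$ in $\acommunicationmodel$ — not just the synchronous one — belongs to $\executionsof{\projectionof{\gt}}{\acommunicationmodel}$. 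This is where the hypothesis $\executionsofmodel{\acommunicationmodel}\supseteq\executionsofmodel{\synchmodel}$ is used: it guarantees the synchronous witness is itself a legal execution of $\acommunicationmodel$, so that the MSC-only argument applies. Everything else is routine unfolding of Definitions~\ref{def:global-type-language} and~\ref{def:executions-of-cfsms}.
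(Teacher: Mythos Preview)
Your proof is correct and follows essentially the same route as the paper's: both directions proceed by unfolding Definitions~\ref{def:global-type-language} and~\ref{def:executions-of-cfsms} and passing between executions and their MSCs. The one noteworthy difference is your treatment of the inclusion $\executionsof{\gt}{\acommunicationmodel}\subseteq\executionsof{\projectionof{\gt}}{\acommunicationmodel}$: you establish it unconditionally up front via Proposition~\ref{prop:universal-existential-synch-inclusion} and the remark after Definition~\ref{def:executions-of-cfsms}, whereas the paper's proof of the $\Leftarrow$ direction invokes (CC') for this step --- which is actually the wrong direction of (CC'), so your argument is in fact more careful here.
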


\begin{proof}
    We establish the equivalence between Condition~(CC) of Definition~\ref{def:realisability} and Condition~(CC').
    We prove the two implications separately.
    \begin{itemize}
        \item[$\Rightarrow$]
            Assume that Condition~(CC) of Definition~\ref{def:realisability} holds.
            Let $M\in\mscsofcfsms{\projectionof{\gt}}{\acommunicationmodel}$; we prove that
            $M\in\existentialmsclanguageof{\gt}$.
            By definition of $\mscsofcfsms{\projectionof{\gt}}{\acommunicationmodel}$, there exists an execution $e\in\executionsof{\projectionof{\gt}}{\acommunicationmodel}$ such that $M=\mscof{e}$. By Condition~(CC) of Definition~\ref{def:realisability}, $e\in\executionsof{\gt}{\acommunicationmodel}$.
            By Definition~\ref{def:global-type-language}, $\mscof{e}\in\existentialmsclanguageof{\gt}$, which shows (CC').
        \item[$\Leftarrow$] 
            Assume that Condition~(CC') holds.
            We prove that Condition~(CC) of Definition~\ref{def:realisability} holds
            by showing the two inclusions.
            \begin{itemize}
                \item Let $e\in\executionsof{\projectionof{\gt}}{\acommunicationmodel}$.
                    By Definition~\ref{def:executions-of-cfsms}, $\mscof{e}\in\mscsofcfsms{\projectionof{\gt}}{\acommunicationmodel}$.
                    By Condition~(CC'), $\mscof{e}\in\existentialmsclanguageof{\gt}$, which means that $e\in\executionsof{\gt}{\acommunicationmodel}$.
                    Thus, $\executionsof{\projectionof{\gt}}{\acommunicationmodel} \subseteq \executionsof{\gt}{\acommunicationmodel}$.
                \item Let $e\in\executionsof{\gt}{\acommunicationmodel}$.
                    By Definition~\ref{def:global-type-language}, $\mscof{e}\in\existentialmsclanguageof{\gt}$.
                    By Condition~(CC'), $\mscof{e}\in\mscsofcfsms{\projectionof{\gt}}{\acommunicationmodel}$, which means that $e\in\executionsof{\projectionof{\gt}}{\acommunicationmodel}$.
                    Thus, $\executionsof{\gt}{\acommunicationmodel} \subseteq \executionsof{\projectionof{\gt}}{\acommunicationmodel}$.
            \end{itemize}
    \end{itemize}
\end{proof}


\begin{exa}
    The global type $\gt$ in Figure~\ref{fig:mqtt-subtype} is
    deadlock-free realisable in $\ppmodel$: the communication model does not change the sets of MSCs of the projected system, i.e. $\msclanguageof{\projectionof{\gt}}{\ppmodel} = \msclanguageof{\projectionof{\gt}}{\synchmodel}$, and the projected system is deadlock-free in $\ppmodel$.
\end{exa}

\subsection{Subtyping}\label{sec:subtyping}
We adopt the following
semantic definition of subtyping.

\begin{defi}[Subtyping]\label{def:subtyping}
    A global type $\gt_1$ is a \emph{subtype} of a global type $\gt_2$ in the communication model $\acommunicationmodel$ if
    $
    \executionsof{\gt_1}{\acommunicationmodel}\subseteq\executionsof{\gt_2}{\acommunicationmodel}.
    $
\end{defi}

\begin{exa}
    The global type 
    $\mathsf{rec}~X.\mathsf{end}+\gtlabel{p}{q}{a}+\gtlabel{q}{p}{c};X$
    is a subtype of the global type
    in Figure~\ref{fig:ex-global-type}
    Indeed, the semantics of $\gt$ is the set of executions $!a?a!c?c^*$, which is contained in 
    the semantics of the other type already given in Example~\ref{ex:global-type-sem1}.
\end{exa}

Interestingly, we can prove that the communication model does not play a role in the definition of subtyping. This follows from the characterisation of global types as set of MSCs.

\begin{thm}
    \label{thm:subtyping-agnostic-communication-model}
    Let $\gt_1$ and $\gt_2$ be two global types, and $\acommunicationmodel$ a communication model
    such that $\executionsofmodel{\acommunicationmodel}\supseteq\executionsofmodel{\synchmodel}$.
    Then $\gt_1$ is a subtype of $\gt_2$ in $\acommunicationmodel$ if and only if
    $\existentialmsclanguageof{\gt_1}\subseteq\existentialmsclanguageof{\gt_2}$.
\end{thm}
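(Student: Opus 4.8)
The statement says that subtyping in $\acommunicationmodel$ (defined via inclusion of execution sets) is equivalent to inclusion of existential MSC languages, for any $\acommunicationmodel$ containing $\synchmodel$. The plan is to prove both directions by relating the two notions through the definition of $\executionsof{\gt}{\acommunicationmodel}$ (Definition~\ref{def:global-type-language}), which is exactly $\bigcup\{\linearizationsof{M}{\acommunicationmodel}\mid M\in\existentialmsclanguageof{\gt}\}$.

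\medskip

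First I would prove the easy direction: if $\existentialmsclanguageof{\gt_1}\subseteq\existentialmsclanguageof{\gt_2}$, then for every $M\in\existentialmsclanguageof{\gt_1}$ we have $M\in\existentialmsclanguageof{\gt_2}$, hence $\linearizationsof{M}{\acommunicationmodel}$ contributes to $\executionsof{\gt_2}{\acommunicationmodel}$ as well; taking the union over all such $M$ gives $\executionsof{\gt_1}{\acommunicationmodel}\subseteq\executionsof{\gt_2}{\acommunicationmodel}$, i.e. $\gt_1$ is a subtype of $\gt_2$. This uses nothing about $\acommunicationmodel$ beyond the definition.

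\medskip

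For the converse — the interesting direction — I would argue that an MSC can be recovered from its executions, using the hypothesis $\executionsofmodel{\acommunicationmodel}\supseteq\executionsofmodel{\synchmodel}$. The key observation is that every MSC $M\in\existentialmsclanguageof{\gt_1}$ is synchronously linearisable: indeed $\existentialmsclanguageof{\gt_1}\subseteq\mscsetofmodel{\synchmodel}$ since $M=\labeltomsc{w}=\mscof{\labeltoexec{w}}$ for some $w\in\labellanguageof{\gt_1}$ and $\labeltoexec{w}$ is by construction a synchronous execution. Hence $\linearizationsof{M}{\synchmodel}\neq\emptyset$, and since $\executionsofmodel{\acommunicationmodel}\supseteq\executionsofmodel{\synchmodel}$ we get $\linearizationsof{M}{\acommunicationmodel}\supseteq\linearizationsof{M}{\synchmodel}\neq\emptyset$. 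So there is an execution $e\in\linearizationsof{M}{\acommunicationmodel}\subseteq\executionsof{\gt_1}{\acommunicationmodel}$. By the subtyping hypothesis, $e\in\executionsof{\gt_2}{\acommunicationmodel}$, so by Definition~\ref{def:global-type-language} there is some $M'\in\existentialmsclanguageof{\gt_2}$ with $e\in\linearizationsof{M'}{\acommunicationmodel}$. Now I need $M=M'$: since $e$ is a linearisation of both $M$ and $M'$, and $\mscof{e}$ is uniquely determined by $e$ (the MSC induced by an execution is unique — it is simply the tuple of process-projections of $e$ together with the lifting of $\source_e$), we have $M=\mscof{e}=M'$. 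Therefore $M\in\existentialmsclanguageof{\gt_2}$, and since $M$ was arbitrary, $\existentialmsclanguageof{\gt_1}\subseteq\existentialmsclanguageof{\gt_2}$.

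\medskip

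The main obstacle — or rather the one point that needs care — is the claim that a linearisation $e$ determines its MSC uniquely, i.e. that if $e$ linearises $M$ and $e$ linearises $M'$ then $M=M'$. This is where one must be precise about the definitions: $\mscof{e}$ is defined as the tuple of per-process subsequences of $e$ with $\source$ lifted accordingly, and if $e\in\linearisationsof{M}{}$ then necessarily $M=\mscof{e}$ because the process-projections and the source function of $M$ are exactly reconstructed from $e$. So $\linearisationsof{M}{}$ and $\linearisationsof{M'}{}$ are disjoint for distinct $M,M'$, which closes the argument. Everything else is routine unfolding of Definition~\ref{def:global-type-language} and the hypothesis on $\acommunicationmodel$.
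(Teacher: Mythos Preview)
Your proof is correct and follows essentially the same route as the paper. The paper's argument is more compressed: it intersects both sides of the subtyping inclusion with $\executionsofmodel{\synchmodel}$ and then identifies $\executionsof{\gt_i}{\acommunicationmodel}\cap\executionsofmodel{\synchmodel}=\executionsof{\gt_i}{\synchmodel}$ with $\existentialmsclanguageof{\gt_i}$ directly (a slightly loose identification of synchronous executions with their MSCs), whereas you unfold this step explicitly by picking an MSC, exhibiting a linearisation in $\acommunicationmodel$, and invoking the uniqueness of $\mscof{e}$ to recover the MSC on the other side. Your version is a bit more careful about the execution/MSC distinction but the underlying idea is identical.
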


\begin{proof}
    Assume $\gt_1$ is a subtype of $\gt_2$ in $\acommunicationmodel$.
    By Definition~\ref{def:subtyping}, we have
    $$\executionsof{\gt_1}{\acommunicationmodel}\cap \executionsofmodel{\synchmodel}\subseteq
    \executionsof{\gt_2}{\acommunicationmodel}\cap \executionsofmodel{\synchmodel}.$$
    Since $\executionsofmodel{\synchmodel}\subseteq\executionsofmodel{\acommunicationmodel}$,
    $$
    \executionsof{\gt_i}{\acommunicationmodel}\cap \executionsofmodel{\synchmodel} = \executionsof{\gt_i}{\synchmodel} = \existentialmsclanguageof{\gt_i}
    $$ 
    by Definition~\ref{def:global-type-language}.
    We conclude that
    $\existentialmsclanguageof{\gt_1}\subseteq\existentialmsclanguageof{\gt_2}$.
    The converse also follows from 
    Definitions~\ref{def:subtyping} and \ref{def:global-type-language}.
\end{proof}

\subsection{Complementability}\label{sec:complementability}

Next, we introduce the notion of complementation for global types, which is a key concept for our decidability results. 

\begin{defi}[Complement of a  Global Type]
    A global type $\comp{\gt}$ is a complement of $\gt$ 
    if $\existentialmsclanguageof{\gt}=
     \mscsetofmodel{\synchmodel}
    \setminus
    \existentialmsclanguageof{\comp{\gt}}$.
    We say that $\gt$ is {\em complementable} if it admits at least one complement.
\end{defi}

Interestingly,  not all global types are complementable. Moreover, we can compute the complement of some already known classes of global types such as 
commutation-closed, $\synchmodel$-realisable  and global types
with sender-driven choice. 


We first give an example of non-complementable global type.

\begin{figure}
    \begin{tikzpicture}
    \draw[->] (0,0) node [above] {p} -- (0,-6);
    \draw[->] (1,0) node [above] {q} -- (1,-6);
    \draw[->] (2,0) node [above] {r} -- (2,-6);
    \draw[->] (3,0) node [above] {s} -- (3,-6);

    \draw[->] (0,-.5) -- node [above] {$m_1$} (1,-.5);
    \draw[->] (0,-1) -- node [above] {$m_1$} (1,-1);
    \node at (.5,-1.2) {$\vdots$};
    \draw[->] (0,-2.1) -- node [above] {$m_1$} (1,-2.1);

    \draw[->] (2,-1.5) -- node [above] {$m_2$} (3,-1.5);
    \draw[->] (2,-2) -- node [above] {$m_2$} (3,-2);
    \node at (2.5,-2.2) {$\vdots$};
    \draw[->] (2,-3.1) -- node [above] {$m_2$} (3,-3.1);

    \draw[->] (0,-3.5) -- node [above] {$m_3$} (1,-3.5);
    \draw[->] (0,-4) -- node [above] {$m_3$} (1,-4);
    \node at (.5,-4.2) {$\vdots$};
    \draw[->] (0,-5.1) -- node [above] {$m_3$} (1,-5.1);

    \draw [decorate,decoration={brace,amplitude=5pt,mirror}] (-.3,-.5) -- node [midway, left=1em] {$k_1$} (-.3,-2.1) ;
    \draw [decorate,decoration={brace,amplitude=5pt,mirror}] (-.3,-3.5) -- node [midway, left=1em] {$k_3$} (-.3,-5.1) ;
    \draw [decorate,decoration={brace,amplitude=5pt}] (3.3,-1.5) -- node [midway, right=1em] {$k_2$} (3.3,-3.1) ;

\end{tikzpicture}
    \caption{The shape of the MSCs in $\msclanguageofcc{\gt_0}{}$ for the global type $\gt_0$ in the proof of Theorem~\ref{thm:not-all-gt-are-complementable}
    }\label{fig:msc-regular}
\end{figure}  

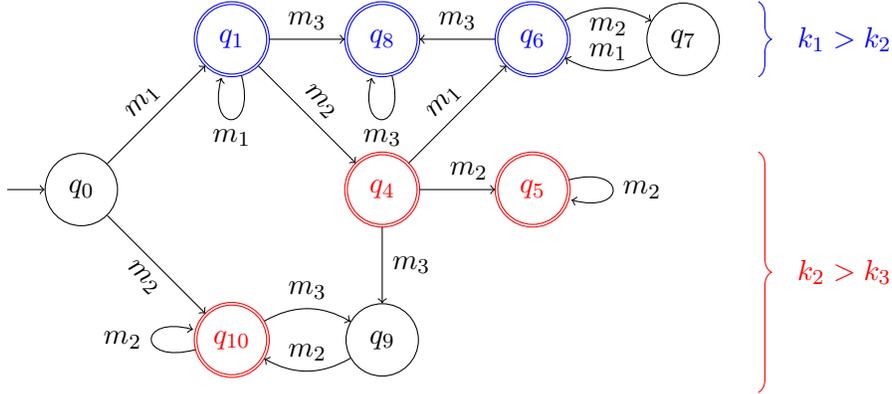
\begin{figure}[t]
    \begin{tikzpicture}

    \node[state, initial, initial text={}] (0) at (0,2) {$q_0$};
    \node[state, accepting, color=blue] (1) at (2,4) {$q_1$};
    \node[state, accepting, color=red] (4) at (4,2) {$q_4$};
    \node[state, accepting, color=red] (5) at (6,2) {$q_5$};
    \node[state, accepting, color=blue] (6) at (6,4) {$q_6$};
    \node[state] (7) at (8,4) {$q_7$};
    \node[state, accepting, color=blue] (8) at (4,4) {$q_8$};
    \node[state] (9) at (4,0) {$q_9$};
    \node[state, accepting, color=red] (10) at (2,0) {$q_{10}$};

    \draw[->] (0) edge node [sloped,above] {$m_1$} (1);
    \draw[->] (0) edge node [sloped,below]{$m_2$} (10);
    \draw[->, loop below] (1) edge node {$m_1$} (1);
    \draw[->] (1) edge node [sloped, above] {$m_2$} (4);
    \draw[->] (1) edge node [above] {$m_3$} (8);
    \draw[->] (4) edge node [sloped, pos=.65, above] {$m_2$} (5);
    \draw[->, loop right] (5) edge node [right] {$m_2$} (5);
    \draw[->] (4) edge node [right] {$m_3$} (9);
    \draw[->] (4) edge node [sloped, above] {$m_1$} (6);
    \draw[->, bend left] (6) edge node [sloped, below] {$m_2$} (7);
    \draw[->, bend left] (7) edge node [sloped, above] {$m_1$} (6);
    \draw[->] (6) edge node [above] {$m_3$} (8);
    \draw[->, loop below] (8) edge node [below] {$m_3$} (8);
    \draw[->, bend left] (9) edge node [above] {$m_2$} (10);
    \draw[->, bend left] (10) edge node [above] {$m_3$} (9);
    \draw[->, loop left] (10) edge node [left] {$m_2$} (10);

    \draw [decorate,decoration={brace,amplitude=5pt}, color=blue] (9,4.5) -- node [midway, right=1em] {$k_1>k_2$} (9,3.5) ;
    \draw [decorate,decoration={brace,amplitude=5pt}, color=red] (9,2.5) -- node [midway, right=1em] {$k_2>k_3$} (9,-0.7) ;

\end{tikzpicture}
    \caption{A non-complementable global type}
    \label{fig:non-complementable-gt-1}
\end{figure}

\begin{lem}\label{thm:not-all-gt-are-complementable}
    Not all global types are complementable.
\end{lem}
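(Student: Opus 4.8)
The strategy is to exhibit a single global type $\gt_0$ that has no complement, by a counting/pumping argument. Looking at Figure~\ref{fig:msc-regular}, the intended witness is a protocol over four participants $p,q,r,s$ whose MSCs consist of $k_1$ copies of message $m_1$ from $p$ to $q$, then $k_2$ copies of $m_2$ from $r$ to $s$, then $k_3$ copies of $m_3$ from $p$ to $q$, and the language $\msclanguageofcc{\gt_0}{}$ is precisely those MSCs satisfying the arithmetic constraint $k_1 > k_2 > k_3$ (the regular automaton of Figure~\ref{fig:non-complementable-gt-1} recognises exactly the sequences of arrows realising this; note $p$-to-$q$ arrows and $r$-to-$s$ arrows commute, so the automaton must interleave them while still enforcing the constraint — this is why it is regular but has a nontrivial shape). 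First I would verify that $\gt_0$ as drawn is indeed a global type (a DFA over arrows) and compute $\existentialmsclanguageof{\gt_0}$: each MSC in it is determined by the triple $(k_1,k_2,k_3)$ with $k_1>k_2>k_3\geq 0$ (or whatever exact inequalities the figure encodes), because all linearisations of such an arrow-word induce the same MSC and conversely.

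Next I would argue by contradiction: suppose $\comp{\gt_0}$ is a complement, i.e.\ $\existentialmsclanguageof{\comp{\gt_0}} = \mscsetofmodel{\synchmodel}\setminus\existentialmsclanguageof{\gt_0}$. The key observation is that $\mscsetofmodel{\synchmodel}$, restricted to MSCs of the shape in Figure~\ref{fig:msc-regular} (i.e.\ $k_1$ messages $m_1$, then $k_2$ messages $m_2$, then $k_3$ messages $m_3$), is in bijection with $\mathbb{N}^3$, and $\existentialmsclanguageof{\gt_0}$ picks out exactly the cone $\{k_1>k_2>k_3\}$. Its complement within this family is $\{k_1\leq k_2\}\cup\{k_2\leq k_3\}$. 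A complement global type $\comp{\gt_0}$ would, via its finite automaton on arrows, have to accept exactly the arrow-words coding these MSCs and reject the others. But the set of arrow-words that code MSCs of this shape is itself a regular (indeed recognisable) set, and intersecting $\languageofnfa{\comp{\gt_0}}$ with it, then projecting onto the Parikh image of $(m_1,m_2,m_3)$, must yield exactly $\{(k_1,k_2,k_3) : k_1\leq k_2\ \text{or}\ k_2\leq k_3\}$. The semilinear-sets / pumping argument then shows this is impossible: a DFA over arrows, when restricted to reading blocks in the fixed order $m_1^{k_1} m_2^{k_2} m_3^{k_3}$ (after commuting everything into that canonical order — legitimate since this word is one linearisation of the MSC and $\comp{\gt_0}$, being a global type, either accepts all linearisations of an MSC in its $\existentialmsclanguageof{\cdot}$ or we only need one), behaves like a DFA over $\{m_1,m_2,m_3\}^*$ reading such blocks, so the accepted set of triples is eventually periodic in a way incompatible with the "diagonal" comparisons $k_1\leq k_2$ and $k_2\leq k_3$ — a standard fact that no finite automaton can compare two unbounded counters.

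Concretely, I would pump as follows. Fix $k_3$ large and consider the sliced language $L_{k_3} = \{(k_1,k_2) : \text{the MSC with parameters } (k_1,k_2,k_3) \in \existentialmsclanguageof{\comp{\gt_0}}\} = \{(k_1,k_2) : k_1\leq k_2\ \text{or}\ k_2\leq k_3\}$. For $k_2 > k_3$ this is $\{k_1 \leq k_2\}$, the true diagonal half-plane. Reading the canonical arrow-word $m_1^{k_1}$ then $m_2^{k_2}$ then $m_3^{k_3}$ in $\comp{\gt_0}$, after $m_1^{k_1}$ we are in a state $s(k_1)$ depending only on $k_1$ (up to the eventual period of the $m_1$-loop), and after further reading $m_2^{k_2}$ we land in $t(k_1,k_2)$; by pigeonhole on states, $s$ is ultimately periodic in $k_1$ and for fixed residue the reachable-then-accepting condition on $m_2^{k_2}m_3^{k_3}$ is ultimately periodic in $k_2$, so acceptance depends only on $(k_1 \bmod N, k_2 \bmod N)$ for large arguments — but $\{k_1\leq k_2\}$ is not a union of such residue classes (e.g.\ $(N,2N)\in L$ but $(3N,2N)\notin L$ while they share residues). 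Contradiction, so no complement exists. The main obstacle is the bookkeeping around commutation: one must be careful that putting the arrow-word into canonical block order $m_1^* m_2^* m_3^*$ is sound, i.e.\ that $\comp{\gt_0}$, read on this canonical word, decides membership of the underlying MSC — this follows because $\existentialmsclanguageof{\comp{\gt_0}}$ is a set of MSCs and the canonical word is one of the valid linearisations; everything else is a routine automata-theoretic pumping argument.
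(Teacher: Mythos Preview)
Your proposal has two connected problems, one a misreading and one a genuine gap.

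First, the misreading: the automaton of Figure~\ref{fig:non-complementable-gt-1} encodes the \emph{disjunction} $k_1>k_2$ \textbf{or} $k_2>k_3$, not the conjunction $k_1>k_2>k_3$ you work with (the blue and red accepting states witness the two disjuncts separately). This matters because the conjunction $k_1>k_2>k_3$ is itself \emph{not} expressible as $\existentialmsclanguageof{\cdot}$ of any global type (same projection trick as below shows it), so it cannot serve as your starting witness. Conversely, the complement you aim to rule out, $\{k_1\le k_2\}\cup\{k_2\le k_3\}$, \emph{is} expressible existentially: take the union of $(m_1m_2)^*m_2^*m_3^*$ and $m_1^*(m_2m_3)^*m_3^*$. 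So with your reading there is no contradiction to be found.

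Second, and more fundamental, your pumping argument on the canonical word $m_1^{k_1}m_2^{k_2}m_3^{k_3}$ does not go through. You write that reading the canonical word in $\comp{\gt_0}$ ``decides membership of the underlying MSC'' because ``the canonical word is one of the valid linearisations.'' But $\existentialmsclanguageof{\comp{\gt_0}}$ is \emph{existential}: an MSC is in it iff \emph{some} linearisation is accepted, not iff the canonical one is. Your justification gives only the implication (canonical accepted $\Rightarrow$ MSC in), not its converse. The complement automaton could perfectly well reject every canonical block-form word and still realise the required set of MSCs via interleaved words---exactly as the example $(m_1m_2)^*m_2^*m_3^*$ above does for $k_1\le k_2$. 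So fixing one linearisation and pumping on it is unsound here.

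The paper's proof sidesteps this by (i) intersecting the hypothetical complement with the commutation-closed type $\gt_0$ recognising $(m_1+m_2)^*(m_2+m_3)^*$, which via Lemma~\ref{lem:product-of-gt} pins down the existential MSC language of the product as exactly $\{k_1\le k_2\le k_3\}$, and then (ii) \emph{erasing} all $m_2$-transitions (replacing them by $\varepsilon$). Step (ii) is the key move you are missing: after erasure the resulting NFA must accept exactly $\{m_1^{k_1}m_3^{k_3}\mid k_1\le k_3\}$, independently of which linearisation witnessed each MSC, and that language is not regular. The erasure quantifies out the problematic $m_2$'s rather than committing to a particular placement of them.
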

\begin{proof}
    Let $\Procs=\{p,q,r,s\}$, $\Messages=\{m_1,m_2,m_3\}$, and 
    $\Arrows=\{\gtlabel{p}{q}{m_1},\gtlabel{r}{s}{m_2},\gtlabel{p}{q}{m_3}\}$.
    Consider a global type $\gt_0$ such that $\languageofnfa{\gt_0}=(m_1+m_2)^*(m_2+m_3)^*$ and take  the minimal DFA recognising this language.     Note that $\gt_0$ is commutation-closed and the MSCs of $\existentialmsclanguageof{\gt_0}$ are of the form depicted in Figure~\ref{fig:msc-regular}.

    Now consider the global type $\gt$ depicted in 
    Figure~\ref{fig:non-complementable-gt-1}. 
    We claim that for all natural numbers $k_1,k_2,k_3$,
    $$
    \mscof{m_1^{k_1}m_2^{k_2}m_3^{k_3}} \in \existentialmsclanguageof{\gt}
    \quad \mbox{if and only if} \quad
    k_1 > k_2 ~~\mbox{or}~~ k_2 > k_3.
    $$
    The claim follows by analysis on the paths leading to accepting states in $\gt$, as hinted by
    Figure~\ref{fig:non-complementable-gt-1}.
    For instance, the language of the words accepted by $\gt$ with state $q_8$ as the last state is of the form 
    $m_1^+(m_2m_1)^*m_3^+$, therefore lead to MSCs with $k_1>k_2$.

    We show that $\gt$ is not complementable. By contradiction, suppose that $\gt$ is complementable and $\gt'$ is a complement.
    Let $\gt''=\gt'\otimes \gt_0$; by Lemma~\ref{lem:product-of-gt}, and the fact that $\gt_0$ is commutation-closed,
    $$
        \existentialmsclanguageof{\gt''}~=~\msclanguageofcc{\gt_0}{}\cap\existentialmsclanguageof{\gt'}~=~
        \{\mscof{m_1^{k_1}m_2^{k_2}m_3^{k_3}}~\mid k_1\leq k_2\leq k_3\}.
    $$
    Now, let $\mathcal A$ denote the NFA obtained from $\gt''$ after replacing each $m_2$ transition with an $\varepsilon$ transition.
    Then $\languageofnfa{\mathcal A}=\{m_1^{k_1}m_3^{k_3}\mid k_1\leq k_3\}$, which is not a regular language, and hence  the contradiction.
\end{proof}

\paragraph{Complementation By Duality}
Recall that we write $\dualdfaof{\gt}$ for the dual DFA of a 
global type $\gt$, where accepting states and non-accepting ones are swapped (possibly completing first $\gt$ with a sink state).
It follows from the definition of $\existentialmsclanguageof{\gt}$ 
and $\universalmsclanguageof{\gt}$ that
$
\existentialmsclanguageof{\dualdfaof{\gt}}=\mscsetofmodel{\synchmodel}\setminus\universalmsclanguageof{\gt}.
$

In general $\dualdfaof{\gt}$ is not a complement of $\gt$, still  
duality can be used to obtain a complement of $\gt$ 
in a few cases. The first and most obvious case is the following.
\begin{lem}\label{prop:commutation-closed-implies-complementable}
    If $\gt$ is commutation-closed, then $\dualdfaof{\gt}$ is a complement of $\gt$.  
\end{lem}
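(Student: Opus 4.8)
The plan is to show directly that $\existentialmsclanguageof{\dualdfaof{\gt}} = \mscsetofmodel{\synchmodel}\setminus\existentialmsclanguageof{\gt}$, using the fact that for a commutation-closed global type the existential and universal MSC languages coincide.

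First I would recall the identity established just above the statement: for any global type $\gt$ one has $\existentialmsclanguageof{\dualdfaof{\gt}} = \mscsetofmodel{\synchmodel}\setminus\universalmsclanguageof{\gt}$. This holds because an MSC $M = \labeltomsc{w}$ lies in $\existentialmsclanguageof{\dualdfaof{\gt}}$ iff some representative $w' \in [w]$ is accepted by $\dualdfaof{\gt}$, i.e. rejected by $\gt$ (assuming $\gt$ completed with a sink state), which is exactly the negation of "$[w] \subseteq \languageofnfa{\gt}$", i.e. the negation of $M \in \universalmsclanguageof{\gt}$. I would state this as the starting point, citing the sentence in the text that precedes the lemma.

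Then, since $\gt$ is commutation-closed, by definition $\universalmsclanguageof{\gt} = \existentialmsclanguageof{\gt} = \msclanguageofcc{\gt}{}$. Substituting into the identity above gives $\existentialmsclanguageof{\dualdfaof{\gt}} = \mscsetofmodel{\synchmodel}\setminus\existentialmsclanguageof{\gt}$, which is precisely the condition for $\dualdfaof{\gt}$ to be a complement of $\gt$ in the sense of Definition (Complement of a Global Type). Hence $\dualdfaof{\gt}$ is a complement of $\gt$, and in particular $\gt$ is complementable.

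The only point requiring a little care — and the main (minor) obstacle — is the completeness assumption implicit in the dual construction: $\dualdfaof{\gt}$ accepts $\Arrows^* \setminus \languageofnfa{\gt}$ only after $\gt$ has been made a complete DFA by adding a sink state, as noted in the paper's footnote on $\dualdfaof{\cdot}$. I would make this explicit so that the step "$w'$ accepted by $\dualdfaof{\gt}$ iff $w'$ not accepted by $\gt$" is unambiguous. Everything else is a direct unfolding of definitions, so the proof is short.
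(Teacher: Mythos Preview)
Your proposal is correct and is exactly the argument the paper intends: the paper states the identity $\existentialmsclanguageof{\dualdfaof{\gt}}=\mscsetofmodel{\synchmodel}\setminus\universalmsclanguageof{\gt}$ just before the lemma and then presents the lemma without proof as ``the first and most obvious case,'' so your substitution $\universalmsclanguageof{\gt}=\existentialmsclanguageof{\gt}$ under commutation-closure is precisely the implicit step. Your remark on completeness of the DFA before dualising is also in line with the paper's footnote on $\dualdfaof{\cdot}$.
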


As observed before, all global types with less than three participants are commutation closed, 
hence from Lemma~\ref{prop:commutation-closed-implies-complementable}
we have the following immediate result.

\begin{cor}\label{coro:3-participants-complementable}
    If $\cardinalof{\Procs}\leq 3$, then $\dualdfaof{\gt}$ is a complement of $\gt$.  
\end{cor}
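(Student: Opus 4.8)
The plan is to derive the statement as an immediate composition of two facts already established in the paper: the proposition asserting that every global type over at most three participants is commutation-closed, and Lemma~\ref{prop:commutation-closed-implies-complementable}, which turns commutation-closedness into complementability via the dual automaton $\dualdfaof{\gt}$.

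Concretely, the first step is to invoke the commutation-closedness proposition. Its justification is the short combinatorial observation that any two arrows $\marrow{p}{q}{m}$ and $\marrow{p'}{q'}{m'}$, each of which already involves two distinct processes, must share at least one process when $\cardinalof{\Procs}\le 3$: otherwise $\{p,q\}$ and $\{p',q'\}$ would be disjoint, forcing four distinct processes. Since arrows sharing a process do not commute in the MSC semantics — swapping two such consecutive arrows reorders the events on the shared process and hence changes the induced MSC — every $\sim$-equivalence class $[w]$ is a singleton. Consequently $\existentialmsclanguageof{\gt}=\universalmsclanguageof{\gt}$, i.e.\ $\gt$ is commutation-closed.

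The second step simply applies Lemma~\ref{prop:commutation-closed-implies-complementable} to this $\gt$: being commutation-closed, it admits $\dualdfaof{\gt}$ as a complement. No restriction on $\cardinalof{\Procs}$ enters that lemma, so the two steps compose without any further bookkeeping, and the corollary follows directly.

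There is essentially no obstacle here: all the genuine content lies in the two cited results, and the corollary is exactly their conjunction. The only point worth stating explicitly in the write-up is that the standing assumption that every arrow $\marrow{p}{q}{m}$ has $p\neq q$ is precisely what makes the pigeonhole count give four process slots for two disjoint arrows; without it the combinatorial step would fail.
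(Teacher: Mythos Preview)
Your proposal is correct and follows exactly the paper's approach: the corollary is stated as an immediate consequence of the proposition that $\cardinalof{\Procs}\le 3$ forces commutation-closedness together with Lemma~\ref{prop:commutation-closed-implies-complementable}. You spell out the pigeonhole argument in more detail than the paper does, but the logical route is identical.
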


Now, even if a global type $\gt$ is not commutation-closed, 
it may still be completed by commutation-closure and 
later complemented by duality; this construction yields
a complement of $\gt$ if $\gt$ is deadlock-free realisable in $\synchmodel$.

\begin{lem}\label{thm:realisable-complementable}
    If a global type $\gt$ is deadlock-free realisable in $\synchmodel$, 
    then
    $\dualdfaof{\productof{\projectionof{\gt}}}$
    is a complement of $\gt$.
\end{lem}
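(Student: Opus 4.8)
The plan is to reduce the statement to two facts that are already available. Write $\comp{\gt}\eqdef\dualdfaof{\productof{\projectionof{\gt}}}$ for the candidate complement. By Proposition~\ref{prop:product-is-commutation-closed} applied to the system $\projectionof{\gt}$, the DFA $\productof{\projectionof{\gt}}$ is commutation-closed and $\msclanguageof{\projectionof{\gt}}{\synchmodel}=\msclanguageofcc{\productof{\projectionof{\gt}}}{}$; since it is commutation-closed, Lemma~\ref{prop:commutation-closed-implies-complementable} tells us that its dual DFA is one of its complements, i.e.
$$\msclanguageofcc{\productof{\projectionof{\gt}}}{}\;=\;\mscsetofmodel{\synchmodel}\setminus\existentialmsclanguageof{\comp{\gt}}.$$
Hence it suffices to establish the single equality $\existentialmsclanguageof{\gt}=\mscsofcfsms{\projectionof{\gt}}{\synchmodel}$: chaining it with the two displays above yields $\existentialmsclanguageof{\gt}=\mscsetofmodel{\synchmodel}\setminus\existentialmsclanguageof{\comp{\gt}}$, which is exactly the definition of $\comp{\gt}$ being a complement of $\gt$.

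That remaining equality is the MSC-level reading of Condition~(CC) of deadlock-free realisability. The inclusion $\existentialmsclanguageof{\gt}\subseteq\mscsofcfsms{\projectionof{\gt}}{\synchmodel}$ holds unconditionally; it is the second inclusion of Proposition~\ref{prop:universal-existential-synch-inclusion}. The converse inclusion $\mscsofcfsms{\projectionof{\gt}}{\synchmodel}\subseteq\existentialmsclanguageof{\gt}$ is Condition~(CC') of Proposition~\ref{prop:msc-version-of-cond1-of-realizability}, which --- since $\synchmodel$ trivially contains the synchronous executions --- is equivalent to Condition~(CC), part of our hypothesis. One can also argue it directly: any $M\in\mscsofcfsms{\projectionof{\gt}}{\synchmodel}$ is $\mscof{e}$ for some $e\in\executionsof{\projectionof{\gt}}{\synchmodel}$; by (CC), $e\in\executionsof{\gt}{\synchmodel}$, which by Definition~\ref{def:global-type-language} is $\bigcup\{\linearizationsof{M'}{\synchmodel}\mid M'\in\existentialmsclanguageof{\gt}\}$, so $e$ is a linearisation of some $M'\in\existentialmsclanguageof{\gt}$ and therefore $\mscof{e}=M'\in\existentialmsclanguageof{\gt}$.

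I do not expect any genuinely hard step: the proof is a short chain of previously established results. The only points requiring a little care are the routine translations between executions and MSCs --- namely that a linearisation of an MSC $M'$ recovers $M'$ under $\mscof{\cdot}$, and that the canonical synchronous execution $\labeltoexec{w}$ of a word $w\in\labellanguageof{\gt}$ is a synchronous linearisation of $\labeltomsc{w}$ (needed for the easy inclusion if one argues it by hand rather than citing Proposition~\ref{prop:universal-existential-synch-inclusion}). It is also worth observing that only Condition~(CC) is actually used: Condition~(DF) of Definition~\ref{def:realisability} plays no role, so the lemma holds under the weaker hypothesis that $\gt$ merely satisfies session conformance in $\synchmodel$.
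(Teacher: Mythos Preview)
Your proof is correct and follows essentially the same approach as the paper: both arguments use Proposition~\ref{prop:product-is-commutation-closed} to identify $\msclanguageofcc{\productof{\projectionof{\gt}}}{}$ with $\mscsofcfsms{\projectionof{\gt}}{\synchmodel}$, invoke Lemma~\ref{prop:commutation-closed-implies-complementable} to complement the commutation-closed product, and then use Condition~(CC) to bridge $\mscsofcfsms{\projectionof{\gt}}{\synchmodel}$ and $\existentialmsclanguageof{\gt}$. Your version is slightly more explicit in splitting the last equality into two inclusions (citing Proposition~\ref{prop:universal-existential-synch-inclusion} and Proposition~\ref{prop:msc-version-of-cond1-of-realizability}), and your closing remark that only (CC) is needed---not (DF)---is a correct and worthwhile observation.
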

\begin{proof}
    Let $\gt$ be a global type that is  deadlock-free realisable in $\synchmodel$.
    By Condition~(CC) of Definition~\ref{def:realisability}, 
    $\executionsof{\projectionof{\gt}}{\synchmodel} = \executionsof{\gt}{\synchmodel}$.
    If two sets of executions are equal, the corresponding sets of their MSCs are also equal, 
    and thus $\msclanguageof{\projectionof{\gt}}{\synchmodel}=\msclanguageof{\gt}{\synchmodel}$.
    By Proposition~\ref{prop:product-is-commutation-closed}, the synchronous product $\productof{\projectionof{\gt}}$
    is a commutation-closed global type whose runs are exactly the synchronous executions of the projected system, so 
    $$\msclanguageofcc{\productof{\projectionof{\gt}}}=\msclanguageof{\projectionof{\gt}}{\synchmodel}=\existentialmsclanguageof{\gt}.$$

    Finally, by Proposition~\ref{prop:commutation-closed-implies-complementable},
    $\dualdfaof{\productof{\projectionof{\gt}}}$ is a complement of $\productof{\projectionof{\gt}}$, thus of $\gt$.
\end{proof}

\begin{rem}
    In terms of complexity, complementation by duality is
    a linear in the number of states of the global type $\gt$ (possibly after adding a sink state). However, 
    $\productof{\projectionof{\gt}}$ involves in the worst case a doubly exponential blowup. This may be  mitigated  if we can avoid to explicitly compute
    $\productof{\projectionof{\gt}}$.
\end{rem}

\paragraph{Complementation by Renunciation for Sender-Driven Choices}
Next, we introduce another complementation procedure  for
global types with sender-driven choices, and more generally presenting a form of determinism in commutations; this construction is also linear, although it does not preserve determinism. We first recall the definition of sender-driven choice and commutation-determinism, then define our complementation procedure, and finally establish its correctness.

For a state $s$ of $\gt$, let $\choicesof{\gt}{s}$ be the set of arrows labelling outgoing transitions of $s$ in $\gt$.

\begin{defi}[Sender-Driven~\cite{stutz-phd}]\label{def:sender-driven}
    A global type $\gt$ is \emph{sender-driven} if it is deterministic and for every state $s$ of $\gt$,
    $$
    \choicesof{\gt}{s} = \{\marrow{p}{q_i}{m_i}\mid i=1,\ldots,n\}
    $$
    for some process $p$ and some processes $q_i$ and messages $m_i$.
\end{defi}

Stutz et al. introduced this assumption on global types in order to prove that $\ppmodel$-implementability (a notion close to realisability) is decidable~\cite{DBLP:conf/cav/LiSWZ23}. We slightly generalise this condition as the complementation procedure we are about to present also works in this larger setting.

\begin{defi}[Commutation-determinismic]
    A global type $\gt$ is commutation-deterministic if it is deterministic and for every state $s$ of $\gt$,
    for every two arrows $a,b\in\choicesof{\gt}{s}$, $a$ and $b$ do not commute.
\end{defi}

\begin{figure*}[ht]
        \begin{tikzpicture}
        \begin{scope}
            \node at (-1,0) {$\gt~=$};
            \node[state,initial,initial text={},initial distance=3mm] (s0) at (1,0) {$s_0$};
            \node[state] (s1) at (3.5,1) {$s_1$};
            \node[state,accepting] (s2) at (3.5,0)  {$s_2$};
            \node[state,accepting] (s3) at (7,1) {$s_3$};
            \draw[->] (s0) -- node[above,sloped] {$\gtlabel{p}{q}{m_1}$} (s1);
            \draw[->] (s0) -- node[below] {$\gtlabel{p}{q'}{m_2}$} (s2);
            \draw[->] (s1) -- node[above] {$\gtlabel{r}{r'}{m_3}$} (s3);
        \end{scope}
        \begin{scope}[xshift=8.5cm, yshift=1cm, scale=.5]
            \node at (0,0) {$M_1~=$};
            \draw[->] (1,1) node [above] {q}  -- (1,-1);
            \draw[->] (2,1) node [above] {p}  -- (2,-1);
            \draw[->] (3,1) node [above] {q'} -- (3,-1);
            \draw[->] (4,1) node [above] {r}  -- (4,-1);
            \draw[->] (5,1) node [above] {r'} -- (5,-1);
            \draw[->] (2,0) -- node [above] {$a_1$} (1,0);
            \draw[->] (4,0) -- node [above] {$a_3$} (5,0);
        \end{scope}
        \begin{scope}[xshift=9.5cm, yshift=-1cm, scale=.5]
            \node at (0,0) {$M_2~=$};
            \draw[->] (1,1) node [above] {q}  -- (1,-1);
            \draw[->] (2,1) node [above] {p}  -- (2,-1);
            \draw[->] (3,1) node [above] {q'} -- (3,-1);
            \draw[->] (4,1) node [above] {r}  -- (4,-1);
            \draw[->] (5,1) node [above] {r'} -- (5,-1);
            \draw[->] (2,0) -- node [above] {$a_2$} (3,0);
        \end{scope}
        \begin{scope}[yshift=-6cm]
            \node at (-1,0) {$\renun{\gt}~=$};
            \node[state,initial,initial text={},initial distance=3mm, accepting] (s0) at (1,0) {$s_0$};
            \node[state,accepting] (s1) at (1,-3) {$s_1$};
            \node[state,accepting] (s0bar) at (1,3)  {$\comp{s_0}$};
            \node[state,accepting] (s1bar) at (-2,-3) {$\comp{s_1}$};
            \node[state] (s0a1) at (5,0) {$s_0,a_1$};
            \node[state] (s0a2) at (5,3) {$s_0,a_2$};
            \node[state] (s0bara1) at (9,0) {$\comp{s_0},a_1$};
            \node[state] (s0bara2) at (9,3) {$\comp{s_0},a_2$};
            \node[state,accepting] (sacc) at (11,0) {$s_{\mathsf{acc}}$};
            \node[state] (s1a3) at (5,-3) {$s_1,a_3$};
            \node[state] (s1bara3) at (9,-3) {$\comp{s_1},a_3$};
            \draw[->] (s0) -- node[left] {$a_1$} (s1);
            \draw[->] (s0) -- node[left] {$\neg a_1\wedge \neg a_2$} (s0bar);
            \draw[->] (s0bar) edge [loop above] node[above] {$\neg a_1\wedge \neg a_2$} (s0bar);
            \draw[->] (s0) edge [bend right] node[above,sloped] {$\neg a_1\wedge \neg a_2\wedge\conflictingarrows{a_1}$} (s0bara1);
            \draw[->] (s0) -- node[above,sloped] {$\neg a_1\wedge \neg a_2\wedge\neg\conflictingarrows{a_1}$} (s0a1);
            \draw[->] (s0a1) edge [loop above] node[above] {$\neg a_1\wedge \neg a_2\wedge\neg\conflictingarrows{a_1}$} (s0a1);
            \draw[->] (s0a1) -- node[above,sloped] {$\neg a_1\wedge \neg a_2\wedge\conflictingarrows{a_1}$} (s0bara1);
            \draw[->] (s0bara1) edge [loop above] node[above] {$\neg a_1\wedge\neg a_2$} (s0bara1);
            \draw[->] (s0bara1) -- node[above,sloped] {$a_1$} (sacc);
            \draw[->] (s0) edge [bend left=9] node[above,sloped,pos=0.6] {$\neg a_1\wedge \neg a_2\wedge\conflictingarrows{a_2}$} (s0bara2);
            \draw[->] (s0) -- node[above,sloped] {$\neg a_1\wedge \neg a_2\wedge\neg\conflictingarrows{a_2}$} (s0a2);
            \draw[->] (s0) -- node[above,sloped] {$\neg a_1\wedge \neg a_2\wedge\neg\conflictingarrows{a_2}$} (s0a2);
            \draw[->] (s0a2) edge [loop above] node[above] {$\neg a_1\wedge \neg a_2\wedge\neg\conflictingarrows{a_2}$} (s0a2);
            \draw[->] (s0a2) -- node[above,sloped] {$\neg a_1\wedge \neg a_2\wedge\conflictingarrows{a_2}$} (s0bara2);
            \draw[->] (s0bara2) edge [loop above] node[above] {$\neg a_1\wedge\neg a_2$} (s0bara2);
            \draw[->] (s0bara2) -- node[above,sloped] {$a_2$} (sacc);
            \draw[->] (s1) -- node [above] {$\neg a_3$}(s1bar);
            \draw[->] (s1bar) edge[loop above] node[above] {$\neg a_3$} (s1bar);
            \draw[->] (s1) edge [bend right] node[above,sloped] {$\neg a_3\wedge \conflictingarrows{a_3}$} (s1bara3);
            \draw[->] (s1) -- node[above,sloped] {$\neg a_3\wedge\neg\conflictingarrows{a_3}$} (s1a3);
            \draw[->] (s1a3) edge [loop above] node[above] {$\neg a_3\wedge\neg\conflictingarrows{a_3}$} (s1a3);
            \draw[->] (s1a3) -- node[above,sloped] {$\neg a_3\wedge \conflictingarrows{a_3}$} (s1bara3);
            \draw[->] (s1bara3) edge [loop above] node[above] {$\neg a_3$} (s1bara3);
            \draw[->] (s1bara3) -- node[above,sloped] {$a_3$} (sacc);

            \draw[->] (sacc) edge [loop right] node[right] {$\Arrows$} (sacc);

        \end{scope}

    \end{tikzpicture}
    \caption{A sender-driven global type $\gt$, with $\existentialmsclanguageof{\gt}=\{M_1,M_2\}$, and its
            complement $\renun{\gt}$.
            For better readability, the states $s_2$ and $s_3$ (that are sink states in $\renun{\gt}$) have been pruned.
            $\conflictingarrows{a}$ denotes the set of arrows that do not commute with $a$.
            }\label{fig:example-of-sender-driven-gt-complementation}
\end{figure*}

We now formally define the complementation procedure for commutation-deterministic global types.

\begin{defi}[Complementation by renunciation]\label{def:complement-of-a-sender-driven-gt}
    Let $\gt$ be a commutation-deterministic global type. Let $S$ be the set of states of $\gt$, $\comp{S}\eqdef\{\comp{s}\mid s\in S\}$,
    and $S'=S\uplus\comp{S}$.
   $\renun{\gt}$ is the global type with set of states $S'\cup S'\times\Arrows \cup\{s_{\mathsf{acc}}\}$ defined as follows:
    \begin{itemize}
        \item the initial state of $\renun{\gt}$ is the initial state of $\gt$;
        \item let $F\subseteq S$ be the set of accepting states of $\gt$; the set of accepting states of 
        $\renun{\gt}$ is $(S\setminus F)\cup \comp{S}\cup\{s_{\mathsf{acc}}\}$;
        \item for any states $s,s'\in S$ of $\gt$, for any arrow $a\in \Arrows$,
        if $(s,a,s')$ is a transition in $\gt$, then $(s,a,s')$ 
        is a transition in $\renun{\gt}$;
        \item for any state $s$ of $\gt$, 
        for any arrow $a\not\in\choicesof{\gt}{s}$,
        $(s,a,\comp{s})$ (resp. $(\comp{s},a,\comp{s})$) 
        is a transition in $\renun{\gt}$;
        \item for any state $s$ of $\gt$,
        for any arrow $a\in\choicesof{\gt}{s}$,
        for any arrow $b\not\in\choicesof{\gt}{s}$ that commutes with $a$,
        $\big(s,b,(s,a)\big)$ 
        (resp. $\big((s,a),b,(s,a)\big)$) 
        is a transition in $\renun{\gt}$
        \item for any state $s$ of $\gt$,
        for any arrow $a\in\choicesof{\gt}{s}$,
        for any arrow $b\not\in\choicesof{\gt}{s}$ that does not commute with $a$,
        $\big(s,b,(\comp{s},a)\big)$ 
        (resp. $\big((s,a),b,(\comp{s},a)\big)$) 
        is a transition in $\renun{\gt}$
        \item for any state $s$ of $\gt$,
        for any arrow $a\in\choicesof{\gt}{s}$,
        for any arrow $b\not\in\choicesof{\gt}{s}$,
        $\big((\comp{s},a),b,(\comp{s},a)\big)$
        (resp. $\big((\comp{s},a),a,s_{\mathsf{acc}}\big)$)
        is a transition in $\renun{\gt}$;
        \item for any arrow $a$, $(s_{\mathsf{acc}},a,s_{\mathsf{acc}})$ is a transition in $\renun{\gt}$.
    \end{itemize}
\end{defi}

\begin{exa}\label{ex:example-of-sender-driven-gt-complementation}
    Figure~\ref{fig:example-of-sender-driven-gt-complementation} depicts a sender-driven global type
    and the complement computed according to Definition~\ref{def:complement-of-a-sender-driven-gt}.
\end{exa}

Note that the number of states of $\renun{\gt}$ is linear in the number of states of $\gt$ (assuming the alphabet $\Arrows$ is fixed); it can also be observed that, using Boolean
expressions to label transitions, the size of $\renun{\gt}$ can also be kept linear in the size of $\gt$.

\begin{exa}
    Let $\gt,\renun{\gt}, M_1, M_2$ be the global types and MSCs depicted in Figure~\ref{fig:example-of-sender-driven-gt-complementation}.
    It is easy to verify that $M_2\not\in\existentialmsclanguageof{\renun{\gt}}$, because the only sequence of arrows
    $w_2=a_2$ such that $M_2=\mscof{w_2}$ is not in $\languageofnfa{\renun{\gt}}$.
    It can also be checked that $M_1\not\in\existentialmsclanguageof{\renun{\gt}}$, because the only two  sequences of arrows $w_{1,1}=a_1a_3$
    and $w_{1,2}=a_3a_1$ such that $\mscof{w_{1,1}}=\mscof{w_{1,2}}=M_1$ are not in $\languageofnfa{\renun{\gt}}$.
\end{exa}

Intuitively, $\renun{\gt}$ describes a MSC $M$ that starts with a prefix that is in $\existentialmsclanguageof{\gt}$ up to a certain state $s$, at which point a renunciation of the choices of $\gt$ at $s$ occurs. 
There are two kinds of renunciation: definitive renunciation means that none of the arrows available in the choices will ever occur later.
Provisory renunciation means that at least one arrow in the choices will occur later, but not immediately: if $a$ is the first
arrow of the choice that occurs later, another arrow $b$ occurs before $a$ such that $b$ does not commute with $a$ and is not in the choices that
have been renunciated. The first kind of renunciation corresponds to moving to a state of the form $\comp{s}$, while the second kind corresponds to moving to a state of the form $(s,a)$ or $(\comp{s},a)$.

\begin{exa}
    Consider the MSC $M_3=\mscof{a_1a_2'a_3}$, with $a_2'=\gtlabel{p}{q'}{m_2'}$ for some $m_2'\neq m_2$. Then $M_3$ is accepted by $\renun{\gt}$ because $w_{3,1}=a_1a_2'a_3$ is accepted in $\comp{s_1}$. However, the other sequence of arrows $w_{3,2}=a_3a_1a_2'$
    such that $\mscof{w_{3,2}}=M_3$ is not in $\languageofnfa{\renun{\gt}}$. More generally, $\renun{\gt}$ is \emph{not} commutation-closed, and for any MSC $M\not\in\existentialmsclanguageof{\gt}$, the
    sequence of arrows $w$ such that $M=\mscof{w}$ and $w\in\languageofnfa{\renun{\gt}}$ should be carefully constructed. The strategy
    consists in picking a sequence of arrows $w=w_1\cdot w_2$
    with the longest possible $w_1$ 
    in $\prefixclosureof{\languageofnfa{\gt}}$, i.e., renunciating to a choice as late as possible.
\end{exa}

In order to formalise this intuition, we introduce a few notions.
Given an MSC $M$ and a state $s$ of a commutation-deterministic global type $\gt$,
we write $\nextarrow{M}{s}$ for the first arrow of $\choicesof{\gt}{s}$ that occurs in a sequence of arrows $w$
such that $\mscof{w}=M$. Note that $\nextarrow{M}{s}$ is not defined if $M$ contains no arrow of 
$\choicesof{\gt}{s}$; note also that $\nextarrow{M}{s}$, when defined, does not depend on the choice of the sequence of arrows 
$w$ such that $\mscof{w}=M$ (otherwise two arrows of $\choicesof{\gt}{s}$ would commute, contradicting the commutation-determinism of $\gt$).

\begin{exa}
    Consider the MSC $M_4=\mscof{a_1a_2a_3}=\mscof{a_3a_1a_2}$, and the initial state $s_0$ of $\gt$ in Figure~\ref{fig:example-of-sender-driven-gt-complementation}.
    Then $\nextarrow{M_4}{s_0}=a_1$.
\end{exa}

We also define $\nextmsc{M}{s}$ as $\mscof{w}$ for some $w$ such
that $M=\mscof{\nextarrow{M}{s}w}$; intuitively, we remove the arrow $\nextarrow{M}{s}$ from $M$, provided it is not blocked by 
a previous non-commuting arrow. Note that $\nextmsc{M}{s}$ is not defined if 
in all sequences of arrows $w$ such that $\mscof{w}=M$, $\nextarrow{M}{s}$ never occurs first.

\begin{exa}
    Consider the MSC $M_4=\mscof{a_1a_2a_3}$, and the state $s_0$ of $\gt$ in Figure~\ref{fig:example-of-sender-driven-gt-complementation}.
    Then $\nextmsc{M_4}{s_0}=\mscof{a_2a_3}$.
    On the other hand, for the MSC $M_5=\mscof{a_4a_1a_2}$, with $a_4=\gtlabel{q}{q'}{m_4}$, $\nextmsc{M_5}{s_0}$
    is undefined, because $a_1$ is blocked by $a_4$.
\end{exa}

\begin{lem}\label{lem:nextarrow-and-nextmsc}
    Let $\gt$ be a commutation-deterministic global type with initial state $s_0$, and $M$ a non-empty MSC.
    Then $M\not\in\existentialmsclanguageof{\gt}$ if and only if one of the following holds:
    \begin{itemize}
        \item $\nextarrow{M}{s_0}$ is undefined, or 
        \item $\nextmsc{M}{s_0}$ is undefined, or
        \item $a=\nextarrow{M}{s_0}$, $s$ is the $a$-successor of $s_0$ in $\gt$,
        $M'=\nextmsc{M}{s_0}$ and 
        $M'\not\in\existentialmsclanguageof{\gt_s}$,
        where $\gt_s$ is the global type obtained from $\gt$ by setting the initial state to $s$.
    \end{itemize}
\end{lem}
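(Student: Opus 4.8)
The plan is to prove both directions by a careful unpacking of the definitions of $\nextarrow{M}{s_0}$ and $\nextmsc{M}{s_0}$, together with the fact that $\gt$ is deterministic on arrows and commutation-deterministic. The key observation is that, since $\gt$ is deterministic, a non-empty MSC $M$ belongs to $\existentialmsclanguageof{\gt}$ if and only if there is a sequence of arrows $w = a_1 \cdots a_k$ with $\mscof{w} = M$ that is accepted by $\gt$, and that one can always try to build such a $w$ greedily, choosing at state $s_0$ the arrow $a = \nextarrow{M}{s_0}$ first. The content of the lemma is precisely that this greedy strategy is \emph{complete}: if $M \in \existentialmsclanguageof{\gt}$ then the greedy step succeeds (so $\nextarrow{M}{s_0}$ and $\nextmsc{M}{s_0}$ are both defined and the residual MSC lies in the language of $\gt_s$), and conversely if the greedy step can be carried out and the residual is in the language, then $M \in \existentialmsclanguageof{\gt}$.

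For the contrapositive of the ``only if'' direction ($M \in \existentialmsclanguageof{\gt}$ implies the negation of the three bullet points), I would take a word $w$ with $\mscof{w} = M$ accepted by an (accepting) run of $\gt$ starting from $s_0$. Its first arrow $a_1$ is in $\choicesof{\gt}{s_0}$, so $M$ contains an arrow of $\choicesof{\gt}{s_0}$ and hence $a := \nextarrow{M}{s_0}$ is defined. I then argue that $a$ can be moved to the front of $w$: $a$ occurs in $w$, and every arrow strictly before the first occurrence of $a$ in $w$ must commute with $a$ — otherwise it would be an earlier occurrence of an arrow of $\choicesof{\gt}{s_0}$ (if it were in $\choicesof{\gt}{s_0}$, contradicting commutation-determinism, since two non-commuting arrows $b, a \in \choicesof{\gt}{s_0}$ cannot both be minimal; more carefully, since $\gt$ is deterministic at $s_0$ the second arrow of $\choicesof{\gt}{s_0}$ appearing would have no enabling transition), or else it blocks $a$ but does not lie in $\choicesof{\gt}{s_0}$, contradicting that the first letter of $w$ is in $\choicesof{\gt}{s_0}$ and that $w$ is a valid accepting run. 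Here the precise bookkeeping is: the prefix of $w$ up to but excluding the first $a$ consists of arrows all commuting with $a$, so $w \sim a \cdot w'$ for some $w'$ with $\mscof{a \cdot w'} = M$; hence $\nextmsc{M}{s_0} = \mscof{w'}$ is defined, and since $\gt$ is deterministic, the run of $\gt$ on $a \cdot w'$ that agrees with the (commuted) accepting run moves from $s_0$ to $s$ on $a$ and then accepts $w'$, so $\mscof{w'} \in \existentialmsclanguageof{\gt_s}$, refuting the third bullet. Determinism is what guarantees that commuting adjacent arrows of an accepting word again yields an accepting word reaching the same state — this should either be cited from the commutation-closure machinery or stated as a small auxiliary fact about DFAs over arrows.

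For the ``if'' direction I assume the negation of all three bullets: $a := \nextarrow{M}{s_0}$ is defined, $M' := \nextmsc{M}{s_0}$ is defined with $M = \mscof{a \cdot w'}$ for some $w'$ with $\mscof{w'} = M'$, $s$ is the $a$-successor of $s_0$, and $M' \in \existentialmsclanguageof{\gt_s}$. Then there is $v$ with $\mscof{v} = M'$ accepted by $\gt_s$, i.e. $\gt$ has an accepting run on $a \cdot v$ from $s_0$. Since $\mscof{a \cdot v} = \mscof{a \cdot w'} = M$, we get $M \in \existentialmsclanguageof{\gt}$. This direction is essentially immediate once the definitions are in place.

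The main obstacle I expect is the first direction, specifically the argument that in an accepting word $w$ for $M$ no non-commuting arrow can precede the first occurrence of $a = \nextarrow{M}{s_0}$, and that after commuting $a$ to the front the resulting word is still accepted by $\gt$ reaching state $s$. This is where commutation-determinism of $\gt$ and determinism of $\gt$ as a DFA are both essential, and where one must be careful about the difference between ``$a$ occurs in $w$'' and ``$a$ occurs first in some representative of $[w]$''. I would isolate as an auxiliary claim: \emph{if $\gt$ is commutation-deterministic, $w \in \languageofnfa{\gt}$, and $a = \nextarrow{\mscof{w}}{s_0}$, then there is $w' $ with $w \sim a \cdot w'$ and $a \cdot w' \in \languageofnfa{\gt}$, and moreover the $\gt$-run on $a \cdot w'$ visits the $a$-successor of $s_0$ after the first step} — and prove it by induction on the position of the first occurrence of $a$ in $w$, using at each step that swapping two adjacent commuting arrows in an accepted word preserves acceptance (a standard fact, since such arrows involve disjoint process pairs and the DFA is over the arrow alphabet). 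Everything else is then routine.
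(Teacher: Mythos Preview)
Your plan over-complicates the ``if'' direction (showing that each bullet implies $M\notin\existentialmsclanguageof{\gt}$) and in doing so relies on a claim that is actually false. You correctly observe that for any $w\in\languageofnfa{\gt}$ with $\mscof{w}=M$, the first letter $a_1$ of $w$ lies in $\choicesof{\gt}{s_0}$. But then you \emph{already have} $a_1=\nextarrow{M}{s_0}$: by definition $\nextarrow{M}{s_0}$ is the first arrow of $\choicesof{\gt}{s_0}$ occurring in (any representative) $w$, and $a_1$ is at position~1, so it \emph{is} that first occurrence. There is nothing to commute to the front; $w$ is already of the form $a\cdot w'$ with $a=\nextarrow{M}{s_0}$, and determinism of $\gt$ immediately gives $w'\in\languageofnfa{\gt_s}$, hence $\nextmsc{M}{s_0}=\mscof{w'}\in\existentialmsclanguageof{\gt_s}$. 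This is exactly the paper's (very short) argument.

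The auxiliary ``standard fact'' you plan to invoke --- that swapping two adjacent commuting arrows in an accepted word preserves acceptance by $\gt$ --- is \emph{not} a fact for commutation-deterministic global types. Take $\gt$ with states $s_0\xrightarrow{a}s_1\xrightarrow{b}s_2$, $s_2$ accepting, where $a=\gtlabel{p}{q}{m}$ and $b=\gtlabel{r}{s}{m'}$ commute. This $\gt$ is trivially commutation-deterministic (each state has at most one outgoing arrow), yet $ab\in\languageofnfa{\gt}$ while $ba\notin\languageofnfa{\gt}$. So your proposed inductive argument, moving $a$ forward by swaps that preserve acceptance, would fail. Fortunately the whole manoeuvre is unnecessary, as explained above. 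Your treatment of the other direction (none of the bullets holds $\Rightarrow$ $M\in\existentialmsclanguageof{\gt}$) is fine and is indeed immediate from the definitions.
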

\begin{proof}
    If $\nextarrow{M}{s_0}$ is undefined, then $M$ contains no arrow of $\choicesof{\gt}{s_0}$, 
    so for every sequence of arrows $w$ such that $\mscof{w}=M$, $w$ does not start with
    an arrow of $\choicesof{\gt}{s_0}$, therefore $w\not\in\languageofnfa{\gt}$.
    Similarly, when $\nextmsc{M}{s_0}$ is undefined, every sequence of arrows $w$ such that $\mscof{w}=M$ does not start with
    an arrow of $\choicesof{\gt}{s_0}$, and again $w\not\in\languageofnfa{\gt}$.
    In the third case, assume by contradiction that $M\in\existentialmsclanguageof{\gt}$,
    i.e. there is a sequence of arrows $w$ such that $\mscof{w}=M$ and $w\in\languageofnfa{\gt}$.
    Then $w=aw'$ for some $a=\nextarrow{M}{s_0}$ and $w'\in\languageofnfa{\gt,s}$, with $\nextmsc{M}{s_0}=\mscof{w'}$. It follows that     
    $\nextmsc{M}{s_0}\in\existentialmsclanguageof{\gt_s}$, and the contradiction.
\end{proof}

\begin{thm}\label{thm:sender-driven-choice-complementation}
    Assume $\gt$ is a commutation-deterministic global type,
    and let $\renun{\gt}$ denote the global type defined as in Def~\ref{def:complement-of-a-sender-driven-gt}.
    Then $\renun{\gt}$ is a complement of $\gt$.
\end{thm}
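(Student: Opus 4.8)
The plan is to prove $\existentialmsclanguageof{\renun{\gt}} = \mscsetofmodel{\synchmodel}\setminus\existentialmsclanguageof{\gt}$ by double inclusion, using Lemma~\ref{lem:nextarrow-and-nextmsc} as the inductive engine. The natural induction parameter is the number of events (or arrows) in the MSC $M$; the empty MSC is handled separately (the initial state of $\renun{\gt}$ coincides with that of $\gt$, and a state $s$ of $\gt$ is accepting in $\renun{\gt}$ iff it is \emph{non}-accepting in $\gt$, so $M_\varepsilon\in\existentialmsclanguageof{\renun{\gt}}$ iff $M_\varepsilon\notin\existentialmsclanguageof{\gt}$). For the inductive step I would set up a more refined invariant describing, for every state $s$ of $\gt$, exactly which MSCs are accepted by $\renun{\gt}$ when started from $s$, from $\comp s$, from $(s,a)$, and from $(\comp s,a)$ — this refinement is essentially forced, because the three kinds of ``renunciation states'' in Definition~\ref{def:complement-of-a-sender-driven-gt} are reached only after a prefix has already been consumed.

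\textbf{The core invariant} I would aim to establish, simultaneously for all states $s$ of $\gt$ by induction on $|M|$, is roughly:
\begin{itemize}
\item $M$ is accepted by $\renun{\gt}$ started from $s$ iff $M\notin\existentialmsclanguageof{\gt_s}$;
\item $M$ is accepted from $\comp s$ iff $M$ contains no arrow of $\choicesof{\gt}{s}$ (``definitive renunciation'': the prefix already consumed was in $\existentialmsclanguageof{\gt_s}$-land only up to $s$, and nothing in $\choicesof{\gt}{s}$ ever comes);
\item $M$ is accepted from $(s,a)$ (resp.\ $(\comp s,a)$) iff $\nextarrow{M}{s}=a$ but $\nextmsc{M}{s}$ is undefined, i.e.\ the arrow $a$ is ``blocked'' by an earlier non-commuting arrow outside $\choicesof{\gt}{s}$ — with the $\comp s$ variant recording that the blocking arrow has already appeared.
\end{itemize}
Once this invariant is in place, taking $s=s_0$ gives the theorem. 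The inductive step for the first bullet is exactly where Lemma~\ref{lem:nextarrow-and-nextmsc} bites: given non-empty $M$, $\renun{\gt}$ at $s$ must consume \emph{some} first arrow $b$ of \emph{some} linearisation of $M$; I split on whether $b\in\choicesof{\gt}{s}$ (then $b$ must equal $\nextarrow{M}{s}$ for $\renun{\gt}$ to make progress along a $\gt$-transition, and we recurse on $\gt_{s'}$ with $M'=\nextmsc{M}{s}$, matching clause~3 of the lemma), or $b\notin\choicesof{\gt}{s}$ (then $\renun{\gt}$ moves to $\comp s$, to $(s,a)$, or to $(\comp s,a)$ depending on whether $b$ commutes with the elements of $\choicesof{\gt}{s}$, and we invoke the second and third bullets of the invariant). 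The fact that $\nextarrow{M}{s}$ and $\nextmsc{M}{s}$ are well-defined independently of the chosen linearisation (guaranteed by commutation-determinism) is what makes this case analysis coherent.

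\textbf{The main obstacle} I anticipate is the bookkeeping around the ``provisory renunciation'' states $(s,a)$ and $(\comp s,a)$: one must check that the loops labelled $\neg a\wedge\dots$ on these states precisely allow all and only those MSCs in which $a$ stays blocked, that the single transition $\big((\comp s,a),a,s_{\mathsf{acc}}\big)$ correctly ``releases'' $a$ exactly when the blocking arrow has been seen, and that the absorbing accepting sink $s_{\mathsf{acc}}$ (with self-loop on all of $\Arrows$) then accepts the remaining suffix unconditionally — the point being that once we have committed to a linearisation that exhibits the blockage, the rest of $M$ is irrelevant to membership. A secondary subtlety is that $\renun{\gt}$ is nondeterministic and \emph{not} commutation-closed (as the $M_3$ example shows), so for the ``$\supseteq$'' inclusion ($M\notin\existentialmsclanguageof{\gt}\Rightarrow M\in\existentialmsclanguageof{\renun{\gt}}$) I must \emph{construct} a witnessing linearisation of $M$, following the strategy flagged in the text: greedily follow $\gt$-transitions (consuming $\nextarrow{\cdot}{\cdot}$ arrows) as long as possible, and at the first state $s$ where this is impossible, diagnose via Lemma~\ref{lem:nextarrow-and-nextmsc} whether we are in the ``no choice arrow'' case (go to $\comp s$ and finish) or the ``blocked choice arrow'' case (go to $(s,a)$, loop until the blocker appears, pass through $(\comp s,a)$, fire $a$, and land in $s_{\mathsf{acc}}$). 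Verifying that this greedy construction always terminates and yields an accepting run is the real content; everything else is routine case-chasing.
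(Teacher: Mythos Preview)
Your proposal is correct and follows essentially the same approach as the paper's own proof: characterise what is accepted from the auxiliary states $\comp{s}$ and $(s,a)$, then use Lemma~\ref{lem:nextarrow-and-nextmsc} to drive an induction on the size of $M$ showing $M\notin\existentialmsclanguageof{\gt_s}\iff M\in\existentialmsclanguageof{\renun{\gt_s}}$. The paper's proof is extremely terse (three sentences, each beginning ``it is routine''), so your expanded outline---including the greedy witness construction for the $\supseteq$ direction and the explicit handling of the base case---fills in exactly the details the paper omits; one small point is that the paper states the $(s,a)$ characterisation only as an ``if'' (sufficient for building the accepting run), whereas you state an ``iff'', and your $(\comp{s},a)$ characterisation would need slight adjustment (from $(\comp{s},a)$ one only needs $\nextarrow{M}{s}=a$, with no blockedness constraint, since the blocking arrow has already been consumed), but this does not affect the overall argument.
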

\begin{proof}
    It is routine to check that an MSC $M$ is accepted by $\renun{\gt}$ 
    starting from a state $\comp{s}$ if and only if $\nextarrow{M}{s}$ is undefined. It is also routine to check that an MSC $M$ is accepted by $\renun{\gt}$
    starting from a state $(s,a)$ if $a=\nextarrow{M}{s}$ and $\nextmsc{M}{s}$ is undefined.
    It is then also routine, thanks to Lemma~\ref{lem:nextarrow-and-nextmsc}, to show by induction on the size of an MSC $M$ that
    $M\not\in\existentialmsclanguageof{\gt_s}$ if and only if $M\in\existentialmsclanguageof{\renun{\gt_s}}$.
\end{proof}

Hence summing up:
\begin{thm}\label{thm:complement}
Not all global types are complementable. A global type $\gt$ is complementable 
in the following cases:
\begin{enumerate}
\item if $\gt$ is commutation-closed (in particular with at most 3 processes), then $\dualdfaof{\gt}$ is a complement of $\gt$
 (with a linear increase in the number of states)
\item if $\gt$ is deadlock-free realisable in $\synchmodel$, 
then $\dualdfaof{\productof{\projectionof{\gt}}}$ is a complement of $\gt$ 
(with a doubly exponential blowup in the number of states)
\item if $\gt$ has sender-driven choice, it admits an effectively computable complement  with a linear increase in the number of states
\end{enumerate}
\end{thm}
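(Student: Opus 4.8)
The plan is to assemble Theorem~\ref{thm:complement} from the results already established in this subsection, taking care in each case to also track the announced size bound. The negative statement, that not all global types are complementable, is precisely Lemma~\ref{thm:not-all-gt-are-complementable}, so nothing further is needed there. For the first positive case I would invoke Lemma~\ref{prop:commutation-closed-implies-complementable}: when $\gt$ is commutation-closed, $\dualdfaof{\gt}$ is a complement of $\gt$; and Corollary~\ref{coro:3-participants-complementable} already records that every global type over at most three participants is commutation-closed, since any two arrows share a participant and therefore do not commute, forcing $\existentialmsclanguageof{\gt}=\universalmsclanguageof{\gt}$. The complexity claim is immediate: dualisation only swaps accepting and non-accepting states, after possibly adding one sink state to complete $\gt$, hence at most a linear increase in the number of states.

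For the second case I would apply Lemma~\ref{thm:realisable-complementable}, which gives that if $\gt$ is deadlock-free realisable in $\synchmodel$ then $\dualdfaof{\productof{\projectionof{\gt}}}$ is a complement of $\gt$. For the size bound I would chain the three constructions involved: the projected system $\projectionof{\gt}$ is linear in $\gt$ (one CFSM per process, with essentially the same transition structure); its synchronous pre-product has a state space that is the product of the local state spaces, hence at most exponentially many states; the determinisation adds a second, independent exponential through the subset construction; and the final dualisation is again linear. Composing these gives the claimed doubly exponential blowup.

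For the third case, the one point worth spelling out is that a sender-driven global type is in particular commutation-deterministic in the sense of Definition~\ref{def:sender-driven}: it is deterministic by hypothesis, and any two arrows $\marrow{p}{q_i}{m_i}$ and $\marrow{p}{q_j}{m_j}$ leaving the same state share the sender $p$, so they do not commute. Theorem~\ref{thm:sender-driven-choice-complementation} then yields that $\renun{\gt}$ from Definition~\ref{def:complement-of-a-sender-driven-gt} is a complement of $\gt$. The size bound is read off directly from that definition: the state set of $\renun{\gt}$ is $S'\cup(S'\times\Arrows)\cup\{s_{\mathsf{acc}}\}$, i.e.\ linear in $\cardinalof{S}$ for a fixed alphabet $\Arrows$, and, as remarked after the definition, labelling transitions with Boolean combinations of arrows keeps the representation linear in the size of $\gt$ as well.

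The proof therefore carries no new mathematical content; the only thing requiring any attention is purely bookkeeping — keeping the size measures consistent (number of states, with $\Arrows$ treated as fixed) across the three cases, and noting the mild generalisation from \emph{sender-driven} to \emph{commutation-deterministic} needed to land within the scope of Theorem~\ref{thm:sender-driven-choice-complementation}. Accordingly, the main (and essentially only) obstacle is simply having all of the underlying lemmas in place, which by this point in the paper they are.
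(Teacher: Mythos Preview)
Your proposal is correct and follows essentially the same approach as the paper: the paper's own proof simply cites Lemma~\ref{thm:not-all-gt-are-complementable} for the negative part and Lemmas~\ref{prop:commutation-closed-implies-complementable}, \ref{thm:realisable-complementable}, and Theorem~\ref{thm:sender-driven-choice-complementation} for the three positive cases, with no further elaboration. Your write-up is in fact more thorough, as it also spells out the size bounds and the observation that sender-driven implies commutation-deterministic, both of which the paper leaves implicit.
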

\begin{proof}
\begin{enumerate}
\item  Follows from Lemma \ref{thm:not-all-gt-are-complementable}
\item 
\begin{enumerate}
 \item Follows from Lemma \ref{prop:commutation-closed-implies-complementable}
 \item Follows from Lemma \ref{thm:realisable-complementable}
 \item Follows from Lemma \ref{thm:sender-driven-choice-complementation}\qedhere
 \end{enumerate}
 \end{enumerate}
\end{proof}

Complexities of construction are  summarised in the following table:

\begin{center}
\begin{tabular}{|c|c|c|c|}
\hline
\textbf{Class of global types} & \textbf{Complementation procedure} & \textbf{Size} \\
\hline
$|\Procs|\leq 3$ & $\gt\mapsto \dualdfaof{\gt}$ & linear \\
\hline
commutation-closed & $\gt\mapsto \dualdfaof{\gt}$ & linear \\
\hline
realisable & $\gt\mapsto \dualdfaof{\productof{\projectionof{\gt}}}$ & doubly exponential \\
\hline
sender-driven choices & $\gt\mapsto \renun{\gt}$ & linear \\
\hline
\end{tabular}
\end{center}

Notice that apart from the classes mentioned above, deciding whether a complement exists and compute it, is an open problem.

	\section{Decidability}\label{sec:decidability}
	

Deadlock-free realisability in $\ppmodel$ is known to be undecidable for global types~\cite{DBLP:journals/tcs/Lohrey03}, and decidable for sender-driven choices~\cite{Stutz24-phd}. Here we  show that the realisability problem, for global types for which an explicit complement is given, is decidable in PSPACE for the synchronous communication model and in EXPSPACE for RegSC (see Definition~\ref{def:regsc-comm-model} below), causally-closed, communication models.
We show the result for $\synchmodel$, and then extend it to any causally-closed communication model.
But first, we consider the decidability of the subtyping problem.

\subsection{Decidability of Subtyping}
\label{sec:decidability-subtyping}

\begin{thm}\label{thm:subtyping-decidability}
    Given a global type $\gt$ 
    and a complementable global type $\gt'$ with an explicit commutation-closed complement $\comp{\gt'}$, it is decidable in NLOGSPACE whether $\gt$ is a subtype of $\gt'$.
\end{thm}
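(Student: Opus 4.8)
The plan is to reduce subtyping to a language-inclusion (equivalently, emptiness-of-intersection) problem on the underlying DFAs over the alphabet $\Arrows$, and then argue that this reduction stays within $\NLOGSPACE$. By Theorem~\ref{thm:subtyping-agnostic-communication-model}, $\gt$ is a subtype of $\gt'$ in any communication model containing $\synchmodel$ if and only if $\existentialmsclanguageof{\gt}\subseteq\existentialmsclanguageof{\gt'}$. So the first step is to rephrase this MSC-language inclusion as a statement about the arrow-languages $\languageofnfa{\gt}$ and $\languageofnfa{\gt'}$. The key observation is that $\existentialmsclanguageof{\gt}\subseteq\existentialmsclanguageof{\gt'}$ holds iff every word $w\in\languageofnfa{\gt}$ has $\mscof{w}\in\existentialmsclanguageof{\gt'}$, i.e. iff for every $w\in\languageofnfa{\gt}$ there is some $w'\sim w$ with $w'\in\languageofnfa{\gt'}$. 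Here is where commutation-closure of the complement is exploited: $\comp{\gt'}$ commutation-closed means $\existentialmsclanguageof{\comp{\gt'}}=\universalmsclanguageof{\comp{\gt'}}$, and since $\comp{\gt'}$ is a complement of $\gt'$ we have $\existentialmsclanguageof{\gt'}=\mscsetofmodel{\synchmodel}\setminus\existentialmsclanguageof{\comp{\gt'}}=\mscsetofmodel{\synchmodel}\setminus\msclanguageofcc{\comp{\gt'}}$. Therefore $\mscof{w}\in\existentialmsclanguageof{\gt'}$ iff $\mscof{w}\notin\msclanguageofcc{\comp{\gt'}}$, and by commutation-closure this is equivalent to $w\notin\languageofnfa{\comp{\gt'}}$ (if $w$ were in $\languageofnfa{\comp{\gt'}}$ then $\mscof{w}\in\existentialmsclanguageof{\comp{\gt'}}=\msclanguageofcc{\comp{\gt'}}$; conversely if $\mscof{w}\in\msclanguageofcc{\comp{\gt'}}=\universalmsclanguageof{\comp{\gt'}}$ then every representative, in particular $w$, is in $\languageofnfa{\comp{\gt'}}$).

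Putting these equivalences together: $\gt$ is a subtype of $\gt'$ iff for all $w\in\languageofnfa{\gt}$ we have $w\notin\languageofnfa{\comp{\gt'}}$, i.e. iff $\languageofnfa{\gt}\cap\languageofnfa{\comp{\gt'}}=\emptyset$. So the second step is to reduce the problem to checking emptiness of $\languageofnfa{\gt\otimes\comp{\gt'}}$, using the product construction of the preliminaries, which satisfies $\languageofnfa{\gt\otimes\comp{\gt'}}=\languageofnfa{\gt}\cap\languageofnfa{\comp{\gt'}}$. The third step is the complexity bound: the product automaton $\gt\otimes\comp{\gt'}$ has state set the Cartesian product of the two state sets, hence polynomial size, but crucially it need not be built explicitly — its states, the initial state, the accepting states, and its transitions can all be recognised on the fly in logarithmic space (a state is a pair of pointers into the two inputs; a transition is checked by consulting each input's transition table). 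Nonemptiness of an NFA given in this way is reachability from the initial state to an accepting state, which is in $\NLOGSPACE$ by Savitch/Immerman–Szelepcsényi-style graph reachability; and $\NLOGSPACE$ is closed under complement, so deciding emptiness (the complement of reachability) is also in $\NLOGSPACE$. This yields the claimed bound.

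I expect the main obstacle to be the second equivalence in the first paragraph — the precise use of commutation-closure of $\comp{\gt'}$ to move from the MSC level back down to the word level. One must be careful that "$\mscof{w}\notin\msclanguageofcc{\comp{\gt'}}$ iff $w\notin\languageofnfa{\comp{\gt'}}$" genuinely uses $\existentialmsclanguageof{\comp{\gt'}}=\universalmsclanguageof{\comp{\gt'}}$ in both directions and does not secretly require $\gt'$ itself to be commutation-closed (it does not). A secondary point worth stating cleanly is that the definition of complement is phrased over $\mscsetofmodel{\synchmodel}$, and every $\mscof{w}$ for $w\in\Arrows^*$ is indeed $\synchmodel$-linearisable (it is $\mscof{\labeltoexec{w}}$ with $\labeltoexec{w}$ a synchronous execution), so the set difference is taken in the right universe. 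Once these are pinned down, the reduction to $\languageofnfa{\gt\otimes\comp{\gt'}}=\emptyset$ and the $\NLOGSPACE$ reachability argument are routine.
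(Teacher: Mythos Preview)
Your proof is correct and follows essentially the same approach as the paper: reduce subtyping to $\existentialmsclanguageof{\gt}\subseteq\existentialmsclanguageof{\gt'}$ via Theorem~\ref{thm:subtyping-agnostic-communication-model}, use the commutation-closed complement $\comp{\gt'}$ to turn this into emptiness of $\languageofnfa{\gt\otimes\comp{\gt'}}$, and conclude by NLOGSPACE emptiness of the product. The only difference is packaging: the paper invokes Lemma~\ref{lem:product-of-gt} (which gives $\existentialmsclanguageof{\gt\otimes\comp{\gt'}}=\existentialmsclanguageof{\gt}\cap\msclanguageofcc{\comp{\gt'}}$) to pass from the MSC level to the word level in one step, whereas you unfold that argument inline via the equivalence ``$\mscof{w}\in\msclanguageofcc{\comp{\gt'}}$ iff $w\in\languageofnfa{\comp{\gt'}}$''.
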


\begin{proof}
    Thanks to Theorem~\ref{thm:subtyping-agnostic-communication-model},
    $\gt$ is a subtype of $\gt'$ if and only if
    $\existentialmsclanguageof{\gt} \subseteq \existentialmsclanguageof{\gt'}$,
    i.e., $\existentialmsclanguageof{\gt} \cap \msclanguageofcc{\comp{\gt'}} = \emptyset$.
    By Lemma~\ref{lem:product-of-gt}, this reduces to check whether
    $\languageofnfa{\gt\otimes\comp{\gt'}} = \emptyset$, which is decidable in NLOGSPACE.
\end{proof}

It is possible to relax the assumption on the complement of $\gt'$ to be commutation-closed, as follows.

\begin{thm}\label{thm:subtyping-decidability-2}
    Given a global type $\gt$ that is realisable in $\synchmodel$, and a complementable global type $\gt'$ with an explicit complement $\comp{\gt'}$, it is decidable in EXPSPACE whether $\gt$ is a subtype of $\gt'$.
\end{thm}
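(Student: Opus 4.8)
The plan is to reduce the subtyping question, via Theorem~\ref{thm:subtyping-agnostic-communication-model}, to checking emptiness of $\existentialmsclanguageof{\gt}\cap\existentialmsclanguageof{\comp{\gt'}}$, and then to exploit the fact that $\gt$ is $\synchmodel$-realisable to turn $\gt$ into an equivalent commutation-closed global type so that Lemma~\ref{lem:product-of-gt} becomes applicable. Concretely, by Theorem~\ref{thm:subtyping-agnostic-communication-model} we know $\gt$ is a subtype of $\gt'$ iff $\existentialmsclanguageof{\gt}\subseteq\existentialmsclanguageof{\gt'}$, and since $\comp{\gt'}$ is a complement of $\gt'$ this is equivalent to $\existentialmsclanguageof{\gt}\cap\existentialmsclanguageof{\comp{\gt'}}=\emptyset$. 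The obstacle compared to Theorem~\ref{thm:subtyping-decidability} is that $\comp{\gt'}$ is no longer assumed commutation-closed, so $\existentialmsclanguageof{\comp{\gt'}}$ is not of the form $\msclanguageofcc{\cdot}$ and Lemma~\ref{lem:product-of-gt} cannot be invoked directly on $\gt'$'s side.

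The key step is therefore to replace $\gt$ by a commutation-closed global type recognising the same existential MSC language. Since $\gt$ is deadlock-free realisable in $\synchmodel$, the proof of Lemma~\ref{thm:realisable-complementable} already shows that $\productof{\projectionof{\gt}}$ is a commutation-closed global type with $\msclanguageofcc{\productof{\projectionof{\gt}}}=\existentialmsclanguageof{\gt}$ (this follows from Condition~(CC), which gives $\msclanguageof{\projectionof{\gt}}{\synchmodel}=\existentialmsclanguageof{\gt}$, together with Proposition~\ref{prop:product-is-commutation-closed}). Write $\gt'' \eqdef \productof{\projectionof{\gt}}$. Then, applying Lemma~\ref{lem:product-of-gt} with $\gt_1=\comp{\gt'}$ and $\gt_2=\gt''$ (which is commutation-closed), we obtain
$$
\existentialmsclanguageof{\comp{\gt'}\otimes\gt''}=\existentialmsclanguageof{\comp{\gt'}}\cap\msclanguageofcc{\gt''}=\existentialmsclanguageof{\comp{\gt'}}\cap\existentialmsclanguageof{\gt}.
$$
Hence $\gt$ is a subtype of $\gt'$ if and only if $\existentialmsclanguageof{\comp{\gt'}\otimes\gt''}=\emptyset$, and since $\existentialmsclanguageof{\cdot}=\emptyset$ iff the underlying arrow language $\languageofnfa{\cdot}$ is empty, this amounts to deciding $\languageofnfa{\comp{\gt'}\otimes\gt''}=\emptyset$, i.e. $\languageofnfa{\comp{\gt'}}\cap\languageofnfa{\gt''}=\emptyset$.

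The complexity claim then follows from a size analysis. The global type $\gt''=\productof{\projectionof{\gt}}=\detof{\preproductof{\projectionof{\gt}}}$ is, by the remark after Lemma~\ref{thm:realisable-complementable}, of size at most doubly exponential in the size of $\gt$; however one need not build it explicitly: its states are subsets of the product of the local state spaces, each representable in exponential space, and its transitions can be generated on the fly. The product with $\comp{\gt'}$ adds only the (polynomial) state component of $\comp{\gt'}$. Emptiness of an NFA is checkable in NLOGSPACE in the size of the automaton, i.e. in nondeterministic space polynomial in the exponential-sized state descriptions, hence in EXPSPACE overall; by Savitch/Immerman this nondeterministic bound collapses to the same deterministic space class. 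The main thing to be careful about is exactly this on-the-fly search: one must argue that a state of $\preproductof{\projectionof{\gt}}$, and a subset thereof after determinisation, admits a description of polynomial (resp. exponential) size and that successor states and the initial/accepting conditions are computable within that budget, so that the standard reachability procedure for nonemptiness runs in EXPSPACE. I expect this bookkeeping, rather than the language-theoretic identity, to be the only delicate part.
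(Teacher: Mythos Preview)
Your proposal is correct and follows essentially the same route as the paper: replace $\gt$ by the commutation-closed global type $\productof{\projectionof{\gt}}$ (the paper writes it as $\comp{\comp{\gt}}$, invoking Theorem~\ref{thm:complement}), then apply Lemma~\ref{lem:product-of-gt} with the roles swapped so that the commutation-closed side absorbs the intersection, reducing subtyping to emptiness of $\languageofnfa{\comp{\gt'}\otimes\productof{\projectionof{\gt}}}$. Your complexity discussion is in fact more explicit than the paper's, which simply relies on the doubly-exponential size bound and leaves the on-the-fly argument implicit.
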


\begin{proof}
    By Theorem~\ref{thm:complement}, there is a commutation-closed global type $\comp{\comp{\gt}}$
    of double exponential size such that 
    $\existentialmsclanguageof{\gt} = \msclanguageofcc{\comp{\comp{\gt}}}$.
    Thanks to Theorem~\ref{thm:subtyping-agnostic-communication-model},
    $\gt$ is a subtype of $\gt'$ if and only if
    $\existentialmsclanguageof{\gt} \cap  \existentialmsclanguageof{\comp{\gt'}}=
    \msclanguageofcc{\comp{\comp{\gt}}} \cap  \existentialmsclanguageof{\comp{\gt'}}
    \emptyset$.
    By Lemma~\ref{lem:product-of-gt}, this reduces to check whether
    $\languageofnfa{\comp{\comp{\gt}}\otimes\comp{\gt'}} = \emptyset$.
\end{proof}

\begin{cor}
    \label{cor:subtyping-decidability}
    Subtyping between two realisable global types 
    is decidable in EXPSPACE.
\end{cor}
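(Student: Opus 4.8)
Corollary~\ref{cor:subtyping-decidability}: subtyping between two realisable global types is decidable in EXPSPACE.

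The plan is to derive this immediately from Theorem~\ref{thm:subtyping-decidability-2}. That theorem already gives an EXPSPACE decision procedure whenever the supertype $\gt'$ is complementable with an explicitly given complement, and the subtype $\gt$ is realisable in $\synchmodel$. So the only gap to fill is: when $\gt'$ is realisable (in $\synchmodel$), it is in particular complementable, and moreover a complement of it can be \emph{constructed} (so that it can be fed as the ``explicit'' complement required by Theorem~\ref{thm:subtyping-decidability-2}).

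First I would invoke Theorem~\ref{thm:complement}, case~(2) (equivalently Lemma~\ref{thm:realisable-complementable}): if $\gt'$ is deadlock-free realisable in $\synchmodel$, then $\comp{\gt'} \eqdef \dualdfaof{\productof{\projectionof{\gt'}}}$ is a complement of $\gt'$, obtained with at most a doubly exponential blow-up in the number of states. This $\comp{\gt'}$ is effectively computable from $\gt'$, so it qualifies as an ``explicit complement'' in the sense required by Theorem~\ref{thm:subtyping-decidability-2}. Second, $\gt$ being realisable in $\synchmodel$ is exactly the remaining hypothesis of Theorem~\ref{thm:subtyping-decidability-2}. Applying that theorem to $\gt$ and to $\gt'$ together with the computed complement $\comp{\gt'}$ yields an EXPSPACE procedure deciding whether $\gt$ is a subtype of $\gt'$.

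The only subtlety — and the one point that needs a sentence of care rather than a one-line citation — is the bookkeeping of sizes: $\comp{\gt'}$ is already doubly exponential in $|\gt'|$, and then Theorem~\ref{thm:subtyping-decidability-2}'s own proof introduces a further doubly-exponential commutation-closure $\comp{\comp{\gt}}$ of $\gt$, and finally an emptiness check on a product NFA $\comp{\comp{\gt}}\otimes\comp{\gt'}$. I would note that the product of two automata of at most doubly exponential size is still of doubly exponential size, and NFA emptiness is in NLOGSPACE in the size of the automaton, hence the whole thing runs in space polynomial in a doubly-exponential quantity, i.e.\ in EXPSPACE; crucially the product automaton need not be materialised — its states and transitions can be generated on the fly — so the space bound is genuinely EXPSPACE and not merely 2EXPTIME. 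I do not expect any real obstacle here; this is a routine composition of two already-proved results, and the ``hard part'', such as it is, is simply making the size accounting explicit so that the EXPSPACE claim is visibly justified. A proof sketch:

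\begin{proof}
    Let $\gt$ and $\gt'$ be two realisable global types, i.e., deadlock-free realisable in $\synchmodel$.
    By Theorem~\ref{thm:complement}(2), $\comp{\gt'}\eqdef\dualdfaof{\productof{\projectionof{\gt'}}}$ is a complement of $\gt'$, and it is effectively computable from $\gt'$ (with at most a doubly exponential blow-up in the number of states).
    Since $\gt$ is realisable in $\synchmodel$ and $\comp{\gt'}$ is an explicit complement of $\gt'$, Theorem~\ref{thm:subtyping-decidability-2} applies and decides in EXPSPACE whether $\gt$ is a subtype of $\gt'$.
\end{proof}
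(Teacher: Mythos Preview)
Your approach is correct and is exactly the derivation the paper has in mind: the corollary is stated without proof immediately after Theorem~\ref{thm:subtyping-decidability-2}, and is meant to follow from it together with the fact (Lemma~\ref{thm:realisable-complementable} / Theorem~\ref{thm:complement}(2)) that a $\synchmodel$-realisable global type is effectively complementable.

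The complexity bookkeeping you flag is a genuine point that the paper glosses over, and your resolution is the right one. One small remark: your final proof sketch ends with ``Theorem~\ref{thm:subtyping-decidability-2} applies and decides in EXPSPACE'', but as you yourself observe, Theorem~\ref{thm:subtyping-decidability-2}'s EXPSPACE bound is stated relative to its inputs, which here include the doubly-exponential $\comp{\gt'}$. So a black-box invocation only yields EXPSPACE in $|\gt|+|\comp{\gt'}|$, not in $|\gt|+|\gt'|$. The clean fix is exactly what you sketched in the preceding paragraph: unfold the proof of Theorem~\ref{thm:subtyping-decidability-2} and observe that the decisive step is emptiness of $\comp{\comp{\gt}}\otimes\comp{\gt'}$, a product of two on-the-fly doubly-exponential NFAs, hence NLOGSPACE in a doubly-exponential quantity, i.e.\ EXPSPACE in $|\gt|+|\gt'|$. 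It would be worth moving that sentence into the proof sketch itself rather than leaving it only in the surrounding discussion.
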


\subsection{Realisability in the Synchronous Model}

\begin{thm}
    \label{thm:decidability-of-implementability-in-synch}
    Given a global type $\gt$ and its complement $\gt'$, it is decidable in PSPACE whether  $\gt$ is deadlock-free realisable in $\synchmodel$.    
\end{thm}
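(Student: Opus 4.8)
The plan is to reduce deadlock-free realisability in $\synchmodel$ to two automata-theoretic emptiness/inclusion checks, both of which can be carried out in \textsc{PSPACE} because all the automata involved are of polynomial size (or can be explored on the fly without materialising exponential objects). Recall that, by Definition~\ref{def:realisability}, $\gt$ is deadlock-free realisable in $\synchmodel$ iff (CC) $\executionsof{\projectionof{\gt}}{\synchmodel}=\executionsof{\gt}{\synchmodel}$ and (DF) $\projectionof{\gt}$ is deadlock-free in $\synchmodel$. For (CC), by Proposition~\ref{prop:msc-version-of-cond1-of-realizability} (which applies since $\executionsofmodel{\synchmodel}\supseteq\executionsofmodel{\synchmodel}$ trivially, and the reverse inclusion $\existentialmsclanguageof{\gt}\subseteq\mscsofcfsms{\projectionof{\gt}}{\synchmodel}$ holds by Proposition~\ref{prop:universal-existential-synch-inclusion}), condition (CC) is equivalent to $\mscsofcfsms{\projectionof{\gt}}{\synchmodel}\subseteq\existentialmsclanguageof{\gt}$. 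Using the complement $\gt'$ of $\gt$, this is $\mscsofcfsms{\projectionof{\gt}}{\synchmodel}\cap\existentialmsclanguageof{\gt'}=\emptyset$. Since $\synchmodel$-MSCs are faithfully coded by sequences of arrows, and $\mscsofcfsms{\projectionof{\gt}}{\synchmodel}=\msclanguageofcc{\productof{\projectionof{\gt}}}$ by Proposition~\ref{prop:product-is-commutation-closed} (so it is a commutation-closed language), Lemma~\ref{lem:product-of-gt} gives $\existentialmsclanguageof{\gt'\otimes\productof{\projectionof{\gt}}}=\existentialmsclanguageof{\gt'}\cap\msclanguageofcc{\productof{\projectionof{\gt}}}$; hence (CC) reduces to checking $\languageofnfa{\gt'\otimes\productof{\projectionof{\gt}}}=\emptyset$.

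The subtlety is that $\productof{\projectionof{\gt}}$ is, in the worst case, doubly exponential, so we must not build it explicitly. Instead I would observe that $\preproductof{\projectionof{\gt}}$ — the synchronous product \emph{before} determinisation — has only polynomially many states (it is a sub-automaton of the product of the projections $\gt_p$, each of which has at most $|\gt|$ states, so $|\preproductof{\projectionof{\gt}}|\le |\gt|^{|\Procs|}$, which is polynomial once $\Procs$ is fixed; more carefully, its state space is a subset of the reachable tuples and can be enumerated in polynomial space). Since $\languageofnfa{\productof{\projectionof{\gt}}}=\languageofnfa{\preproductof{\projectionof{\gt}}}$ (determinisation preserves the language), we may replace $\productof{\projectionof{\gt}}$ by the NFA $\preproductof{\projectionof{\gt}}$ in the emptiness check. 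Thus (CC) amounts to emptiness of $\languageofnfa{\gt'}\cap\languageofnfa{\preproductof{\projectionof{\gt}}}$, i.e.\ of the product NFA $\gt'\otimes\preproductof{\projectionof{\gt}}$, which has polynomially many states and whose emptiness is decidable in \textsc{NLOGSPACE} in its size, hence in \textsc{PSPACE} in $|\gt|+|\gt'|$ overall (and in fact in \textsc{PSPACE} just from the blow-up exponent $|\Procs|$; if $\Procs$ is part of the input this is genuinely \textsc{PSPACE}).

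For (DF), by Proposition~\ref{prop:deadlock-free-as-a-property-on-mscs-for-p2p-and-synch} (the $\acommunicationmodel=\synchmodel$ case), $\projectionof{\gt}$ is deadlock-free in $\synchmodel$ iff $\mscsofcfsms{\acceptcompletion{\projectionof{\gt}}}{\synchmodel}\subseteq\prefixclosureof{\mscsofcfsms{\projectionof{\gt}}{\synchmodel}}$. Translating to sequences of arrows, $\mscsofcfsms{\acceptcompletion{\projectionof{\gt}}}{\synchmodel}$ is recognised by $\acceptcompletion{\preproductof{\projectionof{\gt}}}{}$ (the synchronous product with all states accepting), and the prefix closure of $\mscsofcfsms{\projectionof{\gt}}{\synchmodel}$ is recognised by $\acceptcompletion{\preproductof{\projectionof{\gt}}}{}$ restricted to the states that are co-reachable from an accepting state of $\preproductof{\projectionof{\gt}}$. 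So (DF) is equivalent to: every reachable state of $\preproductof{\projectionof{\gt}}$ is co-reachable from an accepting state. This is a pure reachability property of a polynomial-size NFA, decidable in \textsc{NLOGSPACE}, hence in \textsc{PSPACE}. Combining the two checks, the whole procedure runs in \textsc{PSPACE}.

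The main obstacle I anticipate is the bookkeeping around not materialising $\productof{\projectionof{\gt}}$: one must argue carefully that every place the proof invokes a ``global type'' $\productof{\projectionof{\gt}}$ (which by definition is deterministic, hence possibly doubly exponential) can be replaced by the polynomial-size NFA $\preproductof{\projectionof{\gt}}$ without affecting the relevant language, and that Lemma~\ref{lem:product-of-gt} and Proposition~\ref{prop:product-is-commutation-closed} are applied to the \emph{language} rather than to the automaton. A secondary point is getting the direction of Proposition~\ref{prop:msc-version-of-cond1-of-realizability} right — it only gives the $\subseteq$ half of (CC) for free, and one must remember that the other inclusion $\existentialmsclanguageof{\gt}\subseteq\mscsofcfsms{\projectionof{\gt}}{\synchmodel}$ is automatic by Proposition~\ref{prop:universal-existential-synch-inclusion}, so that (CC) really is just the one emptiness check above. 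Everything else is routine automata theory.
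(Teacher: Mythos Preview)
Your proposal is correct and follows essentially the same approach as the paper: reduce (CC) to emptiness of $\languageofnfa{\gt'\otimes\preproductof{\projectionof{\gt}}}$ by replacing the determinised product $\productof{\projectionof{\gt}}$ with the language-equivalent NFA $\preproductof{\projectionof{\gt}}$, and reduce (DF) to the check that every reachable state of $\preproductof{\projectionof{\gt}}$ is co-reachable from an accepting state; both are handled by lazy on-the-fly exploration of the exponential product, yielding \textsc{PSPACE}. The only wobble is your phrasing that $\preproductof{\projectionof{\gt}}$ ``has polynomially many states'': in general $|\Procs|$ is part of the input, so the product is exponential, and the \textsc{PSPACE} bound comes precisely from the lazy exploration you mention (each state is a tuple representable in polynomial space), not from the automaton being polynomial-size---but you correctly arrive at the right conclusion.
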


\begin{proof}
    Condition~(CC) of Definition~\ref{def:realisability}
    is equivalent to $$\msclanguageofcc{\productof{\projectionof{\gt}}} \cap \existentialmsclanguageof{\gt'} = \emptyset.$$
    By Lemma~\ref{lem:product-of-gt}, this reduces to checking the emptiness of $\languageofnfa{\productof{\projectionof{\gt}}\otimes\gt'}$.
    Recall that $\productof{\cfsms}=\detof{\preproductof{\cfsms}}$, where 
    $\preproductof{\cfsms}$ denotes the synchronous product of the communicating machines keeping all $\varepsilon$-transitions
    and non-determinism.
    It follows that Condition~(CC) of Definition~\ref{def:realisability} is equivalent to checking the emptiness of the language of the NFA $\preproductof{\projectionof{\gt}}\otimes\gt'$.
    The non-emptiness of the language of an NFA
    can be checked in non-deterministic logarithmic space in the size of the NFA,
    which yields a PSPACE algorithm in that case, as the exponential size NFA  $\preproductof{\projectionof{\gt}}$
    can be lazily constructed while guessing a path to an accepting state.
    
    Assuming that Condition~(CC) of Definition~\ref{def:realisability} holds,
    the second condition of Definition~\ref{def:realisability}  is 
    equivalent to whether all control states 
    of $\preproductof{\projectionof{\gt}}$ that are reachable from the initial state
    can reach a final state, which again, for the same reason, can be checked in PSPACE.
\end{proof}

\subsection{Realisability in Causally-Closed Communication Models}

We now show that the realisability problem for complementable global types, given with an explicit complement, is decidable in EXSPACE 
for $\bagmodel$, $\ppmodel$, and $\causalmodel$.
The proof goes throughout the following steps.
\begin{enumerate}
    \item We show that realisability 
    in any communication model that contains $
    \synchmodel$ implies realisability in $\synchmodel$ (Theorem~\ref{thm:pp-realizability-implies-synch-realizability}).
    \item We address the contraposite, and identify three  necessary and sufficient conditions on the projection of $\gt$ in $\acommunicationmodel$ (Theorem~\ref{thm:main-theorem-realisability}), namely 
    \begin{enumerate}
        \item $\projectionof{\gt}$ is RSC~\cite{germerie-phd} 
        \item $\projectionof{\gt}$ is orphan-free, and
        \item $\projectionof{\gt}$ is deadlock-free.
    \end{enumerate}
    \item We show that each of these three  conditions is decidable in EXPSPACE.
    \item We conclude with the decidability of the realisability problem in $\acommunicationmodel$ (Theorem~\ref{thm:decidability-of-implementability-in-p2p} thanks to the  decidability of the realisability problem in $\synchmodel$ (Theorem~\ref{thm:decidability-of-implementability-in-synch}).
\end{enumerate}

\begin{thm}\label{thm:pp-realizability-implies-synch-realizability}
    Assume $\executionsofmodel{\acommunicationmodel}\supseteq\executionsofmodel{\synchmodel}$.
    If the global type $\gt$ is deadlock-free realisable in $\acommunicationmodel$, then $\gt$ is deadlock-free realisable in $\synchmodel$.
\end{thm}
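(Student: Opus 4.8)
The plan is to verify separately the two conditions (CC) and (DF) of Definition~\ref{def:realisability} for $\synchmodel$, using both that $\executionsofmodel{\synchmodel}\subseteq\executionsofmodel{\acommunicationmodel}$ and that $\gt$ already satisfies (CC) and (DF) in $\acommunicationmodel$.

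Condition~(CC) for $\synchmodel$ is the easy part, which I would argue at the level of MSCs via Proposition~\ref{prop:msc-version-of-cond1-of-realizability}. Since $\executionsofmodel{\acommunicationmodel}\supseteq\executionsofmodel{\synchmodel}$, that proposition turns (CC) in $\acommunicationmodel$ into the inclusion $\mscsofcfsms{\projectionof{\gt}}{\acommunicationmodel}\subseteq\existentialmsclanguageof{\gt}$. As $\executionsofmodel{\synchmodel}\subseteq\executionsofmodel{\acommunicationmodel}$ gives $\mscsofcfsms{\projectionof{\gt}}{\synchmodel}\subseteq\mscsofcfsms{\projectionof{\gt}}{\acommunicationmodel}$, we obtain $\mscsofcfsms{\projectionof{\gt}}{\synchmodel}\subseteq\existentialmsclanguageof{\gt}$, which is exactly condition~(CC') for $\synchmodel$; applying Proposition~\ref{prop:msc-version-of-cond1-of-realizability} once more, now with $\acommunicationmodel$ instantiated to $\synchmodel$ (for which its hypothesis holds trivially), yields condition~(CC) in $\synchmodel$.

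Condition~(DF) for $\synchmodel$ is where the real work lies, precisely because deadlock-freedom is not monotonic in the communication model. Here I would use the MSC characterisation of deadlock-freedom for $\synchmodel$ given by Proposition~\ref{prop:deadlock-free-as-a-property-on-mscs-for-p2p-and-synch}: it is enough to show $\mscsofcfsms{\acceptcompletion{\projectionof{\gt}}}{\synchmodel}\subseteq\prefixclosureof{\mscsofcfsms{\projectionof{\gt}}{\synchmodel}}$. So fix $M\in\mscsofcfsms{\acceptcompletion{\projectionof{\gt}}}{\synchmodel}$ and a synchronous execution $e$ with $\mscof{e}=M$ and $e\in\executionsof{\acceptcompletion{\projectionof{\gt}}}{\synchmodel}$. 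Since $\executionsofmodel{\synchmodel}\subseteq\executionsofmodel{\acommunicationmodel}$, also $e\in\executionsof{\acceptcompletion{\projectionof{\gt}}}{\acommunicationmodel}$, so deadlock-freedom of $\projectionof{\gt}$ in $\acommunicationmodel$ (Definition~\ref{def:deadlock-free}) provides a completion $e''\in\executionsof{\projectionof{\gt}}{\acommunicationmodel}$ with $e\prefixorder e''$. The crucial step is then to invoke condition~(CC) in $\acommunicationmodel$: it gives $e''\in\executionsof{\gt}{\acommunicationmodel}$, hence $\mscof{e''}\in\existentialmsclanguageof{\gt}$ by Definition~\ref{def:global-type-language}, and by Proposition~\ref{prop:universal-existential-synch-inclusion} this MSC lies in $\mscsofcfsms{\projectionof{\gt}}{\synchmodel}$. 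Since $e\prefixorder e''$ entails $\mscof{e}\prefixorder\mscof{e''}$ (taking the MSC of an execution is monotone for the respective prefix orders — a routine check that also needs compatibility of the source maps), we obtain $M\prefixorder\mscof{e''}$ with $\mscof{e''}\in\mscsofcfsms{\projectionof{\gt}}{\synchmodel}$, i.e.\ $M\in\prefixclosureof{\mscsofcfsms{\projectionof{\gt}}{\synchmodel}}$, as required; Proposition~\ref{prop:deadlock-free-as-a-property-on-mscs-for-p2p-and-synch} then delivers deadlock-freedom in $\synchmodel$.

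The only obstacle is condition~(DF): one cannot transport deadlock-freedom directly from $\acommunicationmodel$ to $\synchmodel$. The way around it, exploited above, is that conformance~(CC) forces every $\acommunicationmodel$-completion of a synchronous partial execution to realise an MSC of $\existentialmsclanguageof{\gt}$, and every such MSC is synchronously linearisable — indeed is already an MSC of the projected system under $\synchmodel$; combined with the MSC-level reformulation of $\synchmodel$-deadlock-freedom (Proposition~\ref{prop:deadlock-free-as-a-property-on-mscs-for-p2p-and-synch}), this is enough to recover a synchronous completion. Everything else reduces to unfolding the definitions of the various MSC languages and of the prefix orders.
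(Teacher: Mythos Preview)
Your proof is correct and follows the same overall strategy as the paper: establish (CC) and (DF) for $\synchmodel$ separately, in each case intersecting the $\acommunicationmodel$-statement with $\executionsofmodel{\synchmodel}$ and, for (DF), crucially invoking (CC) in $\acommunicationmodel$ so that the completion obtained in $\acommunicationmodel$ has a synchronous MSC.

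The only noteworthy difference is one of presentation. For (DF) the paper stays at the level of executions and writes the inclusion
\[
\prefixclosureof{\executionsof{\projectionof{\gt}}{\acommunicationmodel}}\cap\executionsofmodel{\synchmodel}
\ \subseteq\
\prefixclosureof{\executionsof{\projectionof{\gt}}{\acommunicationmodel}\cap\executionsofmodel{\synchmodel}}
\]
without further comment; this step is not a formal set identity and silently uses that, by (CC), every $e'\in\executionsof{\projectionof{\gt}}{\acommunicationmodel}$ has $\mscof{e'}\in\existentialmsclanguageof{\gt}\subseteq\mscsetofmodel{\synchmodel}$, so a synchronous prefix can be extended to a synchronous linearisation. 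You instead pass through Proposition~\ref{prop:deadlock-free-as-a-property-on-mscs-for-p2p-and-synch} and argue at the MSC level, which makes this use of (CC) explicit and avoids having to build the synchronous completion by hand. Both routes are equivalent in content; yours is a bit more transparent about where the work happens.
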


\begin{proof}
Assume that $\gt$ is 
deadlock-free realisable in $\acommunicationmodel$. We show that $\gt$ is deadlock-free realisable in $\synchmodel$ by verifying the 
two conditions of Definition~\ref{def:realisability}.   
\begin{enumerate}
\item 
    By the hypothesis that 
    $\gt$ is $\acommunicationmodel$-realisable, we have that
    $
    \executionsofcfsms{\projectionof{\gt}}{\acommunicationmodel} = \executionsofcfsms{\gt}{\acommunicationmodel}
    $,
    and therefore
    $$
    \executionsofcfsms{\projectionof{\gt}}{\acommunicationmodel}\cap\executionsofmodel{\synchmodel} = \executionsofcfsms{\gt}{\acommunicationmodel}\cap\executionsofmodel{\synchmodel}.
    $$
    Since $\executionsofmodel{\synchmodel}\subseteq\executionsofmodel{\acommunicationmodel}$, we have
    $\executionsofcfsms{\projectionof{\gt}}{\synchmodel}\subseteq\executionsofcfsms{\projectionof{\gt}}{\acommunicationmodel}\cap\executionsofmodel{\synchmodel}$.
    On the other hand, by Definition~\ref{def:global-type-language}, $
    \executionsofcfsms{\gt}{\acommunicationmodel}\cap\executionsofmodel{\synchmodel} =
    \executionsofcfsms{\gt}{\synchmodel}$.
    Therefore, we have
    $$    \executionsofcfsms{\projectionof{\gt}}{\synchmodel}\subseteq\executionsofcfsms{\gt}{\synchmodel}.$$
    Conversely, 
    by Property~\ref{prop:universal-existential-synch-inclusion},
    $\existentialmsclanguageof{\gt}\subseteq
    \msclanguageof{\projectionof{\gt}}{\synchmodel}$,
    and by Definition~\ref{def:global-type-language}, we get 
    $\executionsof{\gt}{\synchmodel}\subseteq
    \executionsof{\projectionof{\gt}}{\synchmodel}$, which ends the proof of Condition~(CC) of Definition~\ref{def:realisability}.

\item By the hypothesis that 
    $\gt$ is $\acommunicationmodel$-realisable and 
    Definition~\ref{def:deadlock-free},
    we have that
    $$
    \executionsofcfsms{\projectionof{\acceptcompletion{\gt}}}{\acommunicationmodel} \subseteq\prefixclosureof{\executionsof{\projectionof{\gt}}{\acommunicationmodel}}=\prefixclosureof{\executionsof{\gt}{\acommunicationmodel}}.
    $$
    and therefore, by $\executionsofmodel{\synchmodel}\subseteq\executionsofmodel{\acommunicationmodel}$,
    $$
    \executionsofcfsms{\projectionof{\acceptcompletion{\gt}}}{\acommunicationmodel}\cap \executionsofmodel{\synchmodel} \quad \subseteq \quad
    \prefixclosureof{\executionsof{\projectionof{\gt}}{\acommunicationmodel}}\cap\executionsofmodel{\synchmodel}.
    $$
    By Definition~\ref{def:executions-of-cfsms},
    $\executionsofcfsms{\gt}{\acommunicationmodel}\cap\executionsofmodel{\synchmodel} = \executionsofcfsms{\gt}{\synchmodel}$, and on the other hand,
    $$
    \prefixclosureof{\executionsof{\projectionof{\gt}}{\acommunicationmodel}}\cap\executionsofmodel{\synchmodel} 
    ~ \subseteq ~ 
    \prefixclosureof{\executionsof{\projectionof{\gt}}{\acommunicationmodel}\cap\executionsofmodel{\synchmodel}}
    ~ = ~  
    \prefixclosureof{\executionsof{\projectionof{\gt}}{\synchmodel}}.
    $$
    Therefore, we have
    $
    \executionsofcfsms{\projectionof{\acceptcompletion{\gt}}}{\synchmodel}\subseteq\prefixclosureof{\executionsof{\projectionof{\gt}}{\synchmodel}}
    $, which ends the proof of Condition~(DF) of Definition~\ref{def:realisability}.
\end{enumerate}
\end{proof}

We now establish the converse of Theorem~\ref{thm:pp-realizability-implies-synch-realizability}:
we show that the other implication holds under some extra conditions on the projection of $\gt$ in $\acommunicationmodel$, assuming further that $\acommunicationmodel$ is causally closed.

\begin{thm}[Reduction to $\synchmodel$-implementability]
    \label{thm:main-theorem-realisability}
    Assume $\acommunicationmodel$ is a causally closed communication model such that 
    $\executionsofmodel{\acommunicationmodel}\supseteq\executionsofmodel{\synchmodel}$.
    The global type $\gt$ is deadlock-free realisable in $\acommunicationmodel$
        if and only if the following four conditions hold:
        \begin{enumerate}
            \item $\msclanguageof{\projectionof{\gt}}{\acommunicationmodel}\subseteq\prefixclosureof{\mscsetofmodel{\synchmodel}}$
            \item $\projectionof{\gt}$ is orphan-free in $\acommunicationmodel$, 
            \item $\mscsofcfsms{\projectionof{\acceptcompletion{\gt}}}{\acommunicationmodel}\subseteq\prefixclosureof{\mscsofcfsms{\projectionof{\gt}}{\acommunicationmodel}}$, and 
            \item $\gt$ is deadlock-free realisable in $\synchmodel$.
        \end{enumerate}  
\end{thm}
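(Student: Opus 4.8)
The plan is to prove the two implications separately, reusing Theorem~\ref{thm:pp-realizability-implies-synch-realizability} together with the MSC-level reformulations of conformance (Proposition~\ref{prop:msc-version-of-cond1-of-realizability}) and of deadlock-freedom (Proposition~\ref{prop:deadlock-free-as-a-property-on-mscs-for-p2p-and-synch}). Throughout I will use the observation that $\projectionof{\acceptcompletion{\gt}}$ and $\acceptcompletion{\projectionof{\gt}}$ denote the same system of CFSMs, since projection leaves the state structure untouched and making all states accepting commutes with projection.

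For the forward direction, assume $\gt$ is deadlock-free realisable in $\acommunicationmodel$, i.e.\ Conditions~(CC) and~(DF) of Definition~\ref{def:realisability} hold. Condition~4 is exactly Theorem~\ref{thm:pp-realizability-implies-synch-realizability}. Condition~3 follows from~(DF), the hypothesis that $\acommunicationmodel$ is causally closed, and Proposition~\ref{prop:deadlock-free-as-a-property-on-mscs-for-p2p-and-synch}. For Conditions~1 and~2 I would exploit~(CC): it gives $\executionsof{\projectionof{\gt}}{\acommunicationmodel}=\executionsof{\gt}{\acommunicationmodel}$, and by Definition~\ref{def:global-type-language} every element of the right-hand set is a linearisation in $\acommunicationmodel$ of some $M\in\existentialmsclanguageof{\gt}$. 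Every such $M$ equals $\labeltomsc{w}=\mscof{\labeltoexec{w}}$ for a sequence of arrows $w$, hence: (i) all send events of $M$ are matched in $M$, so $M$ is orphan-free and this is inherited by every linearisation of $M$, which yields Condition~2; and (ii) the synchronous execution $\labeltoexec{w}$ witnesses $M\in\mscsetofmodel{\synchmodel}$, so $\mscsofcfsms{\projectionof{\gt}}{\acommunicationmodel}\subseteq\existentialmsclanguageof{\gt}\subseteq\mscsetofmodel{\synchmodel}\subseteq\prefixclosureof{\mscsetofmodel{\synchmodel}}$, which is Condition~1.

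For the backward direction, assume Conditions~1--4. Condition~(DF) in $\acommunicationmodel$ follows from Condition~3, causal closedness, and Proposition~\ref{prop:deadlock-free-as-a-property-on-mscs-for-p2p-and-synch}. For Condition~(CC), by Proposition~\ref{prop:msc-version-of-cond1-of-realizability} it suffices to prove~(CC'), i.e.\ $\mscsofcfsms{\projectionof{\gt}}{\acommunicationmodel}\subseteq\existentialmsclanguageof{\gt}$. I would fix $M\in\mscsofcfsms{\projectionof{\gt}}{\acommunicationmodel}$, say $M=\mscof{e}$ with $e\in\executionsof{\projectionof{\gt}}{\acommunicationmodel}$; Condition~1 provides $M'\in\mscsetofmodel{\synchmodel}$ with $M\prefixorder M'$, and Condition~2 gives that $e$, hence $M$, is orphan-free. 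The key step is then to show that $M$ is itself synchronously linearisable: take a synchronous linearisation of $M'$ (an alternation of each send immediately followed by its matching receive) and restrict it to $\eventsof{M}$; orphan-freeness of $M$ together with the prefix condition ensures that a send event of $M'$ lies in $M$ iff its matching receive does, so the restriction is still such an alternation, and it still refines $\happensbeforestrictof{M}$, which coincides with $\happensbeforestrictof{M'}$ restricted to $\eventsof{M}$. Hence $M\in\mscsetofmodel{\synchmodel}$. Choosing any synchronous linearisation $e_s$ of $M$, we have $\projofon{e_s}{p}=\projofon{e}{p}$ for all $p$ (both are the local behaviour of $p$ in $M$), so $e_s\in\executionsof{\projectionof{\gt}}{\synchmodel}$ and therefore $M\in\mscsofcfsms{\projectionof{\gt}}{\synchmodel}$. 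Finally, Condition~4 and Proposition~\ref{prop:msc-version-of-cond1-of-realizability} applied to $\synchmodel$ give $M\in\existentialmsclanguageof{\gt}$, which closes~(CC').

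I expect the main obstacle to be the structural claim used in the backward direction — that an orphan-free prefix of a synchronously linearisable MSC is again synchronously linearisable — which requires carefully checking both that the restricted linearisation stays an alternation of matching send/receive pairs and that the happens-before relation restricts cleanly along prefixes; I would isolate it as a separate lemma. Everything else is bookkeeping between the three levels of description (sequences of arrows, MSCs, and executions in a communication model) mediated by the propositions already established.
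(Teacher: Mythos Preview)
Your proof is correct and follows the same strategy as the paper's: both directions go through Proposition~\ref{prop:msc-version-of-cond1-of-realizability}, Proposition~\ref{prop:deadlock-free-as-a-property-on-mscs-for-p2p-and-synch}, and Theorem~\ref{thm:pp-realizability-implies-synch-realizability} in exactly the way you describe. You are in fact more explicit than the paper about the key step in the backward direction --- that an orphan-free prefix of a synchronously linearisable MSC is itself synchronously linearisable --- which the paper compresses into the single claim $\mscsofcfsms{\projectionof{\gt}}{\acommunicationmodel}=\mscsofcfsms{\projectionof{\gt}}{\synchmodel}$ without spelling out how Conditions~1 and~2 combine to yield it.
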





\begin{proof}
We show both sides in turn:
\begin{description}
    \item[$\Rightarrow$]
    We first show that the four conditions are necessary.
    Let $\gt$ be a $\acommunicationmodel$-realisable global type.
    \begin{enumerate}
        \item        
            By Condition~(DF) of Definition~\ref{def:realisability},
            $\projectionof{\gt}$ is deadlock-free in $\acommunicationmodel$.
            By Proposition~\ref{prop:deadlock-free-as-a-property-on-mscs-for-p2p-and-synch},
            $$
                \mscsofcfsms{\projectionof{\acceptcompletion{\gt}}}{\acommunicationmodel}\subseteq\prefixclosureof{\mscsofcfsms{\projectionof{\gt}}{\acommunicationmodel}}.
            $$
            By Condition~(CC) of Definition~\ref{def:realisability},
            $$
                \mscsofcfsms{\projectionof{\gt}}{\acommunicationmodel} = \existentialmsclanguageof{\gt} \subseteq \mscsetofmodel{\synchmodel}.
            $$
            Therefore $\mscsofcfsms{\projectionof{\acceptcompletion{\gt}}}{\acommunicationmodel}\subseteq\prefixclosureof{\mscsetofmodel{\synchmodel}}$.
   
        \item Let $e\in\executionsof{\projectionof{\gt}}{\acommunicationmodel}$;
        we show that $e$ is orphan-free.
        By Condition~(CC) of Definition~\ref{def:realisability},
        $\mscof{e}\in\existentialmsclanguageof{\gt}$, therefore
        $\mscof{e}$ is synchronous, and in particular orphan-free.
       
        \item $\mscsofcfsms{\projectionof{\acceptcompletion{\gt}}}{\acommunicationmodel}\subseteq\prefixclosureof{\mscsofcfsms{\projectionof{\gt}}{\acommunicationmodel}}$ 
            by Condition~(DF) of Definition~\ref{def:realisability} and Proposition~\ref{prop:deadlock-free-as-a-property-on-mscs-for-p2p-and-synch}.

        \item  Follows from Theorem~\ref{thm:pp-realizability-implies-synch-realizability} 
    \end{enumerate}
\item[$\Leftarrow$]    Next we  show that the four conditions are sufficient.
    Let $\gt$ be a global type that verifies the four conditions.
    We prove that $\gt$ is $\acommunicationmodel$-realisable as it satisfies the two conditions of Definition~\ref{def:realisability}.
    \begin{description}
        \item[Condition~(CC)]
          By Proposition~\ref{prop:msc-version-of-cond1-of-realizability},
          we show that 
          $$
          \mscsofcfsms{\projectionof{\gt}}{\acommunicationmodel} \subseteq \existentialmsclanguageof{\gt}.
          $$
          By Condition~(CC) of Theorem~\ref{thm:main-theorem-realisability},
          $$
            \mscsofcfsms{\projectionof{\gt}}{\acommunicationmodel} = \mscsofcfsms{\projectionof{\gt}}{\synchmodel}
          $$
          and by Condition~4 of Theorem~\ref{thm:main-theorem-realisability} and Proposition~\ref{prop:msc-version-of-cond1-of-realizability},
          we have that  
          $$
            \mscsofcfsms{\projectionof{\gt}}{\synchmodel} \subseteq \existentialmsclanguageof{\gt}
          $$
          which concludes the proof of Condition~(CC).
          \item[Condition~(DF)]
        follows from Proposition~\ref{prop:deadlock-free-as-a-property-on-mscs-for-p2p-and-synch}
        and Condition~3 of Theorem~\ref{thm:main-theorem-realisability}. \qedhere
    \end{description}
\end{description}
\end{proof}

A system $\cfsms$ such that 
$\msclanguageof{\cfsms}{\acommunicationmodel}\subseteq
\prefixclosureof{\mscsetofmodel{\synchmodel}}$
is called \emph{realisable with synchronous communications} (RSC) 
in~\cite{germerie-phd}. Germerie~\cite{germerie-phd}
showed that the RSC property is decidable in PSPACE and that
whether a RSC system may reach a regular set of configurations is
also in PSPACE. As a consequence, conditions~1 and 2 of
Theorem~\ref{thm:main-theorem-realisability} can be checked in PSPACE.
In the remainder, we focus on the decidability of condition~3 of Theorem~\ref{thm:main-theorem-realisability}.

An execution $e=(w,\source)$ is RSC~\cite{germerie-phd} if
for all $r\in\receiveeventsof{e}$, $\source(r)=r-1$, i.e., 
all messages that are sent in $e$ are either orphan or immediately received. An MSC $M$ is RSC if it admits
an RSC linearisation. We write $\executionsofmodel{\rscmodel}$ and $\mscsetofmodel{\rscmodel}$ for the sets of RSC executions and MSCs respectively.

\begin{defi}[RegSC]\label{def:regsc-comm-model}
    A set $\aMSCset\subseteq \mscsetofmodel{\rscmodel}$ of 
    RSC MSCs is \emph{RegSC} 
    if there is a NFA $\A_{\aMSCset}$ such that
    $\languageofnfa{\A_{\aMSCset}}=
    \{e \in \executionsofmodel{\rscmodel} \mid \mscof{e} \in \aMSCset\}.
    $
    A communication model $\acommunicationmodel$ is \emph{RegSC} if $\mscsetofmodel{\acommunicationmodel}\cap\mscsetofmodel{\rscmodel}$ is RegSC, i.e.,
    there is an NFA $\A_{\acommunicationmodel}$ such that
    $
    \languageofnfa{\A_{\acommunicationmodel}} = 
    \{e \in \executionsofmodel{\rscmodel} \mid \mscof{e} \in\mscsetofmodel{\acommunicationmodel}\}.
    $
\end{defi}

All communication models mentioned in this paper are RegSC.

\begin{prop}\label{prop:regularity-of-com-models}
    The following communication models are regular: 
    $\bagmodel$, $\ppmodel$, $\causalmodel$, and $\synchmodel$.
\end{prop}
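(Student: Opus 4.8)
The plan is to construct, for each of the four models $\acommunicationmodel\in\{\bagmodel,\ppmodel,\causalmodel,\synchmodel\}$, an explicit NFA recognising the language $L_{\acommunicationmodel}:=\{e\in\executionsofmodel{\rscmodel}\mid\mscof{e}\in\mscsetofmodel{\acommunicationmodel}\}$. Two elementary observations are used throughout. First, a word $w\in\Act^{*}$ is the action sequence of an RSC execution exactly when every receive action of $w$ is immediately preceded by a matching send action; then $\source$ is forced to be $r\mapsto r-1$, so the RSC execution is unique, and this condition is clearly recognised by a finite automaton, hence $\executionsofmodel{\rscmodel}$ is regular. In particular, in an RSC execution a send event $s$ is matched iff position $s+1$ exists and carries the complementary receive, so an automaton reading $w$ left to right can distinguish matched from orphan sends with a one-letter lookahead (or by a guess verified on the next letter). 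Second, every execution $e$ is a linearisation of its own MSC $\mscof{e}$: the order in which $e$ visits its events refines the happens-before relation of $\mscof{e}$, since happens-before is generated by the per-process orders, which that order respects, together with the edges $\source(r)<r$, which hold by definition of an execution.

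For the three causally-closed models (Lemma~\ref{lem:pp-is-causally-closed}) I would first reduce to an execution-level set: if $\acommunicationmodel$ is causally closed, then $L_{\acommunicationmodel}=\executionsofmodel{\rscmodel}\cap\executionsofmodel{\acommunicationmodel}$. Indeed $\supseteq$ is immediate, as $e$ itself witnesses $\mscof{e}\in\mscsetofmodel{\acommunicationmodel}$, and for $\subseteq$, if $\mscof{e}\in\mscsetofmodel{\acommunicationmodel}$ then $\linearisationsof{\mscof{e}}{\acommunicationmodel}\neq\emptyset$, so by causal closure $\linearisationsof{\mscof{e}}{\acommunicationmodel}=\linearisationsof{\mscof{e}}{}\ni e$, hence $e\in\executionsofmodel{\acommunicationmodel}$. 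It then suffices to show $\executionsofmodel{\rscmodel}\cap\executionsofmodel{\acommunicationmodel}$ regular. For $\bagmodel$ this is trivial: $\executionsofmodel{\bagmodel}$ is all executions, so the set is $\executionsofmodel{\rscmodel}$. For $\ppmodel$, using Definition~\ref{def:queue-based-communication-model} and the fact that in an RSC execution the only possible matcher of a send $s$ is $s+1$, one checks that an RSC execution is in $\ppmodel$ precisely when on every channel $(p,q)$ no orphan send precedes a matched send; an NFA carrying one Boolean flag per channel (finitely many) — set when an orphan send on that channel is read, rejecting on a later matched send on the same channel — recognises this. The model $\synchmodel$ is not causally closed and I would handle it directly: an RSC execution with an orphan send has $\mscof{e}\notin\mscsetofmodel{\synchmodel}$, since an orphan in $e$ is an orphan in $\mscof{e}$ and no synchronous linearisation has orphans; and an RSC execution with no orphan send is itself synchronous, because every send $s$ is matched with its receive at position $s+1$, which is exactly the defining property of $\synchmodel$. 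Hence $L_{\synchmodel}=\executionsofmodel{\synchmodel}$, which is the regular language of words obtained by concatenating blocks $\send{p}{q}{m}\,\recv{p}{q}{m}$.

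The substantive case is $\causalmodel$. I would run, in parallel with the RSC check, a finite-state computation of a map $\theta\colon\Procs\to2^{\Procs}$ with intended invariant ``$q\in\theta(p)$ iff some orphan send addressed to $q$ lies in the happens-before past of the current last event of $p$''. It starts as the everywhere-empty map and is updated while reading the execution: on an orphan send $\send{p}{q}{m}$, set $\theta(p):=\theta(p)\cup\{q\}$; on a matched send $\send{p}{q}{m}$ (immediately followed by $\recv{p}{q}{m}$), first reject if $q\in\theta(p)$, and otherwise set $\theta(q):=\theta(q)\cup\theta(p)$, the receiver inheriting the sender's causal history. Since $\Procs$ is finite, this is a finite-state device, and combined with the RSC check and the lookahead it is an NFA. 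For soundness, if a run does not reject then, given sends $s_{1}\porderstrictof{\mscof{e}}s_{2}$ to a common receiver $q$: if $s_{2}$ is orphan, Definition~\ref{def:causal-model} is satisfied by its first disjunct; if $s_{1}$ is orphan, then $q\in\theta(p)$ at the moment $s_{2}$ is processed, contradicting non-rejection; and if both are matched, the RSC shape puts their receives at $s_{1}+1$ and $s_{2}+1$ on $q$'s timeline in the order inherited from $s_{1},s_{2}$, so the second disjunct holds; thus $e\in\executionsofmodel{\causalmodel}$. For completeness, one proves by induction on the run that $\theta$ meets its invariant — the induction follows the inductive generation of the happens-before relation, the two update rules being the only ways a dependency on an orphan-to-$q$ can propagate — and then a rejection on an input $e\in\executionsofmodel{\causalmodel}\cap\executionsofmodel{\rscmodel}$ would exhibit an orphan $s_{1}$ and a matched $s_{2}$ to the same $q$ with $s_{1}\porderstrictof{\mscof{e}}s_{2}$, which Definition~\ref{def:causal-model} forbids; hence $e$ is accepted. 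This yields $\executionsofmodel{\rscmodel}\cap\executionsofmodel{\causalmodel}$ regular, and by the reduction above $L_{\causalmodel}$ is regular.

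The step I expect to be the real obstacle is exactly this invariant for $\theta$ in the causal case: one must check that the two update rules above account for every way an ``orphan-to-$q$'' dependency can enter a process's causal past, which relies on the fact that in an RSC execution matched messages are delivered instantaneously, so that the send-to-receive edge is the sole mechanism by which causal information crosses between processes and causality becomes trackable with bounded memory. Everything else reduces to routine automaton constructions and case distinctions.
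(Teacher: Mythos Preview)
Your proof is correct, and for $\bagmodel$, $\ppmodel$, and $\synchmodel$ it matches the paper's approach (the paper omits these easy cases; your $\ppmodel$ automaton with one Boolean flag per channel is exactly the automaton the paper spells out). The reduction $L_{\acommunicationmodel}=\executionsofmodel{\rscmodel}\cap\executionsofmodel{\acommunicationmodel}$ via causal closure is a clean organising observation the paper does not make explicit.

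For $\causalmodel$ your route differs from the paper's. The paper builds a small nondeterministic automaton for the \emph{complement}, i.e.\ for $\executionsofmodel{\rscmodel}\setminus\executionsofmodel{\causalmodel}$: it guesses one orphan send $\send{p}{q}{m}$ and then nondeterministically follows a single causal chain (state space essentially $\Procs\times\Procs$) until it witnesses a later matched send to the same $q$; the required $\A_{\causalmodel}$ is then obtained by complementation. You instead build a direct, essentially deterministic acceptor whose state is the full map $\theta:\Procs\to 2^{\Procs}$ recording all orphan destinations in each process's causal past. Both are valid. Your construction avoids the determinisation hidden in the paper's complementation step and makes the invariant explicit, at the price of an a priori $2^{|\Procs|^2}$-sized state space; the paper's construction keeps the witness automaton polynomial but defers the blowup to the complementation. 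Since the proposition only asks for regularity, either suffices, and your invariant argument (orphan information propagates only along process order and matched send-receive edges, which in RSC are instantaneous) is the right one.
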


\begin{proof}(Sketch)
    We only detail the case of $\acommunicationmodel=\ppmodel$
    and $\acommunicationmodel=\causalmodel$; the other cases are  simpler and therefore omitted.
    \begin{itemize}
        \item Case of $\acommunicationmodel=\ppmodel$.
            Intuitively, the automaton $\A_{\ppmodel}$ 
            guesses for each send action whether it will be followed by its matching receive action or will remain unmatched. It also remembers
            the set of queues in which unmatched messages are stored, and 
            makes sure that whenever  another message is sent in such a queue, it remains unmatched.
            Formally, the automaton $\A_{\ppmodel}=(Q,\delta,q_0,F)$ is defined as follows:
            $Q=2^{\Procs\times\Procs}\times (\varepsilon\cup \Act)$,
            $q_0=(\emptyset,\varepsilon)$, $F=2^{\Procs\times\Procs}\times \{\varepsilon\}$, and
            $$
            \begin{array}{rcl}
            \delta & \eqdef &   \Big\{
                \big((S,\varepsilon),\send{p}{q}{m},(S,\send{p}{q}{m})\big) \mid (p,q)\not\in S, p,q\in\Procs, m\in\Msg
            \Big\} \\[.7em]
            & \cup &
            \Big\{
                \big((S,\varepsilon),\send{p}{q}{m},(S\cup\{(p,q)\},\varepsilon)\big) \mid p,q\in\Procs, m\in\Msg 
            \Big\} \\[.7em]
            & \cup &
            \Big\{
                \big((S,\send{p}{q}{m}),\recv{p}{q}{m},(S,\varepsilon)\big)     \mid p,q\in\Procs, m\in\Msg  \Big\}
            \end{array}
            $$
        \item Case of $\acommunicationmodel=\causalmodel$.
            We define an NFA $\A$ such that 
            $\languageofnfa{\A}=\executionsofmodel{\rscmodel}\setminus\executionsofmodel{\causalmodel}$;
            The NFA $\A_{\causalmodel}$, hence, can  be 
            defined by complementation.
            Intuitively, the automaton $\A$ accepts
            an execution $e$ if and only there is causal path from an unmatched send $s_1=\send{p}{q}{m}$ to some reception $r_2=\recv{p'}{q}{m'}$ for the same process $q$, i.e., if the first condition of  Definition~\ref{def:causal-model} is not satisfied (note that the other condition is always satisfied for RSC MSCs).
            Formally, the automaton $\A$ is defined as follows:
            $Q=S\times(\Act\cup\{\varepsilon\})$,
            with $S=\{q_\bot,q_\top\}\cup \Procs\times\Procs$,
            $q_0=(q_\bot,\varepsilon)$, $F=\{q_\top\}\times(\Act\cup\{\varepsilon\})$, and
            $$
            \begin{array}{rcl}
            \delta & \eqdef &   \Big\{
                \big((s,\varepsilon),\send{p}{q}{m},(s,\varepsilon)\big) \mid p,q\in\Procs, m\in\Msg, s\in S
            \Big\} \\[.7em]
            & \cup &
            \Big\{
                \big((s,\varepsilon),\send{p}{q}{m},(s,\send{p}{q}{m})\big) \mid p,q\in\Procs, m\in\Msg, s\in S
            \Big\} \\[.7em]
            & \cup &
            \Big\{
                \big((s,\send{p}{q}{m}),\recv{p}{q}{m},(s,\varepsilon)\big)     \mid p,q\in\Procs, m\in\Msg, s\in S  
            \Big\} \\[.7em]
            & \cup &
            \Big\{
                \big((q_\bot,\varepsilon),\send{p}{q}{m},((p,q),\varepsilon)\big)     \mid p,q\in\Procs, m\in\Msg  
            \Big\} \\[.7em]
            & \cup &
            \Big\{
                \big(((p,q),\send{p}{r}{m}),\recv{p}{r}{m},((r,q),\varepsilon)\big)     \mid p,q,r\in\Procs, m\in\Msg  
            \Big\} \\[.7em]
            & \cup &
            \Big\{
                \big(((p,q),\send{p}{q}{m}),\recv{p}{q}{m},(q_\top,\varepsilon)\big)     \mid p,q\in\Procs, m\in\Msg  
            \Big\}. \qedhere
        \end{array}
        $$

    \end{itemize}
\end{proof}

Next we show that for regular communication models, if a system is RSC then the set of its  executions  is also regular. 
\begin{lem}
    \label{lem:regularity-of-mscsofcfsms}
    Assume $\acommunicationmodel$ is RegSC
    and take a system of CFSM $\cfsms$ such that  $\msclanguageof{\cfsms}{\acommunicationmodel}\subseteq \mscsetofmodel{\rscmodel}$.
    Then $\mscsofcfsms{\cfsms}{\acommunicationmodel}$ is RegSC.
\end{lem}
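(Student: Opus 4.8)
The plan is to construct explicitly the NFA required by Definition~\ref{def:regsc-comm-model}, namely one whose language is the set of RSC executions whose induced MSC lies in $\mscsofcfsms{\cfsms}{\acommunicationmodel}$; the NFA $\A_{\acommunicationmodel}$ provided by the RegSC-ness of $\acommunicationmodel$ will be the main ingredient. Note that the side condition $\mscsofcfsms{\cfsms}{\acommunicationmodel}\subseteq\mscsetofmodel{\rscmodel}$ demanded by that definition is precisely the hypothesis (as $\mscsofcfsms{\cfsms}{\acommunicationmodel}=\msclanguageof{\cfsms}{\acommunicationmodel}$), so only the regularity of the execution language needs to be established. Write $\cfsms=(\acfsm_p)_{p\in\Procs}$.

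The key step is a set-theoretic identity: for every $e\in\executionsofmodel{\rscmodel}$,
$$
\mscof{e}\in\mscsofcfsms{\cfsms}{\acommunicationmodel}
\quad\Longleftrightarrow\quad
\mscof{e}\in\mscsetofmodel{\acommunicationmodel}\ \text{ and }\ \projofon{e}{p}\in\languageofnfa{\acfsm_p}\ \text{ for every }p\in\Procs.
$$
To prove it I would use two facts. First, $\projofon{e}{p}$ is the $p$-th component of the tuple $\mscof{e}$, hence depends only on $\mscof{e}$ (this is the Remark following Definition~\ref{def:executions-of-cfsms}): two executions with the same MSC have the same projections. Second, every execution $e'$ with $\mscof{e'}=M$ is a linearisation of $M$ — its total order respects process order and puts each send before its matching receive, hence refines $\happensbeforestrict$ — so $M\in\mscsetofmodel{\acommunicationmodel}$ iff some execution $e'\in\executionsofmodel{\acommunicationmodel}$ satisfies $\mscof{e'}=M$. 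For the forward direction: a witness $e'\in\executionsofcfsms{\cfsms}{\acommunicationmodel}$ with $\mscof{e'}=\mscof{e}$ lies in $\executionsofmodel{\acommunicationmodel}$, so $\mscof{e}\in\mscsetofmodel{\acommunicationmodel}$, and $\projofon{e}{p}=\projofon{e'}{p}\in\languageofnfa{\acfsm_p}$. Conversely, from $\mscof{e}\in\mscsetofmodel{\acommunicationmodel}$ pick $e'\in\linearisationsof{\mscof{e}}{\acommunicationmodel}$; then $e'\in\executionsofmodel{\acommunicationmodel}$, $\mscof{e'}=\mscof{e}$ and $\projofon{e'}{p}=\projofon{e}{p}\in\languageofnfa{\acfsm_p}$ for all $p$, so $e'\in\executionsofcfsms{\cfsms}{\acommunicationmodel}$ and therefore $\mscof{e}\in\mscsofcfsms{\cfsms}{\acommunicationmodel}$.

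The remaining part is routine automata constructions. The set $\{e\in\executionsofmodel{\rscmodel}\mid\mscof{e}\in\mscsetofmodel{\acommunicationmodel}\}$ is $\languageofnfa{\A_{\acommunicationmodel}}$ by hypothesis, and it already consists only of RSC execution words. For the second condition, for each $p$ I would build the NFA $\acfsm'_p$ over the full alphabet $\Act$ obtained from $\acfsm_p$ by keeping its states, its $\varepsilon$-transitions and its transitions labelled by $\Actp$, and adding at every state a self-loop for each letter of $\Act\setminus\Actp$; then $\languageofnfa{\acfsm'_p}=\{w\in\Act^*\mid\projofon{w}{p}\in\languageofnfa{\acfsm_p}\}$. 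Taking $\A\eqdef\A_{\acommunicationmodel}\otimes\bigotimes_{p\in\Procs}\acfsm'_p$ and using $\languageofnfa{\A_1\otimes\A_2}=\languageofnfa{\A_1}\cap\languageofnfa{\A_2}$ with the identity above yields $\languageofnfa{\A}=\{e\in\executionsofmodel{\rscmodel}\mid\mscof{e}\in\mscsofcfsms{\cfsms}{\acommunicationmodel}\}$, which, together with $\mscsofcfsms{\cfsms}{\acommunicationmodel}\subseteq\mscsetofmodel{\rscmodel}$, is exactly what Definition~\ref{def:regsc-comm-model} asks for.

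The main obstacle is the set-theoretic identity — specifically, not slipping into assuming $e\in\executionsofmodel{\acommunicationmodel}$. An RSC execution $e$ with $\mscof{e}\in\mscsetofmodel{\acommunicationmodel}$ need not itself be a $\acommunicationmodel$-execution, but membership of $\mscof{e}$ in $\mscsofcfsms{\cfsms}{\acommunicationmodel}$ only requires \emph{one} linearisation of $\mscof{e}$ inside $\executionsofmodel{\acommunicationmodel}$, and any such linearisation has the same per-process projections as $e$; so no causal-closure assumption on $\acommunicationmodel$ is needed, only the RegSC-ness that is hypothesised. Everything else is bookkeeping with products of NFAs.
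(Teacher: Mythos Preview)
Your proof is correct and follows essentially the same approach as the paper. The paper packages $\bigotimes_{p\in\Procs}\acfsm'_p$ as a single ``shuffle product'' $\shuffleof{\cfsms}$ and first treats the special case $\acommunicationmodel=\bagmodel$ before intersecting with $\A_{\acommunicationmodel}$, whereas you go directly to the general case and are more explicit about the set-theoretic identity underlying the construction; these are only presentational differences.
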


\begin{proof}
    We first show the result for the case where $\acommunicationmodel=\bagmodel$, and then generalise to all RegSC communication models.
    \begin{itemize}
        \item case of $\acommunicationmodel=\bagmodel$.
            Assume $\cfsms=(\A_p)_{p\in\Procs}$, with
            $\A_p=(Q_p,q_{0,p},F_p,\delta_p)$. Let $\shuffleof{\cfsms}$
            denote the shuffle product of all $\A_p$'s, i.e.,
            $\shuffleof{\cfsms}=(Q,q_0,F,\delta)$, where
            $Q=\Pi_{p\in\Procs} Q_p$, $q_0=(q_{0,p})_{p\in\Procs}$, 
            $F=\Pi_{p\in\Procs} F_p$, and
            $\delta$ is defined as follows:
            $$            
                \delta \quad = \quad \bigcup_{a\in\Act\cup\{\varepsilon\}} \Big\{
                \big((q_p)_{p\in\Procs},a,(q'_p)_{p\in\Procs}\big) \mid \exists p\in\Procs, (q_p,a,q'_p)\in\delta_p \mbox{ and } \forall r\neq p, q_r=q_r'
                \Big\}
            $$
            The automaton $\shuffleof{\cfsms}$ accepts a sequence of actions $e$ if and only if
            for all $p\in\Procs$, $\projofon{e}{p}\in\languageofnfa{\A_p}$.
            In particular, if $e\in\executionsofmodel{\rscmodel}$, then
            $e$ is an execution, therefore $e\in\executionsof{\cfsms}{\bagmodel}$ by Definition~\ref{def:executions-of-cfsms}.
            Let $\aMSCset=\mscsofcfsms{\cfsms}{\bagmodel}\cap\mscsetofmodel{\rscmodel}$. We have 
            $\languageofnfa{\shuffleof{\cfsms}\otimes \A_{\rscmodel}}=\{e\in \executionsofmodel{\rscmodel}\mid \mscof{e}\in \aMSCset\}$, therefore $\aMSCset$ is RegSC.
            By hypothesis, $\msclanguageof{\cfsms}{\bagmodel}\subseteq \mscsetofmodel{\rscmodel}$, so $\aMSCset=\mscsofcfsms{\cfsms}{\bagmodel}$. We conclude that $\mscsofcfsms{\cfsms}{\bagmodel}$ is RegSC.
        \item general case of $\acommunicationmodel$ being RegSC.
            Let $\aMSCset=\mscsofcfsms{\cfsms}{\acommunicationmodel}\cap\mscsetofmodel{\rscmodel}$.
            By Definition~\ref{def:executions-of-cfsms},
            $\mscsofcfsms{\cfsms}{\acommunicationmodel}=\mscsofcfsms{\cfsms}{\bagmodel}\cap\mscsetofmodel{\acommunicationmodel}$,
            so 
            $\languageofnfa{\shuffleof{\cfsms}\otimes \A_{\rscmodel}\otimes \A_{\acommunicationmodel}}=\{e\in\executionsofmodel{\rscmodel}\mid \mscof{e}\in\aMSCset\}$, which shows that
             $\aMSCset$ is RegSC. By hypothesis,
            $\msclanguageof{\cfsms}{\acommunicationmodel}\subseteq \mscsetofmodel{\rscmodel}$, so $\aMSCset=\mscsofcfsms{\cfsms}{\acommunicationmodel}$.
            We conclude that $\mscsofcfsms{\cfsms}{\acommunicationmodel}$ is RegSC.
    \end{itemize}
\end{proof}

Similarly given a set of regular RSC MSCs,  its prefix closure is also regular. 
\begin{lem}
    \label{lem:regularity-of-prefixclosure}
    Assume $\aMSCset\subseteq \mscsetofmodel{\rscmodel}$ is RegSC.
    Then $\prefixclosureof{\aMSCset}$ is RegSC.
\end{lem}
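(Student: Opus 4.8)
The plan is to prove that the language $L' := \{e'\in\executionsofmodel{\rscmodel} \mid \mscof{e'}\in\prefixclosureof{\aMSCset}\}$ is regular; since moreover $\prefixclosureof{\aMSCset}\subseteq\mscsetofmodel{\rscmodel}$ — if $M'\prefixordermsc M$ with $M$ RSC, restricting an RSC linearisation of $M$ to the $\happensbeforestrict$-downward-closed set of events kept by $M'$ is again an RSC linearisation, because a matching send/receive pair stays consecutive under this restriction — an NFA recognising $L'$ witnesses that $\prefixclosureof{\aMSCset}$ is RegSC.

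First I would record the combinatorial meaning of $\prefixordermsc$ on RSC MSCs: $M'\prefixordermsc M$ holds exactly when $M'=M|_V$ for a $\happensbeforestrict$-downward-closed set $V$ of events of $M$, equivalently when $M'$ is obtained from $M$ by repeatedly deleting $\happensbeforestrict$-maximal events. It is convenient to present RSC executions as words over the finite ``block'' alphabet whose letters are either a synchronous exchange $\send{p}{q}{m}\recv{p}{q}{m}$ or an unmatched send $\send{p}{q}{m}$; the correspondence between $\executionsofmodel{\rscmodel}$ and such block words is effective and preserves regularity in both directions (up to a relabelling and an intersection with $\A_{\rscmodel}$). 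Over block words, deleting a maximal event corresponds to either dropping the last block, or ``un-delivering'' a synchronous block $\send{p}{q}{m}\recv{p}{q}{m}$ whose receiver $q$ occurs in no later block, turning it into the unmatched send $\send{p}{q}{m}$; and two block words denote the same MSC iff they are related by transpositions of adjacent blocks with disjoint process sets. Hence $L'$ is the block-word image of the closure of $\languageofnfa{\A_{\aMSCset}}$ under (a) word-prefix, (b) this un-delivery rewrite, and (c) block-commutation.

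I would then check regularity of this closure in stages. The language $\languageofnfa{\A_{\aMSCset}}$ is regular and commutation-closed (it is defined through $\mscof{\cdot}$); its word-prefix closure $\prefixclosureof{\languageofnfa{\A_{\aMSCset}}}$ is again regular, and still commutation-closed (if $u$ is a prefix of $w=u\cdot v$ and $u\sim u'$ then $u'$ is a prefix of $u'\cdot v\sim w$). The un-delivery rewrite can be realised by a finite-state transduction reading the block word from right to left while maintaining only the finite set of processes already seen: a synchronous block may nondeterministically be replaced by the corresponding unmatched send precisely when its receiver has not yet been seen; applying this transduction to a regular language keeps it regular, and its image is exactly the closure of $\prefixclosureof{\languageofnfa{\A_{\aMSCset}}}$ under the un-delivery rewrite. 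The remaining, and main, obstacle is to re-establish commutation-closure afterwards: an RSC linearisation of a prefix $M'$ may genuinely require reordering relative to every RSC linearisation of $M$ — un-delivering a message destroys the ordering constraints imposed by its receiver, so one cannot simply replay the prefix into $\A_{\aMSCset}$ — and naively this reordering looks unbounded. This is handled by observing that the only commutations that must be restored move unmatched-send blocks, whose process set is a singleton, that such blocks need only respect the per-sender order, and that the underlying language is already commutation-closed, so that closing under these commutations reduces to a further regularity-preserving operation (inserting unmatched-send blocks into the transducer's output subject to the per-sender order); the finiteness of $\Procs$ and $\Msg$ keeps everything finite-state.

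Finally I would assemble $\A'$ from these effective, finite-state ingredients — the NFA for $\prefixclosureof{\languageofnfa{\A_{\aMSCset}}}$, the right-to-left un-delivery transducer, its commutation closure, and an intersection with $\A_{\rscmodel}$ — and verify $\languageofnfa{\A'}=L'$ by double inclusion against the characterisation above: the forward direction because each of (a), (b), (c) preserves ``$\prefixordermsc$ an MSC of $\aMSCset$'', and the backward direction by the downward-closed-restriction characterisation of $\prefixordermsc$. This gives that $\prefixclosureof{\aMSCset}$ is RegSC.
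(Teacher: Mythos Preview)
Your proposal correctly isolates the crux of the matter --- that an RSC linearisation of a prefix $M'$ may require reorderings that no RSC linearisation of $M$ exhibits --- but your resolution of it does not go through. The claim that ``the only commutations that must be restored move unmatched-send blocks'' is false. Take four processes $p,q,r,s$ and let $M$ consist of three matched messages $m_1\colon p\to q$, $m_2\colon r\to q$, $m_3\colon r\to s$, with $q$'s local order $?m_1\,?m_2$ and $r$'s local order $!m_2\,!m_3$. Writing $A,B,C$ for the three synchronous blocks, $A$ and $B$ share $q$, $B$ and $C$ share $r$, and $A,C$ share nothing, so the only RSC linearisation of $M$ is $ABC$. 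Set $\aMSCset=\{M\}$; then $\prefixclosureof{\languageofnfa{\A_{\aMSCset}}}=\{\varepsilon,A,AB,ABC\}$, which is already commutation-closed. Now drop $?m_2$ to obtain the MSC prefix $M'$: its blocks are $A$ (processes $p,q$), $[!m_2]$ (process $r$), $C$ (processes $r,s$), and $A$ now shares no process with either of the others, so $[!m_2]\,C\,A$ is a legitimate RSC linearisation of $M'$. Your pipeline, however, can only produce $A\,[!m_2]\,C$ from $ABC$ by un-delivery, and shuffling the singleton $[!m_2]$ yields at best $[!m_2]\,A\,C$; reaching $[!m_2]\,C\,A$ requires commuting the two \emph{matched} blocks $A$ and $C$, which were previously ordered only transitively through $B$. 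In general, un-delivering a block removes the receiver from its process set and can disconnect the partial order between \emph{other} matched blocks; the missing commutations are therefore not confined to singletons, and closing a regular language under the full block-independence is not a regularity-preserving operation in general, so this step as written fails.

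For comparison, the paper's proof takes a rather different and more direct route: it builds $\A'$ by simulating a run of $\A$ while maintaining a set $S\subseteq\Procs$ of ``stopped'' processes, replacing any action of a stopped process by an $\varepsilon$-transition. This bypasses the block alphabet, the transducer, and any explicit commutation-closure step. It is worth noting, though, that the same reordering phenomenon bites there too: in the example above, $[!m_2]\,C\,A$ is not obtainable from the unique word $ABC$ by deleting a per-process suffix of actions either, so the paper's formal construction as stated is not obviously complete on this point.
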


\begin{proof}
    Let $\A=(Q,q_0,F,\delta)$ be such that $\languageofnfa{\A}=\{e\in\executionsofmodel{\rscmodel}\mid \mscof{e}\in\aMSCset\}$.
    We construct an NFA $\A'$ such that $\languageofnfa{\A'}=\{e\in\executionsofmodel{\rscmodel}\mid\prefixclosureof{\aMSCset}\}$.
    Intuitively, $\A'$ simulates a run of $\A$ in which for each process $p$ it can stop before reaching the final state. Once it stopped
    on $p$, all further transitions containing an action of $p$ in $\A$ are replaced with $\epsilon$-transitions. Formally, we define $\A'=(Q',q_0',F',\delta')$ as follows: $Q'=F'=Q\times 2^\Procs\times(\Act\cup\{\varepsilon\})$, $q_0'=(q_0,\emptyset,\varepsilon)$, and
    $$
    \begin{array}{rcl}
        \delta' & \eqdef & \Big\{
            \big((q,S,\varepsilon),\send{p}{r}{m},(q',S,\send{p}{r}{m})\big) \mid (q,\send{p}{r}{m},q')\in \delta, p\not\in S
        \Big\} \\[.7em]
        & \cup &
        \Big\{
            \big((q,S,\varepsilon),\send{p}{r}{m},(q',S,\varepsilon)\big) \mid (q,\send{p}{r}{m},q')\in \delta, p\not\in S 
        \Big\} \\[.7em]
        & \cup &
        \Big\{
            \big((q,S,\send{p}{r}{m}),\recv{p}{r}{m},(q',S,\varepsilon)\big) \mid (q,\recv{p}{r}{m},q')\in \delta, r\not\in S 
        \Big\} \\[.7em]
        & \cup &
        \Big\{
            \big((q,S,\varepsilon),\varepsilon,(q',S\cup\{p\},\varepsilon)\big) \mid (q,a,q')\in \delta, a\in \Act_p 
        \Big\} \qedhere
    \end{array}
    $$            
\end{proof}

We can now prove that condition~3 of Theorem~\ref{thm:main-theorem-realisability} is decidable provided that the considered global type satisfies condition~1 of Theorem~\ref{thm:main-theorem-realisability}.

\begin{lem}
    \label{lem:decidability-of-condition-3}
    Assume $\acommunicationmodel$ is RegSC.
    Given a global type G such that $\msclanguageof{\projectionof{\gt}}{\acommunicationmodel}\subseteq\prefixclosureof{\mscsetofmodel{\synchmodel}}$, it is decidable in EXPSPACE complexity whether $$\mscsofcfsms{\projectionof{\acceptcompletion{\gt}}}{\acommunicationmodel}\subseteq\prefixclosureof{\mscsofcfsms{\projectionof{\gt}}{\acommunicationmodel}}.$$
\end{lem}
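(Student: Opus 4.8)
The plan is to reduce the inclusion $\mscsofcfsms{\projectionof{\acceptcompletion{\gt}}}{\acommunicationmodel}\subseteq\prefixclosureof{\mscsofcfsms{\projectionof{\gt}}{\acommunicationmodel}}$ to an emptiness check on a product of NFAs that are all of at most exponential size, so that the question can be decided in NLOGSPACE in the size of these automata, hence in EXPSPACE overall. The first observation is that the hypothesis $\msclanguageof{\projectionof{\gt}}{\acommunicationmodel}\subseteq\prefixclosureof{\mscsetofmodel{\synchmodel}}$ puts us in the situation of Lemma~\ref{lem:regularity-of-mscsofcfsms}: more precisely, since $\msclanguageof{\projectionof{\gt}}{\acommunicationmodel}$ consists of prefixes of synchronous MSCs and such prefixes are RSC, we have $\msclanguageof{\projectionof{\gt}}{\acommunicationmodel}\subseteq\mscsetofmodel{\rscmodel}$, and the same holds for $\acceptcompletion{\gt}$ since making states accepting does not enlarge the set of executions. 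Applying Lemma~\ref{lem:regularity-of-mscsofcfsms} to both $\cfsms=\projectionof{\gt}$ and $\cfsms=\projectionof{\acceptcompletion{\gt}}$, we obtain NFAs $\A_1$ and $\A_2$ with $\languageofnfa{\A_1}=\{e\in\executionsofmodel{\rscmodel}\mid \mscof{e}\in\mscsofcfsms{\projectionof{\gt}}{\acommunicationmodel}\}$ and $\languageofnfa{\A_2}=\{e\in\executionsofmodel{\rscmodel}\mid \mscof{e}\in\mscsofcfsms{\projectionof{\acceptcompletion{\gt}}}{\acommunicationmodel}\}$. Inspecting the constructions in the proofs of Lemmas~\ref{lem:regularity-of-mscsofcfsms}, \ref{lem:regularity-of-prefixclosure} and Proposition~\ref{prop:regularity-of-com-models}, these automata have size polynomial in $\shuffleof{\projectionof{\gt}}$ and in $\A_{\acommunicationmodel}$; since $\shuffleof{\projectionof{\gt}}$ is the shuffle product of the projections, its size is exponential in $\gt$, and $\A_{\acommunicationmodel}$ has size polynomial in $\#\Procs$ and $\#\Msg$ by Proposition~\ref{prop:regularity-of-com-models}. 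So $\A_1$ and $\A_2$ are of at most exponential size in $\gt$.

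Next I would apply Lemma~\ref{lem:regularity-of-prefixclosure} to $\aMSCset=\mscsofcfsms{\projectionof{\gt}}{\acommunicationmodel}$ to get an NFA $\A_1'$, again of at most exponential size, with $\languageofnfa{\A_1'}=\{e\in\executionsofmodel{\rscmodel}\mid \mscof{e}\in\prefixclosureof{\mscsofcfsms{\projectionof{\gt}}{\acommunicationmodel}}\}$. The key remaining point is to translate the MSC inclusion into a language inclusion on these RSC-execution automata. Since every RSC MSC has at least one RSC linearisation, and since by the remark after Definition~\ref{def:executions-of-cfsms} membership of an execution in $\executionsofcfsms{\cfsms}{\acommunicationmodel}$ depends only on its MSC, we have that $\mscsofcfsms{\projectionof{\acceptcompletion{\gt}}}{\acommunicationmodel}\subseteq\prefixclosureof{\mscsofcfsms{\projectionof{\gt}}{\acommunicationmodel}}$ holds if and only if $\languageofnfa{\A_2}\subseteq\languageofnfa{\A_1'}$: the left-to-right direction uses that an RSC execution of $\A_2$ gives an MSC in the left-hand set, hence in $\prefixclosureof{\mscsofcfsms{\projectionof{\gt}}{\acommunicationmodel}}$, hence (picking back any RSC linearisation, or the same one) the execution is in $\languageofnfa{\A_1'}$; the converse is symmetric, using that both sides are determined by the MSC. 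This language inclusion $\languageofnfa{\A_2}\subseteq\languageofnfa{\A_1'}$ is equivalent to $\languageofnfa{\A_2\otimes \overline{\A_1'}}=\emptyset$, where $\overline{\A_1'}$ is a complement of $\A_1'$; determinising and complementing $\A_1'$ costs another exponential, so $\A_2\otimes\overline{\A_1'}$ is of size at most doubly exponential in $\gt$, and emptiness of an NFA is in NLOGSPACE in its size, yielding an EXPSPACE procedure once we construct the automaton lazily.

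The main obstacle I anticipate is the careful bookkeeping of the equivalence between the MSC-level inclusion and the execution-level (word-level) inclusion: one must be sure that restricting attention to RSC executions loses no MSC, which is exactly the content of the hypothesis (every MSC produced by either projection is RSC) combined with the fact that an RSC MSC always admits an RSC linearisation, and that an RSC linearisation of a prefix of an RSC MSC is again RSC. A secondary subtlety is keeping the complexity bound at EXPSPACE rather than $2$-EXPSPACE: the complementation of $\A_1'$ introduces a second exponential, but since the emptiness test on the doubly exponential automaton needs only logarithmic space (equivalently, polynomial space in the exponent, i.e.\ exponential space in $\gt$), and the automaton can be generated on the fly, the overall procedure still runs in EXPSPACE. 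I would state this explicitly and then conclude.
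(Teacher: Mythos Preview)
Your proposal is correct and follows essentially the same approach as the paper: reduce the MSC inclusion to language inclusion between the RegSC automata built via Lemmas~\ref{lem:regularity-of-mscsofcfsms} and~\ref{lem:regularity-of-prefixclosure}, and then use that NFA language inclusion is in PSPACE (which you unfold as complement-plus-emptiness) on exponential-size inputs to get EXPSPACE. One small correction: your justification that the RSC hypothesis transfers to $\projectionof{\acceptcompletion{\gt}}$ because ``making states accepting does not enlarge the set of executions'' is not right---it does enlarge the set of \emph{accepting} executions, which is exactly what $\msclanguageof{\cdot}{\acommunicationmodel}$ measures; the paper also passes over this point quickly, and the proper argument is that every partial MSC of the system is a prefix of an RSC MSC (or, equivalently, that condition~1 of Theorem~\ref{thm:main-theorem-realisability} is in fact derived for $\acceptcompletion{\gt}$ in the forward direction of that theorem).
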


\begin{proof}
    Since $\msclanguageof{\projectionof{\gt}}{\acommunicationmodel}\subseteq\prefixclosureof{\mscsetofmodel{\synchmodel}}$ (condition~1 of Theorem~\ref{thm:main-theorem-realisability}), we have that
    $\mscsofcfsms{\projectionof{\acceptcompletion{\gt}}}{\acommunicationmodel}\subseteq\prefixclosureof{\mscsetofmodel{\synchmodel}}$.
    Thanks to Lemma~\ref{lem:regularity-of-mscsofcfsms}, 
    $\mscsofcfsms{\projectionof{\acceptcompletion{\gt}}}{\acommunicationmodel}$ and 
    $\mscsofcfsms{\projectionof{\gt}}{\acommunicationmodel}$ are RegSC.
    Thanks to Lemma~\ref{lem:regularity-of-prefixclosure}, $\prefixclosureof{\mscsofcfsms{\projectionof{\gt}}{\acommunicationmodel}}$ is RegSC.
    Let $\A_1,\A_2$ be the NFAs such that
    $$
    \begin{array}{rcl}
    \languageofnfa{\A_1} & = & \{e\in\executionsofmodel{\rscmodel}\mid \mscof{e} \in \mscsofcfsms{\projectionof{\acceptcompletion{\gt}}}{\acommunicationmodel}\} \\
    \languageofnfa{\A_2} & = & \{e\in\executionsofmodel{\rscmodel}\mid \mscof{e} \in \prefixclosureof{\mscsofcfsms{\projectionof{\gt}}{\acommunicationmodel}}\}.
    \end{array}
    $$
    Then condition~3 of Theorem~\ref{thm:main-theorem-realisability} holds if and only if
    $\languageofnfa{\A_1}\subseteq\languageofnfa{\A_2}$.
    Since $\A_1$ and $\A_2$ are of  exponential size in the number of processes, hence in the size $\gt$, and language inclusion for NFAs is decidable in PSPACE, condition~3 of Theorem~\ref{thm:main-theorem-realisability} is decidable in EXPSPACE.
\end{proof}



From these results we get an EXPSPACE decision procedure for deadlock-free realisability in asynchronous communication models.

\begin{thm}\label{thm:decidability-of-implementability-in-p2p}
   
   Let $\acommunicationmodel=\ppmodel$ (resp. $\bagmodel$, $\causalmodel$). Given a complementable global type $\gt$ and a complement $\comp{\gt}$,     it is decidable in EXPSPACE complexity whether $\gt$ is deadlock-free realisable in $\acommunicationmodel$.

\end{thm}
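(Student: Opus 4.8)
The plan is to reduce this decidability problem to the combination of results already established, following the road-map laid out right before the statement. First I would invoke Theorem~\ref{thm:main-theorem-realisability}: since $\acommunicationmodel\in\{\ppmodel,\bagmodel,\causalmodel\}$ is causally closed (Lemma~\ref{lem:pp-is-causally-closed}) and contains $\synchmodel$, $\gt$ is deadlock-free realisable in $\acommunicationmodel$ if and only if the four conditions of that theorem hold. So it suffices to show that each of the four conditions is decidable in EXPSPACE (given $\gt$ and its complement $\comp{\gt}$), and that their conjunction can be checked within that bound. Then I would treat the conditions one at a time, noting that Proposition~\ref{prop:regularity-of-com-models} guarantees that $\acommunicationmodel$ is RegSC, which is the hypothesis needed for the regularity lemmas.

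Next I would handle the conditions individually. Condition~1, $\msclanguageof{\projectionof{\gt}}{\acommunicationmodel}\subseteq\prefixclosureof{\mscsetofmodel{\synchmodel}}$, states exactly that $\projectionof{\gt}$ is RSC; by the cited result of Germerie~\cite{germerie-phd} this is decidable in PSPACE, hence in EXPSPACE. Condition~2, orphan-freedom of $\projectionof{\gt}$ in $\acommunicationmodel$, is a reachability-type property: once condition~1 holds, $\msclanguageof{\projectionof{\gt}}{\acommunicationmodel}$ is a regular set of RSC MSCs (Lemma~\ref{lem:regularity-of-mscsofcfsms}), so one can build an NFA over RSC executions recognising the orphan-bearing accepting executions of $\acceptcompletion{\projectionof{\gt}}$ — say by intersecting with an automaton checking there is an unmatched send in an accepting configuration — and test emptiness; since the automaton is of exponential size and NFA emptiness is in NLOGSPACE, this is EXPSPACE. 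Condition~3 is precisely the statement of Lemma~\ref{lem:decidability-of-condition-3}, which gives EXPSPACE decidability assuming condition~1 holds. Condition~4, deadlock-free realisability in $\synchmodel$, is decidable in PSPACE by Theorem~\ref{thm:decidability-of-implementability-in-synch}, using the explicitly given complement $\comp{\gt}$.

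Finally I would assemble the decision procedure: check condition~1 (PSPACE); if it holds, check conditions~2 and 3 (each EXPSPACE, using the RegSC-regularity lemmas that require condition~1); check condition~4 (PSPACE, using $\comp{\gt}$); accept iff all four hold. Since a constant number of EXPSPACE subroutines run in sequence, the whole procedure is in EXPSPACE. I expect the main subtlety — not a deep obstacle, but the place where care is needed — to be the dependency structure among the conditions: conditions~2 and 3 are only known to be decidable in EXPSPACE \emph{once} we know condition~1 holds (because that is what makes the relevant MSC sets RSC and hence, via Lemmas~\ref{lem:regularity-of-mscsofcfsms} and \ref{lem:regularity-of-prefixclosure}, RegSC and amenable to automata-theoretic checks), so the procedure must be staged rather than run in parallel, and one must make sure that when a condition fails the procedure simply rejects rather than attempting an ill-defined later test. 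The complexity accounting — exponential blow-up of $\shuffleof{\projectionof{\gt}}$ in the number of processes, composed with the PSPACE NFA-inclusion/emptiness tests — also needs to be spelled out to justify the EXPSPACE bound, but this is routine given the lemmas.
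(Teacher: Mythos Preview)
Your proposal is correct and follows essentially the same roadmap the paper lays out: invoke Theorem~\ref{thm:main-theorem-realisability} (using Lemma~\ref{lem:pp-is-causally-closed} and Proposition~\ref{prop:regularity-of-com-models}), then discharge the four conditions via Germerie's results, Lemma~\ref{lem:decidability-of-condition-3}, and Theorem~\ref{thm:decidability-of-implementability-in-synch}. The only minor deviation is your treatment of condition~2 (orphan-freedom): the paper dispatches it in PSPACE by citing Germerie's result that reachability of a regular set of configurations is PSPACE for RSC systems, whereas you give a self-contained automata-theoretic argument via Lemma~\ref{lem:regularity-of-mscsofcfsms} that yields only EXPSPACE; since condition~3 already forces the overall bound to EXPSPACE, this makes no difference to the theorem.
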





	\section{Concluding Remarks} \label{sec:concl}	


In this paper, we investigated the realisability  of MPSTs across several synchronous and asynchronous communication models, including causally ordered ($\causalmodel$), peer-to-peer ordered ($\ppmodel$), or unordered ($\bagmodel$). 
We reduced the realisability in a complex asynchronous communication model to the simpler problem in the synchronous communication model. For a regular, causally-closed communication model $\acommunicationmodel$, as well as for $\acommunicationmodel=\synchmodel$, we showed that realisability in $\acommunicationmodel$ is decidable in PSPACE for a complementable global type $\gt$ provided with an explicit complement $\comp{\gt}$.
As a consequence, we established that the realisability problem for sender-drive global types is decidable in PSPACE, because they can be complemented with a linear increase in the size. 


Traditionally, the realisability problem is considered only for the peer-to-peer communication model. This paper represents a first step towards a parametric framework for realisability checking, with potential applicability across diverse communication models. Throughout the paper, we have made a deliberate effort to explicitly state the hypotheses (related to the communication model) under which our theorems hold. 
A particularly critical assumption is that the communication models in question must be causally closed (see Definition \ref{def:causally-closed-communication-model}). This is still a bit restrictive, as many communication models, including mailbox, bus, or those based on bounded FIFO channels lack this property.
It would be interesting to study realisability for these communication models. 

On the complexity side, we showed a PSPACE upper bound on the realisability problem in $\synchmodel$, and an EXPSPACE upper bound in $\ppmodel$, $\causalmodel$, and $\bagmodel$. We believe that these complexity results can be improved, with more careful constructions (as it has been recently done in the case of sender-driven global types for $\ppmodel$).

Apart from exploring tighter complexity bounds, as for future work we plan to study the class of  complementable global types, possibly setting someun/decidability results. Moreover we plan to extend existing tool ReSCU \cite{DBLP:conf/coordination/DesgeorgesG23} to apply the theoretical insights of this paper.



\subsection{Related works}
We conclude with a brief excursus on related work. Naturally, the most closely related lines of research are those concerning MPST. Comprehensive surveys on the topic are available in \cite{Coppo2015, DBLP:conf/icdcit/YoshidaG20}.
It is noteworthy that in standard MPST literature, realisability—often referred to as implementability—is typically addressed through syntactic restrictions on global types. In most cases, a global type is  implementable if a projection function exists. These syntactic constraints, along with the existence of the projection, imply that the global type is sender-driven, thus placing it within one of the complementable classes discussed in this paper.
In contrast, a recent paper by Scalas and Yoshida~\cite{DBLP:journals/pacmpl/ScalasY19} offers a novel perspective. Their approach emphasizes session fidelity and deliberately abandons syntactic restrictions on session types. Instead, they adopt semantic criteria—validated through model checking—to establish implementability, marking a significant shift from the conventional syntactic paradigm or our approach.
Although, the comparison is difficult as the underlining models are substantially different, it is worth mentioning~\cite{10.1145/3678232.3678245} that discusses projectability in the context of synchronous MPST. 

The connection between communicating automata and behavioural types were 
first explored by Villard~\cite{villard-phd} for the binary case and Deniélou and Yoshida~\cite{DBLP:conf/esop/DenielouY12} for MPSTs. The connection between MPST and HMSCs was initiated by Stutz~\emph{et al.}~\cite{DBLP:conf/ecoop/Stutz23,Stutz24-phd}.
In this work, we push these connections much further, in the sense that (1) we view MPST as the subclass of HMSCs restricted to synchronous MSCs, and (2) we provide for the first time a simple, natural semantics of global types based on languages of MSCs. This MSC-based semantics of MPST has the appealing property that it is agnostic of the communication model. This semantics plays a pivotal role in our more traditional execution-based semantics of MPST
(Def~\ref{def:global-type-language}), as it puts the communication model parameter at the right place, namely the linearisations of MSCs that should be considered.

Our approach also suggest a rather simple and natural way to define a form of \emph{semantic subtyping}~\cite{DBLP:conf/icalp/CastagnaF05}
for session types (see Definition~\ref{def:subtyping}).
The boundaries of decidability for subtyping of MPST have been recently discussed for instance in \cite{DBLP:journals/lmcs/BravettiCLYZ21,DBLP:conf/esop/LiSW24} but
because of the different nature of the definition of subtyping, such work are loosely related ours.

In~\cite{DBLP:conf/cav/LiSWZ23,DBLP:conf/ecoop/Stutz23}, the authors revisited
the theory of MPST projections through the lenses of implementability (a slight variant of realisability),
and step by step managed to improve the complexity of the decision procedure for implementability from EXPSPACE to PSPACE and more recently NP~\cite{DBLP:journals/pacmpl/LiSWZ25}. Although implementability and deadlock-free realisability are not the same problem, it would be interesting to explore how some of the ideas behind the optimisations they introduced can be adapted to either global types that are not sender-driven or communication models beyond $\ppmodel$. 
For instance, two criteria, called \emph{send validity} and {receive validity}, where introduced in~\cite{DBLP:conf/ecoop/Stutz23}; the second criteria, receive validity,
seems strongly justified by the choice of $\ppmodel$ as the communication model, and is unsound for
other communication models, but it might be interesting to identify variants of these criteria for other communication models beyond $\ppmodel$.
 
Deadlock-free realisability was introduced by Alur et al.~\cite{DBLP:journals/tcs/AlurEY05} under the name of safe realisability. We picked a slightly different name
as the two notions only coincide for causally-closed global types. Indeed, Alur et al. defined deadlock-freedom as an MSC property, whereas we defined it at the level of executions (see Definition~\ref{def:deadlock-free}) and later showed that it can be lifted to MSCs for causally-closed communication models (see Proposition~\ref{prop:deadlock-free-as-a-property-on-mscs-for-p2p-and-synch}). It is worth noting that, for non-causally-closed communication models (for instance $\mbmodel$), there are some systems $\cfsms$ that are deadlocked according to Definition~\ref{def:deadlock-free} but for which 
$
\msclanguageof{\acceptcompletion{\cfsms}}{\mbmodel}\subseteq\prefixclosureof{\msclanguageof{\cfsms}{\synchmodel}}$, i.e., a partial MSC of the system can be extended to an accepting MSC, possibly reordering some unmatched sends.

Alur et al.~\cite{DBLP:journals/tcs/AlurEY05} showed that weak realisability (i.e., condition~1 of Def~\ref{def:realisability}) is undecidable for HMSCs, and 
Lohrey~\cite{DBLP:journals/tcs/Lohrey03} extended their result to safe realisability. The HMSCs they use in their proof are global types (they only describe synchronous MSC languages), which shows that deadlock-free realisability is undecidable for global types in $\ppmodel$. Even more, Lohrey's proof suggests that deadlock-free realisability might be undecidable even in $\synchmodel$ for 5 processes. It is remarkable that, despite these results are more than 20 years old, some corners of shadow remain on the exact number of processes needed for undecidability, or on the extent in which they depend on the asynchrony of the communication model. 

Mazurkiewicz traces~\cite{DBLP:books/ws/95/DR1995} are words over partially commutative alphabets. Synchronous MSCs (but not asynchronous ones) can be seen as Mazurkiewicz traces over the alphabet $\Arrows$ of arrows. Regular trace languages are the ones recognized by commutation-closed finite automata. Zielonka's asynchronous automata are a model of a distributed implementation of a regular trace language. It differs from realizability of MPSTs and HMSCs in several way: first, processes in this model synchronise by rendez-vous (that may involve more than two processes), second processes may exchange additionnal informations during rendez-vous (not just the message label specified by the choreography), and third, the choreography is commutation-closed. As a consequence, all regular trace languages are implementable, but the projection operation is highly non-trivial in this setting. Although quite
far from MPSTs at first sight, some old works on Mazurkiewicz traces could possibly shade some light on a few questions that arise from our work, and which, to the best of our knowledge, remain open, like the decidability of the complementability of a global type, or concise complementation procedures for non commutation-closed global types beyond sender-driven choices.

\bibliographystyle{alphaurl}

\bibliography{refs}

%



\end{document}
